\newtheorem{theorem}{Theorem}[section]
\newtheorem{lemma}[theorem]{Lemma}
\newtheorem{corollary}[theorem]{Corollary}
\newtheorem{proposition}[theorem]{Proposition}
\newtheorem{definition}[theorem]{Definition}
\newtheorem{example}[theorem]{Example}
\newtheorem{remark}[theorem]{Remark}
\newtheorem{condition}[theorem]{Hypothesis}
\numberwithin{equation}{section}
\begin{document}
\title[Inverse Scattering Transform]{The Hirota $\tau $-function and
well-posedness of the KdV equation with an arbitrary step like initial
profile decaying on the right half line}
\author{Alexei Rybkin}
\address{University of Alaska Fairbanks}
\date{November, 2010}
\address{Department of Mathematics and Statistics \\
University of Alaska Fairbanks\\
PO Box 756660\\
Fairbanks, AK 99775}
\email{arybkin@alaska.edu}
\thanks{Based on research supported in part by the NSF under grant DMS
0707476.}
\subjclass{37K15, 37K10, 37K40}
\keywords{Korteweg-de Vries equation, inverse scattering transform, Schr\"{o}%
dinger operator, Titchmarsh-Weyl $m-$function.}

\begin{abstract}
We are concerned with the Cauchy problem for the KdV equation on the whole
line with an initial profile $V_{0}$ which is decaying sufficiently fast at $%
+\infty $ and arbitrarily enough (i.e., no decay or pattern of behavior) at $%
-\infty $. We show that this system is completely integrable in a very
strong sense. Namely, the solution $V\left( x,t\right) $ admits the Hirota $%
\tau $-function representation 
\begin{equation}
V\left( x,t\right) =-2\partial _{x}^{2}\log \det \left( I+\mathbb{M}%
_{x,t}\right)   \label{tau}
\end{equation}%
where $\mathbb{M}_{x,t}$ is a Hankel integral operator constucted from
certain scattering and spectral data suitably defined in terms of the
Titchmarsh-Weyl $m$-functions associated with the two half-line Schr\"{o}%
dinger operators corresponding to $V_{0}$. We show that $V\left( x,t\right) $
is real meromorphic with respect to $x$ for any $t>0$. We also show that
under a very mild additional condition on $V_{0}$ representation (\ref{tau})
implies a strong well-posedness of the KdV equation with such $V_{0}$'s.
Among others, our approach yields some relevant results due to Cohen,
Kappeler, Khruslov, Kotlyarov, Venakides, Zhang and others.
\end{abstract}

\maketitle
\tableofcontents

%\author{Alexei Rybkin}

\section{Introduction}

Soliton theory, a major achievement of 20th century mathematics, originated
in 1965 from the fundamental Gardner-Greene-Kruskal-Miura discovery of what
we now call the inverse scattering transform (IST) for the Korteweg-de Vries
(KdV) equation. Conceptually, the IST is similar to the Fourier transform.
In the context of the Cauchy problem for the KdV equation on the full line

% either subequation environment and cases or align with no bracket
\begin{equation}
\partial _{t}V-6V\partial _{x}V+\partial _{x}^{3}V=0  \label{KdV1.1}
\end{equation}%
\begin{equation}
V\left( x,0\right) =V_{0}\left( x\right)  \label{KdVID1.2}
\end{equation}%
the IST method consists, as the standard Fourier transform method, of\ the
following three steps:

\begin{enumerate}
\item the direct transform mapping the initial data $V_{0}(x)$ to a new set
of variables $S_{0}$ in which \eqref{KdV1.1} turns into a very simple first
order linear ordinary equation for $S(t)$ with the initial condition $%
S(0)=S_{0}$;

\item solve then this linear ordinary differential equation for $S(t)$;

\item apply the inverse transform to find $V(x,t)$ from $S(t)$.
\end{enumerate}

In its original edition due to Gardner-Greene-Kruskal-Miura, $S_{0}$ was the
set of scattering data associated with the pair of Schr\"{o}dinger operators 
$(H,H_{0})$ where $H_{0}=-\partial _{x}^{2}$ and $H=H_{0}+V_{0}$. Moreover,
this procedure comes with a beautiful formula

\begin{equation}
V\left( x,t\right) =-2\partial _{x}^{2}\log \det \left( I+\mathbb{M}%
_{x,t}\right) ,  \label{eq1.3}
\end{equation}%
where $\mathbb{M}_{x,t}$ is a two parametric family of integral operators
explicitly constructed in terms of $S(t)$.

Similar methods have also been developed for many other evolution nonlinear
partial differential equations (PDE), which are referred to as completely
integrable.\footnote{%
There is no precise meaning of \textquotedblleft complete integrability" but
the question \textquotedblleft What is integrability?" has drawn much
attention (see e.g. the charming survey \cite{Its2003} by Its).}

Strictly speaking, a complete integrability of \eqref{KdV1.1}-%
\eqref{KdVID1.2} was originally established for $V_{0}$'s rapidly decaying
(short range) at $\pm \infty $ (i.e. in the case of the scattering
theoretical situation for $(H,H_{0})$ ). A suitable analog of IST for the
physically important case of periodic $V_{0}$'s was found around 1974 by
Novikov \cite{Novikov74}. It is based upon the Floquet theory for the
periodic (Hill) Schr\"{o}dinger operator and looks very different from the
case of decaying initial data. There is a conjecture however that these two
cases can actually be unified \cite{AC91}. Some deep related results are
given in \cite{Venak88} and \cite{ErcMcKean90}.

Thus, the KdV equation (as well as other completely integrable by the IST
PDEs) are completely integrable essentially only in these two cases. In
fact, the fundamental question about whether any (physically significant)
well-posed problem (initial value, boundary value, etc.) for equation %
\eqref{KdV1.1} can be solved by a suitable IST, has been raised in one form
or another by many (see e.g. \cite{McLeodOlver83}, \cite{AC91}, \cite%
{KrichNovikov99}). A large amount of effort has been put into developing the
IST for \eqref{KdV1.1} on an interval (finite and semi-infinite). There
seems to be no consensus on whether \eqref{KdV1.1} is indeed completely
integrable in this case but some relevant deep results have been obtained
(see, e.g., \cite{Fokas2002} and the literature therein).

The specific concern of this paper is the problem \eqref{KdV1.1}-%
\eqref{KdVID1.2} with $V_{0}$ outside of classes of rapidly decaying or
periodic functions. Many of known related results are devoted to some sort
of \textquotedblleft hybrids" of the two well-developed cases. Namely, a
physically important case of an initial profile $V_{0}$ in \eqref{KdVID1.2}
which is a short-range perturbation of a step function (a bore type initial
profile) appears to have received the most attention (see e.g. \cite%
{Hruslov76}, \cite{Cohen1984}, \cite{Venak86}, \cite{KK94} to name just a
few and the extensive literature cited therein). Another important case
where IST works is \eqref{KdV1.1}-\eqref{KdVID1.2} with $V_{0}$
representable as a short range perturbation of a half-periodic\footnote{%
I.e. a function which is periodic on a half-line and zero on the other.}
potential (see e.g. \cite{KK94}). A rigorous comprehensive treatment of
steplike short-range perturbations of finite-gap solutions was recently
given in \cite{Teschl2}. We also refer to \cite{Teschl2} for a large account
of literature on initial profiles for which IST works.

Certain cases of slowly decaying profiles have also received considerable
attention (see e.g. \cite{Marchenko91}, \cite{KheNov84}, \cite%
{Gesztesy_Duke92}, \cite{Matveev02}), but this situation is far from being
well-understood.

More literature and discussion of the results obtained therein will be
offered in the main body of the paper. We only mention that in all the
situations above the spectrum of the underlying Schr\"{o}dinger operator $%
-\partial _{x}^{2}+V_{0}$ is much more complicated resulting in new
phenomena (infinite train of solitons, solitons on periodic backgrounds,
singular solutions, etc.). Consequently, much more complicated tools and
harder analysis are required.

We mention that the question of well-posedness\footnote{%
I.e. existence, uniqueness and continuous dependence on the initial data.}
of \eqref{KdV1.1}-\eqref{KdVID1.2} is a serious issue (see, e.g. \cite{Tao06}%
). In the literature on IST this issue is frequently avoided.

The main concern of the present paper is to show that \eqref{KdV1.1}-%
\eqref{KdVID1.2} is completely integrable in a very strong sense for $V_{0}$%
's rapidly decaying at $+\infty $ and essentially arbitrary at $-\infty $.
We will also call such $V_{0}$'s steplike initial profiles. Under very mild
conditions on initial data $V_{0}$ (expressed in terms of the spectrum of
the underlying Schrodinger operator $-\partial _{x}^{2}+V_{0}\left( x\right) 
$), we prove that the representation \eqref{eq1.3} can be extended to our
case. We also establish the analytic smoothing effect and well-posedness of %
\eqref{KdV1.1}-\eqref{KdVID1.2} for initial profiles under consideration.
Emphasize that, as opposed to the relevant previous works, we treat initial
profiles which need not have specific asymptotic behavior at $-\infty $. Our
approach is based upon IST techniques and suitable limiting argument. Note
that the limiting procedures which we employ allow us to recycle many
results from the classical IST method and, in fact, bypass analysis of the
relevant Fredholm integral equation or Riemann-Hilbert problem and work
directly with a Hankel integral operator.

The paper is organized as follows. In Section 2 we merely list some of our
notation and conventions. In Section 3, we review some basics of the full
line and half line Schrodinger operators and the full line short range
scattering theory. In Section 4 we put together some well-known facts on the
KdV equation pertinent to the present paper and specifies what we mean by
the solution to the Cauchy problem for the KdV equation. Section 5 is
devoted to defining a right reflection coefficient from the right incident
for arbitrary potentials. In Section 6 we specify in what sense we
understand Fredholm determinants (Definition \ref{Def Fred}) and prove two
important lemmas. In Section 7 we introduce and study a Hankel integral
operator particularly important in the context of the IST. The main results
(Theorems \ref{Structure} and \ref{MainThm}) are given in Section 8 and
Section 9 is devoted to related discussions and corollaries. We also state
some open problems.

%--------------------------------Section 2: Notation and preliminaries---------------------------------

\section{Notation and Preliminaries}

We adhere to standard terminology accepted in Analysis. Namely, $\mathbb{R}%
_{\pm }:=[0,\pm \infty )$, $\mathbb{C}$ is the complex plane, 
\begin{equation*}
\mathbb{C}_{\pm }=\left\{ z\in \mathbb{C}:\pm \func{Im}z>0\right\} .
\end{equation*}%
Unless otherwise stated, we use the subscript $\pm $ to indicate objects
somehow related to $\mathbb{R}_{\pm }$ or $\mathbb{C}_{\pm }$. E.g. $m_{-}$ $%
\left( m_{+}\right) $ denotes the Titchmarsh-Weyl $m$-function (introduced
below) associated with $\mathbb{R}_{-}$ $\left( \mathbb{R}_{+}\right) $ The
bar $\overline{z}$ denotes the complex conjugate of $z$.

We use $\left\Vert \cdot \right\Vert _{X}$ to denote the norm in a Banach
(Hilbert) space $X$. We extensively use Lebesgue spaces $\left( 1\leq
p<\infty ,\ d\mu \text{ is a non-negative measure on a set }S\right) $ 
\begin{align*}
L^{p}\left( S,d\mu \right) & :=\left\{ f:\left\Vert f\right\Vert
_{L^{p}\left( S\right) }:=\left( \int_{S}\left\vert f\left( x\right)
\right\vert ^{p}d\mu \left( x\right) \right) ^{1/p}<\infty \right\} , \\
L^{\infty }\left( S\right) & :=\left\{ f:\left\Vert f\right\Vert _{L^{\infty
}\left( S\right) }:=\limfunc{ess}\sup_{x\in S}\left\vert f\left( x\right)
\right\vert <\infty \right\} , \\
L_{\limfunc{loc}}^{p}\left( S\right) & :=\left\{ \cap L^{p}\left( \Delta
\right) :\Delta \subset S\right\} .
\end{align*}%
and abbreviate them in particular cases as follows ($S$ will typically be $%
\mathbb{R}$ or $\mathbb{R}_{\pm }$):%
\begin{eqnarray*}
L^{p}\left( S,dx\right) =: &&L^{p}\left( S\right) \\
L^{p}\left( \mathbb{R}_{\pm },d\mu \right) &=&:L_{\pm }^{p}\left( d\mu
\right) ,:L^{p}\left( \mathbb{R},d\mu \right) =:L^{p}\left( d\mu \right) \\
L^{p}\left( \mathbb{R}_{\pm }\right) =: &&L_{\pm }^{p},:L^{p}\left( \mathbb{R%
}\right) =:L^{p}
\end{eqnarray*}

Next, $\mathfrak{S}_{2}$ denotes the Hilbert-Schmidt class of linear
operators $A$: 
\begin{equation*}
\mathfrak{S}_2= \left\{A\: : \: \left\Vert A\right\Vert _{\mathfrak{S}_{2}}:=%
\limfunc{tr}\left( A^{\ast }A\right)<\infty \right\}
\end{equation*}%
and $\mathfrak{S}_1$ is the trace class: 
\begin{equation*}
\mathfrak{S}_1= \left\{A\: : \: \left\Vert A\right\Vert _{\mathfrak{S}_{2}}:=%
\limfunc{tr}\left( A^{\ast }A\right) ^{1/2}<\infty \right\}.
\end{equation*}
$\limfunc{Spec}\left( A\right) $ stands for the spectrum of an operator $A$
and $\limfunc{Spec}_{\limfunc{ac}}\left( A\right) $, $\limfunc{Spec}%
_{d}\left( A\right) $ denote the absolutely continuous (a.c.), discrete
components of the (self-adjoint) operator $A$.

Throughout the paper 
\begin{eqnarray*}
\left( \mathcal{F}f\right) \left( \lambda \right) &=&\mathcal{F}f\left(
\lambda \right) =\frac{1}{\sqrt{2\pi }}\int_{\mathbb{R}}e^{i\lambda
x}f\left( x\right) dx \\
\left( \mathcal{F}^{-1}f\right) \left( \lambda \right) &=&\frac{1}{\sqrt{%
2\pi }}\int_{\mathbb{R}}e^{-i\lambda x}f\left( x\right) dx
\end{eqnarray*}%
stand for the Fourier and the inverse Fourier transforms of a tempered
distribution $f$ respectively. If $f\in L^{2}$ and $\limfunc{Supp}f\subseteq 
\mathbb{R}_{\pm }$ then $\left( \mathcal{F}f\right) \left( \lambda \right)
\in H_{\pm }^{2}$. Recall that $H_{\pm }^{p},p>0$, stands for the Hardy
class of analytic on $\mathbb{C}_{\pm }$ functions $f$ such that 
\begin{equation*}
\sup_{\pm y>0}\int_{\mathbb{R}}\left\vert f\left( x+iy\right) \right\vert
^{p}dx<\infty .
\end{equation*}%
Each $H^{p}$- function $f$ admits the estimate (see, e.g. \cite{Garnett})%
\begin{equation}
\left\vert f\left( \lambda \right) \right\vert \lesssim \frac{\left\Vert
f\right\Vert _{H_{+}^{p}}}{\func{Im}\lambda }  \label{Hp}
\end{equation}%
where we have used a convenient convention to write%
\begin{equation*}
x\lesssim y\iff x\leq Cy
\end{equation*}%
with some $C>0$ independent of $x$ and $y$.

Some other miscellaneous notation: $\chi _{S}\left( x\right) $ is the
characteristic function of a set $S$, i.e. 
\begin{equation*}
\chi _{S}\left( x\right) :=\left\{ 
\begin{array}{c}
1,x\in S \\ 
0,x\notin S%
\end{array}%
\right. .
\end{equation*}%
In particular $\chi _{\pm }:=\chi _{_{\mathbb{R}_{\pm }}}$is the Heaviside
function of $\mathbb{R}_{\pm }$.

The following short hand notation will help us keep bulky formulas under
control%
\begin{eqnarray*}
\int_{\mathbb{R}} &=:&\int \\
\left( fg\right) \left( x\right) &:=&f\left( x\right) g\left( x\right)
\end{eqnarray*}%
and

\begin{equation*}
\left\langle x\right\rangle :=\sqrt{1+\left\vert x\right\vert ^{2}}
\end{equation*}%
for an inhomogeneous distance.

%-----------------------------------Section 3: 1D Schrodinger operator, scattering and all that------------------

\section{The 1D Schr\"{o}dinger operator, scattering and all that}

%-----------------------Subsection 3.1: Schrodinger operators on the line

As its title suggests, in this section we review some basics of the full
line and half line Schrodinger operators and the full line short range
scattering theory.

\subsection{Schr\"{o}dinger operators on the line}

As well known, the inverse scattering transform (IST) method for the KdV
equation is based upon the direct and inverse scattering for the
one-dimensional Schr\"{o}dinger operator. In this section we briefly
introduce the $1D$ $\ $Schr\"{o}dinger operator referring the reader to \cite%
{TeschlBOOK} for precise statements. Through the paper 
\begin{equation}
H_{0}=-\partial _{x}^{2}  \label{free}
\end{equation}%
is the free (unperturbed) Schr\"{o}dinger operator on the Hilbert space $%
L^{2}$. Given a real locally integrable function $V$, called a potential,
define on $L^{2}$ the (perturbed) Schr\"{o}dinger operator:%\footnote{%
%We write $H_{V}$ if we need to indicate the dependence of for $H$ on $V$.}: 
\begin{equation}
H=H_{0}+V\left( x\right) =-\partial _{x}^{2}+V\left( x\right)  \label{Schr}
\end{equation}%
and two half-line Schr\"{o}dinger operators with the Dirichlet boundary
condition at $x=0$%
\begin{equation}
H_{\pm }^{D}=-\partial _{x}^{2}+V\left( x\right) \text{ on }L_{\pm }^{2}%
\text{ with }u\left( \pm 0\right) =0.  \label{Diri}
\end{equation}%
We shall assume that each Schr\"{o}dinger operator considered in this paper
is self-adjoint on its natural domain in $L^{2}$. This assumption
automatically puts a certain restriction on $V$'s:

%--------------------------Hypothesis 3.1-------------------------

\begin{condition}
\label{H1}

\begin{enumerate}
\item Reality%
\begin{equation*}
\overline{V}\left( x\right) =V\left( x\right)
\end{equation*}

\item Local integrability%
\begin{equation*}
V\in L_{loc}^{1}
\end{equation*}

\item $V$ is limit point case at $\pm \infty $%
\begin{equation*}
V\in \text{l.p.}\left( \pm \infty \right)
\end{equation*}
\end{enumerate}
\end{condition}

Hypothesis \ref{H1} means that the minimal operator generated by the
differential expression $-\partial _{x}^{2}+V\left( x\right) $ in the space $%
L_{\pm }^{2}$ has deficiency indices $\left( 1,1\right) $ as opposed to the
limit circle case when the deficiency indices are $\left( 2,2\right) $.
While no explicit description (e.g. in terms of potentials) of the limit
point/circle classification is currently available but it is well-known that
the class $V\in $ l.p.$\left( \pm \infty \right) $ is extremely broad. In
particular, all physically meaningful initial profiles (no decay of any kind
is assumed) in the KdV equation are limit point at $\pm \infty $. \ We will
refer to potentials subject to Hypothesis \ref{H1} as arbitrary.

% ------------------------subsection 3.2: Titchmarsh-Weyl m-function---------------------------------

\subsection{The Titchmarsh-Weyl $m$-function}

The material of this subsection is classical and standard (see, e.g. \cite%
{TeschlBOOK}). Assuming Hypothesis \ref{H1}, consider 
\begin{equation*}
-\partial _{x}^{2}u+V\left( x\right) u=zu,x\in \mathbb{R}_{\pm }.
\end{equation*}

If $V\in $l.p.$\left( \pm \infty \right) $ then there exists a unique (up to
a multiplicative constant) solution, called Weyl, such that $\Psi _{\pm
}\left( x,z\right) \in L_{\pm }^{2}$ for each $z\in \mathbb{C}_{+}$.

\begin{definition}
\label{def1} The\ function%
\begin{equation*}
m_{\pm }\left( z\right) =\frac{\partial _{x}\Psi _{\pm }\left( 0,z\right) }{%
\Psi _{\pm }\left( 0,z\right) },\ \ z\in \mathbb{C}_{+}
\end{equation*}%
is called (Dirichlet, principal) Titchmarsh-Weyl $m-$function.
\end{definition}

Alternatively, the $m$-function can be defined 
\begin{equation*}
m_{\pm }\left( z\right) =\lim_{0<x<y\rightarrow 0}\partial _{xy}^{2}G_{\pm
}\left( x,y;z\right)
\end{equation*}%
with $G_{\pm }\left( x,y;z\right) $ the Green's function of the
corresponding half line Schr\"{o}dinger operator. Properties of the $m$%
-function include

\begin{enumerate}
\item The Herglotz property: $m:\mathbb{C}_{+}\rightarrow \mathbb{C}_{+}$
and analytic

\item The Herglotz representation theorem: there is a non-negative measure $%
\mu $ subject to $\int \left\langle \lambda \right\rangle ^{-2}d\mu \left(
\lambda \right) $ such that%
\begin{equation}
m\left( z\right) =\func{Re}m\left( i\right) +\int_{\mathbb{R}}\dfrac{t+z}{t-z%
}\frac{d\mu }{1+t^{2}}.  \label{Herglotz3.3}
\end{equation}%
The measure $\mu $ is the spectral measure of $H_{\pm }^{D}$ introduced by (%
\ref{Diri}) and can be computed by the Herglotz inversion formula\footnote{%
Through the paper we use the convention%
\begin{equation*}
\func{Im}f\left( t+i0\right) dt:=w-\lim_{\varepsilon \rightarrow +0}\func{Im}%
f\left( t+i\varepsilon \right) dt.
\end{equation*}%
} 
\begin{equation*}
d\mu =\frac{1}{\pi }\func{Im}m\left( t+i0\right) dt.
\end{equation*}

\item The Borg-Marchenko Uniqueness Theorem:%
\begin{equation*}
m_{1}=m_{2}\quad \Longrightarrow \quad V_{1}=V_{2}.
\end{equation*}
\end{enumerate}

%----------------------subsection 3.3: Scattering theory for the Schrodinger operator on the line-------------------------------

\subsection{Scattering theory for the Schr\"{o}dinger operator on the line}

To fix our notation and terminology we give a brief introduction to $1D$
scattering theory for the Schr\"{o}dinger operator. We refer the reader to
the classical paper \cite{Deift79} (where the notation slightly differs from
ours though). Through this section we deal with a pair $\left(
H,H_{0}\right) $ of Schr\"{o}dinger operators $\left( \ref{Schr}\right) $
and $\left( \ref{free}\right) $ with a Faddeev potential $V$. I.e. 
\begin{equation*}
V\in L^{1}\left( \left\langle x\right\rangle dx\right) =\left\{ f:\int
\left( 1+\left\vert x\right\vert \right) \left\vert f\left( x\right)
\right\vert dx<\infty \right\} .
\end{equation*}%
Under this assumption on $V$ one has a typical scattering theoretical
situation which means that all four wave operators and the scattering
operator for the pair $\left( H,H_{0}\right) $ exist. In particular, the
absolutely continuous (a.c.) part of $H$ is unitary equivalent to $H_{0}$.
For $\limfunc{Spec}\left( H\right) $ we have:%
\begin{equation*}
\func{Spec}\left( H\right) =\func{Spec}_{d}\left( H\right) \cup \func{Spec}%
_{ac}\left( H\right) ,
\end{equation*}%
where the discrete spectrum $\limfunc{Spec}_{d}\left( H\right) =\{-\kappa
_{n}^{2}\}_{n=1}^{N}$ is negative, simple and 
\begin{equation*}
N\leq 1+\left\Vert V\right\Vert _{L^{1}\left( \left\langle x\right\rangle
dx\right) }
\end{equation*}%
and the absolutely continuous (a.c.) spectrum $\limfunc{Spec}_{ac}\left(
H\right) =\mathbb{R}_{+}$ and of multiplicity two (with no embedded
eigenvalues).

Since the a.c. spectrum of $H$ is of uniform multiplicity two, the
scattering matrix $S$ (the scattering operator in the spectral
representation of $H_{0}$) is a two by two unitary matrix

\begin{equation*}
S\left( \lambda \right) =\left( 
\begin{array}{cc}
T\left( \lambda \right) & R\left( \lambda \right) \\ 
L\left( \lambda \right) & T\left( \lambda \right)%
\end{array}%
\right) ,\lambda ^{2}\in \func{Spec}_{\limfunc{ac}}\left( H\right) =\mathbb{R%
}_{+},
\end{equation*}%
where $T$, $L$ and $R$ denote respectively the transmission, reflection
coefficients from the left and right incident. Due to unitarity of $S$ one
has (for a.e. $\lambda \in \mathbb{R}$) 
\begin{equation}
\left\vert T\left( \lambda \right) \right\vert ^{2}+\left\vert R\left(
\lambda \right) \right\vert ^{2}=1,\:\left\vert T\left( \lambda \right)
\right\vert ^{2}+\left\vert L\left( \lambda \right) \right\vert ^{2}=1
\label{2-3}
\end{equation}%
\begin{equation}
\overline{T\left( \lambda \right) }R\left( \lambda \right) +T\left( \lambda
\right) \overline{L\left( \lambda \right) }=0  \label{2'}
\end{equation}%
\begin{equation*}
T\left( -\lambda \right) =\overline{T\left( \lambda \right) },\: R\left(
-\lambda \right) =\overline{R\left( \lambda \right) },\: L\left( -\lambda
\right) =\overline{L\left( \lambda \right) }.
\end{equation*}%
The quantities $T$, $L$ and $R$ are related to the existence of special
solutions $\psi _{\pm }$ (called Jost or Faddeev) to the (stationary) Schr%
\"{o}dinger equation 
\begin{equation*}
-\partial _{x}^{2}u+V\left( x\right) u=\lambda ^{2}u,~\lambda \in \mathbb{R},
\end{equation*}%
asymptotically behaving as

\begin{equation*}
\psi _{+}\left( x,\lambda \right) \sim \left\{ 
\begin{array}{c}
T\left( \lambda \right) e^{i\lambda x},\ \ \ \ \ \ \ \ \ \ \ x\rightarrow
\infty , \\ 
e^{i\lambda x}+L\left( \lambda \right) e^{-i\lambda x},x\rightarrow -\infty ,%
\end{array}%
\right.
\end{equation*}%
\begin{equation}
\psi _{-}\left( x,\lambda \right) \sim \left\{ 
\begin{array}{c}
e^{-i\lambda x}+R\left( \lambda \right) e^{i\lambda x},x\rightarrow \infty ,
\\ 
T\left( \lambda \right) e^{-i\lambda x},\ \ \ \ \ \ \ \ \ \ \ \ x\rightarrow
-\infty .%
\end{array}%
\right. .  \label{ksi-}
\end{equation}%
Scattering solutions $\psi _{\pm }\left( x,\lambda \right) $ can be obtained
by solving the Volterra integral equations\footnote{%
In the literature, $y_{\pm }$ are typically denoted by $m_{1,2}$ and called
Faddeev or Jost functions. In our exposition the letter $m$ is reserved for
the $m$-function.} 
\begin{equation}
y_{\pm }\left( x,\lambda \right) =1\pm \int_{x}^{\pm \infty }\frac{%
e^{2i\lambda \left\vert s-x\right\vert }-1}{2i\lambda }V\left( s\right)
y_{\pm }\left( s,\lambda \right) ds  \label{y+-}
\end{equation}%
for%
\begin{equation*}
y_{\pm }\left( x,\lambda \right) :=\frac{\psi _{\pm }\left( x,\lambda
\right) }{T(\lambda )}e^{\mp i\lambda x}.
\end{equation*}%
For the transition coefficients $T,\ R,\ L$ one has%
\begin{equation}
T\left( \lambda \right) =\left( 1-\frac{1}{2i\lambda }\int V\left( x\right)
y_{+}\left( x,\lambda \right) dx\right) ^{-1}  \label{T}
\end{equation}%
\begin{equation}
R\left( \lambda \right) =\frac{T\left( \lambda \right) }{2i\lambda }\int
e^{-2i\lambda x}V\left( x\right) y_{-}\left( x,\lambda \right) dx  \label{R1}
\end{equation}%
\begin{equation}
L\left( \lambda \right) =\frac{T\left( \lambda \right) }{2i\lambda }\int
e^{2i\lambda x}V\left( x\right) y_{+}\left( x,\lambda \right) dx.  \label{L}
\end{equation}

%--------------------------------remark 3.3---------------------------

\begin{remark}
The assumption $V\in L^{1}\left( \left\langle x\right\rangle dx\right) $ can
be relaxed to $V\in L^{1}$ in most of statements of this section. The number
of negative eigenvalues (bound states) $\left\{ -\varkappa _{n}^{2}\right\}
_{n\geq 1}$ may become infinite accumulating to $0$. The latter implies more
complicated behavior of the scattering matrix at $\lambda =0$.
\end{remark}

The following facts from \cite{Deift79} will be important:

\begin{enumerate}
\item If $V\in L^{1}\left( \left\langle x\right\rangle ^{2}dx\right) $ and
generic, i.e. 
\begin{equation}
T\left( \lambda \right) =\alpha \lambda +o\left( \lambda \right) ,\lambda
\rightarrow 0,\alpha \neq 0,  \label{generic}
\end{equation}%
then the function

\begin{eqnarray}
f\left( \lambda \right) &:=&\sqrt{2\pi }\frac{T\left( \lambda /2\right) }{%
i\lambda }=\sqrt{2\pi }\left( i\lambda -\int V\left( x\right) y_{+}\left(
x,\lambda /2\right) dx\right) ^{-1}  \label{f} \\
&=&O\left( 1/\lambda \right) ,\lambda \rightarrow \infty ,  \notag
\end{eqnarray}%
and is continuous on $\mathbb{R}$. A potential $V\in L^{1}\left(
\left\langle x\right\rangle ^{2}dx\right) $ for which $\left( \ref{generic}%
\right) $ doesn't hold is called exceptional and it can be turned into a
generic one by an arbitrary small deformation.

\item There exists a function $g$ subject to%
\begin{equation}
\left\vert g\left( x\right) \right\vert \leq \left\vert V\left( x\right)
\right\vert +\limfunc{const}W\left( x\right) ,\ \ W\left( x\right) :=\left\{ 
\begin{array}{c}
\int_{x}^{\infty }\left\vert V\right\vert ,x\geq 0 \\ 
\int_{-\infty }^{x}\left\vert V\right\vert ,x<0%
\end{array}%
\right.  \label{cond on g}
\end{equation}%
such that 
\begin{equation}
R\left( \lambda /2\right) =f\left( \lambda \right) \left( \mathcal{F}%
^{-1}g\right) \left( \lambda \right)  \label{R(lambda/2)}
\end{equation}%
where $f$ is defined by $\left( \ref{f}\right) $.

\item If $V\in L^{1}\left( \left\langle x\right\rangle ^{2}dx\right) $ then 
\begin{equation}
\left\Vert \partial _{\lambda }y_{+}\left( x,\cdot \right) \right\Vert
_{L^{\infty }}\lesssim \left\langle x\right\rangle ^{2}.  \label{fact 3}
\end{equation}
\end{enumerate}

We will need the following technical lemma.

%-------------------------lemma 3.4---------------------

\begin{lemma}
\label{Lemma 1}If $V\in L^{2}\left( \left\langle x\right\rangle
^{2}dx\right) $ and generic then $\partial _{\lambda }R\in L^{2}.$
\end{lemma}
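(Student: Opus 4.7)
The plan is to differentiate the factorization (\ref{R(lambda/2)}) and bound each resulting piece in $L^{2}$. Writing $\hat{g}:=\mathcal{F}^{-1}g$ for brevity and applying the product rule,
\begin{equation*}
\partial_{\lambda}\bigl[R(\lambda/2)\bigr]=f'(\lambda)\hat{g}(\lambda)+f(\lambda)\hat{g}'(\lambda),
\end{equation*}
so after the change of variable $\lambda\mapsto 2\mu$ the claim reduces to $f'\hat{g},\,f\hat{g}'\in L^{2}(\mathbb{R})$.

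To control $f$ and $f'$: by (\ref{f}) and the genericity hypothesis (\ref{generic}), $f$ extends continuously to $\mathbb{R}$ with $f(\lambda)=O(1/\lambda)$ at infinity, hence $f\in L^{\infty}$. Differentiating (\ref{f}) exhibits $f'(\lambda)$ as a quotient with numerator $-\sqrt{2\pi}\bigl(i-\tfrac{1}{2}\int V(x)\partial_{\lambda}y_{+}(x,\lambda/2)dx\bigr)$ and denominator $\bigl(i\lambda-\int Vy_{+}dx\bigr)^{2}$. By (\ref{fact 3}) the numerator is uniformly bounded (the integral against $\partial_{\lambda}y_{+}$ is controlled by an $\langle x\rangle^{2}$-weighted integral of $|V|$, finite under the Faddeev and $L^{2}$ decay hypotheses), and by genericity the denominator is bounded below on compacts and of order $\lambda^{2}$ at infinity. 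Thus $f'\in L^{\infty}$ with $f'(\lambda)=O(\lambda^{-2})$.

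To control $\hat{g}$ and $\hat{g}'$: by Plancherel these are equivalent to $g,\,xg\in L^{2}$. From (\ref{cond on g}), $|g(x)|\le |V(x)|+cW(x)$. The dual Hardy inequality applied to $W(x)=\int_{x}^{\infty}|V|$ on $\mathbb{R}_{+}$, and symmetrically on $\mathbb{R}_{-}$, gives
\begin{equation*}
\int_{0}^{\infty}W(x)^{2}dx\;\le\;4\int_{0}^{\infty}x^{2}|V(x)|^{2}dx\;\le\;4\|V\|_{L^{2}(\langle x\rangle^{2}dx)}^{2},
\end{equation*}
so $g\in L^{2}$. The bound $xg\in L^{2}$ is the delicate step: $xV\in L^{2}$ is immediate from the hypothesis, while for $xW\in L^{2}$ one combines the pointwise estimate $|xW(x)|\le\int_{x}^{\infty}s|V(s)|ds$ (from the Faddeev assumption $V\in L^{1}(\langle x\rangle dx)$) with a weighted Hardy-type argument exploiting the full strength of $V\in L^{2}(\langle x\rangle^{2}dx)$.

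Combining the two steps via H\"older ($L^{\infty}\cdot L^{2}\subset L^{2}$) gives $f'\hat{g},\,f\hat{g}'\in L^{2}$, whence $\partial_{\lambda}R\in L^{2}$. The main obstacle is the $L^{2}$ control of $xg$, specifically the weighted bound on $xW$: a naive weighted Hardy inequality on $W$ would require $V\in L^{2}(\langle x\rangle^{4}dx)$, stronger than hypothesized, so one must exploit the finer structure of $g$ emerging from the Volterra equation (\ref{y+-}) rather than relying solely on the pointwise bound (\ref{cond on g}).
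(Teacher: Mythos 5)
Your proposal follows the same route as the paper's proof: differentiate the factorization (\ref{R(lambda/2)}), put $f$ and $\partial _{\lambda }f$ in $L^{\infty }$ via (\ref{f}), (\ref{generic}) and (\ref{fact 3}), and reduce the rest by Plancherel to $g\in L^{2}$ and $xg\in L^{2}$. The first three of these steps are fine. The genuine gap is the one you flag yourself: you never prove $xW\in L^{2}$, ending instead with the assertion that one ``must exploit the finer structure of $g$'' beyond the pointwise bound (\ref{cond on g}). That leaves the decisive estimate unproved, so as written the argument is incomplete.

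The step closes without any finer structure of $g$, but you must invoke the standing assumption $V\in L^{1}\left( \left\langle x\right\rangle ^{2}dx\right) $ under which (\ref{generic}), (\ref{R(lambda/2)}), (\ref{cond on g}) and (\ref{fact 3}) are formulated (it is also what actually bounds the numerator of $f^{\prime }$: Faddeev decay plus $V\in L^{2}\left( \left\langle x\right\rangle ^{2}dx\right) $ alone does not give $\int \left\langle x\right\rangle ^{2}\left\vert V\right\vert <\infty $, as a sum of bumps of height $n^{-3}$ at $x=n$ shows). Indeed, by Cauchy--Schwarz,
\begin{equation*}
x^{2}W_{+}\left( x\right) ^{2}\leq \left( \int_{x}^{\infty }s\left\vert
V\left( s\right) \right\vert ds\right) ^{2}\leq \left( \int_{x}^{\infty
}s^{2}\left\vert V\left( s\right) \right\vert ds\right) \left(
\int_{x}^{\infty }\left\vert V\left( s\right) \right\vert ds\right) \leq
\left\Vert V\right\Vert _{L^{1}\left( \left\langle x\right\rangle
^{2}dx\right) }W_{+}\left( x\right) ,
\end{equation*}
whence $\left\Vert xW_{+}\right\Vert _{L_{+}^{2}}^{2}\leq \left\Vert
V\right\Vert _{L^{1}\left( \left\langle x\right\rangle ^{2}dx\right)
}\left\Vert W_{+}\right\Vert _{L_{+}^{1}}=\left\Vert V\right\Vert
_{L^{1}\left( \left\langle x\right\rangle ^{2}dx\right) }\int_{0}^{\infty
}s\left\vert V\left( s\right) \right\vert ds<\infty $, and similarly on $
\mathbb{R}_{-}$; together with $xV\in L^{2}$ this gives $xg\in L^{2}$. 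Your diagnosis that a naive weighted Hardy inequality would demand $V\in L^{2}\left( \left\langle x\right\rangle ^{4}dx\right) $ is correct, and it is worth noting that the paper's own proof is thin at exactly this spot --- its displayed computation yields only $\left\Vert W_{\pm }\right\Vert _{L_{\pm }^{2}}\leq 2\left\Vert xV\right\Vert _{L_{\pm }^{2}}$, after which $g\in L^{2}\left( \left\langle x\right\rangle ^{2}dx\right) $ is asserted without the weighted bound --- but the remedy is the $L^{1}$ weight already present in the hypotheses, not extra structure of $g$.
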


\begin{proof}
From $\left( \ref{R(lambda/2)}\right) $%
\begin{equation*}
\partial _{\lambda }R\left( \lambda /2\right) =\partial _{\lambda }f\left(
\lambda \right) \left( \mathcal{F}^{-1}g\right) \left( \lambda \right)
+f\left( \lambda \right) \partial _{\lambda }\left( \mathcal{F}^{-1}g\right)
\left( \lambda \right)
\end{equation*}%
and hence 
\begin{eqnarray*}
\left\Vert \partial _{\lambda }R\right\Vert _{L^{2}} &\lesssim &\left\Vert
\partial _{\lambda }f\right\Vert _{L^{\infty }}\left\Vert \mathcal{F}%
^{-1}g\right\Vert _{L^{2}}+\left\Vert f\right\Vert _{L^{\infty }}\left\Vert 
\mathcal{F}^{-1}xg\right\Vert _{L^{2}} \\
&=&\left\Vert \partial _{\lambda }f\right\Vert _{L^{\infty }}\left\Vert
g\right\Vert _{L^{2}}+\left\Vert f\right\Vert _{L^{\infty }}\left\Vert
xg\right\Vert _{L^{2}}.
\end{eqnarray*}%
Due to $\left( \ref{f}\right) $ and $\left( \ref{cond on g}\right) $ $%
\left\Vert f\right\Vert _{L^{\infty }}$ and $\left\Vert g\right\Vert
_{L^{2}} $ are both finite. The condition $V\in L^{2}\left( \left\langle
x\right\rangle ^{2}dx\right) $ implies by $\left( \ref{cond on g}\right) $
that $\left\Vert xg\right\Vert _{L^{2}}\lesssim \left\Vert g\right\Vert
_{L^{2}\left( \left\langle x\right\rangle ^{2}dx\right) }$ is finite.
Indeed, setting $W_{+}\left( x\right) :=\chi _{+}\left( x\right)
\int_{x}^{\infty }\left\vert V\right\vert $ and $W_{-}\left( x\right) :=\chi
_{-}\left( x\right) \int_{-\infty }^{x}\left\vert V\right\vert $ one has 
\begin{eqnarray*}
\left\Vert W_{\pm }\right\Vert _{L_{\pm }^{2}}^{2} &=&\int_{0}^{\infty
}\left( \int_{x}^{\infty }\left\vert V\left( \pm s\right) \right\vert
ds\right) ^{2}dx=2\int_{0}^{\infty }x\left\vert V\left( \pm x\right)
\right\vert \int_{x}^{\infty }\left\vert V\left( \pm s\right) \right\vert
dsdx \\
&\leq &2\left\Vert xV\right\Vert _{L_{\pm }^{2}}\left\Vert W_{\pm
}\right\Vert _{L_{\pm }^{2}}
\end{eqnarray*}%
and hence%
\begin{equation*}
\left\Vert W_{\pm }\right\Vert _{L_{\pm }^{2}}\leq 2\left\Vert xV\right\Vert
_{L_{\pm }^{2}}\leq 2\left\Vert V\right\Vert _{L_{\pm }^{2}\left(
\left\langle x\right\rangle ^{2}dx\right) }.
\end{equation*}%
We therefore have $g\in L^{2}\left( \left\langle x\right\rangle
^{2}dx\right) $ and hence $\left\Vert xg\right\Vert _{L^{2}}$ is finite. It
only remains to show that $\partial _{\lambda }f$ is bounded. By direct
differentiation of $\left( \ref{f}\right) $ one has 
\begin{equation*}
\partial _{\lambda }f\left( \lambda \right) =\frac{-1}{\sqrt{2\pi }}%
f^{2}\left( \lambda \right) \left( i-\int V\left( x\right) \partial
_{\lambda }y_{+}\left( x,\lambda /2\right) dx\right)
\end{equation*}%
and by $\left( \ref{fact 3}\right) $%
\begin{equation*}
\left\vert \partial _{\lambda }f\right\vert \lesssim \left\vert f\right\vert
^{2}\left( 1+C\left\Vert V\right\Vert _{L^{1}\left( \left\langle
x\right\rangle ^{2}dx\right) }\right)
\end{equation*}%
and due to $\left( \ref{f}\right) $ the statement is proven.
\end{proof}

%-----------------------------------------Section 4------------------------

\section{The KdV, IST and all that}

%------------------------------Subsection 4.1: Classical IST---------------------------------

In this section we merely put together some well-known information on the
KdV equation pertinent to the present paper. Definition \ref{def2} below
specifies what we mean by the solution to the Cauchy problem for the KdV
equation.

\subsection{The classical Inverse Scattering Transform (IST)}

To specify our notation and state pivotal equations we briefly outline basic
ideas of the IST methods originated from the seminal 1965 work of
Gardner-Greene-Kruskal-Miura (see, e.g. \cite{AC91} and the very extensive
literature cited therein). In the context of the initial value problem for
the KdV equation on $\mathbb{R}$%
\begin{equation}
\partial _{t}V-6V\partial _{x}V+\partial _{x}^{3}V=0  \label{KdV}
\end{equation}%
\begin{equation}
V\left( x,0\right) =V_{0}\left( x\right)  \label{KdVID}
\end{equation}%
where $V(x,t)$ is subject to\footnote{%
such solutions are referred to as rapidly decaying. For simplicity we call
them short range.} ($l=0,1,2,3$) 
\begin{equation*}
\sup_{t\geq 0}\left\Vert \partial _{x}^{l}V\left( x,t\right) \right\Vert
_{L^{1}\left( \left\langle x\right\rangle dx\right) }<\infty
\end{equation*}%
the classical inverse scattering formalism for $\left( \ref{KdV}\right)
-\left( \ref{KdVID}\right) $ goes as follows. Associate with $\left( \ref%
{KdV}\right) $ a one parametric family of full line Schr\"{o}dinger
operators $H\left( t\right) =-\partial _{x}^{2}+V(x,t)$. Form now the
scattering data\emph{\ }%
\begin{equation}
\mathcal{S}\left( t\right) :=\{R(\lambda ,t),\{-\kappa _{n}^{2}\left(
t\right) ,c_{n}\left( t\right) \}\}  \label{2.17}
\end{equation}%
where $\left\{ c_{n}\left( t\right) \right\} $ are the norming constants
corresponding to bound states\ $\{-\kappa _{n}^{2}\left( t\right) \}$. It is
well-known that the map $V(x,t)\rightarrow \mathcal{S}\left( t\right) $ is
one-to-one.

The fundamental fact of inverse scattering formalism is that the map (time
evolution) $t\rightarrow \mathcal{S}\left( t\right) $ has a very simple form 
\begin{equation}
R(\lambda ;t)=R(\lambda )e^{8ik^{3}t},\kappa _{n}(t)=\kappa _{n},c_{n}\left(
t\right) =c_{n}e^{-4\kappa _{n}^{3}t}.  \label{2.18}
\end{equation}

The problem $\left( \ref{KdV}\right) -\left( \ref{KdVID}\right) $ can now be
solved in three steps. Solve the direct scattering problem $V_{0}\left(
x\right) \rightarrow \mathcal{S}\left( 0\right) .$\ Find next the time
evolution $\mathcal{S}\left( t\right) $ by $\left( \ref{2.17}\right) $ and $%
\left( \ref{2.18}\right) $ and finally solve the inverse scattering problem $%
\mathcal{S}\left( t\right) \rightarrow V(x,t)$. The last step can be done by
any applicable method. For instance, one can solve $\mathcal{S}\left(
t\right) \rightarrow V(x,t)$ as a Riemann-Hilbert problem%
\begin{equation*}
\psi _{+}\left( x,-\lambda ,t\right) +R(\lambda )e^{8ik^{3}t}\psi _{+}\left(
x,\lambda ,t\right) =T\left( \lambda \right) \psi _{-}\left( x,\lambda
,t\right) ,\: \lambda \in \mathbb{R}
\end{equation*}%
for $\psi _{\pm }$. The function 
\begin{equation*}
V(x,t)=\frac{\partial _{x}^{2}\psi _{\pm }\left( x,\lambda ,t\right) }{\psi
_{\pm }\left( x,\lambda ,t\right) }+\lambda ^{2}
\end{equation*}%
then solves $\left( \ref{KdV}\right) $, the procedure being independent of
the choice of $\pm $. Alternatively, one can solve the inverse problem $%
\mathcal{S}\left( t\right) \rightarrow V(x,t)$ by means of the Marchenko
procedure\footnote{%
This procedure is also referred to as the Gelfand-Levitan-Marchenko.} which
essentially boils down to the nice formula 
\begin{equation*}
V\left( x,t\right) =-2\partial _{x}^{2}\log \det \left( I+\mathbb{M}%
_{x,t}\right)
\end{equation*}%
where $\mathbb{M}_{x,t}:L_{+}^{2}\rightarrow L_{+}^{2}$ is a two parametric
family of integral operators%
\begin{equation}
\left( \mathbb{M}_{x,t}f\right) \left( y\right) =\int_{0}^{\infty
}M_{x,t}\left( y+s\right) f\left( s\right) ds,\ \ \ f\in L_{+}^{2},
\label{ClassM'}
\end{equation}%
with the kernel 
\begin{align}
M_{x,t}\left( \cdot \right) &:=M\left( \cdot +2x,t\right) ,  \label{classM}
\\
M\left( y,t\right) &:=\dsum_{n=1}^{N}c_{n}^{2}e^{8\kappa
_{n}^{3}t}e^{-\kappa _{n}y}+\frac{1}{2\pi }\int e^{i\lambda y}R(\lambda
)e^{8i\lambda^{3}t}d\lambda .  \notag
\end{align}

%----------------------------Definition 4.1-----------------------------

\begin{definition}
The operator $\mathbb{M}_{x,t}$ defined by $\left( \ref{ClassM'}\right)
,\left( \ref{classM}\right) $ is called a (time evolved) Marchenko operator%
\footnote{%
Also referred to as Gelfand-Levitan, Gelfand-Levintan-Marchenko or
Faddeev-Marchenko.
\par
{}} associated with the scattering data $\left( \ref{2.17}\right) $ and $%
\left( \ref{2.18}\right) $.
\end{definition}

We have actually considered the left Marchenko operator. The right Marchenko
operator can be introduced in a similar manner but will not admit a proper
generalization to our setting.

Over the last forty years soliton theory has experienced a rapid development
through efforts by the math, science and engineering communities and the
literature on the subject is enormously extensive and diverse. Some
literature relevant to our consideration have already been given in
Introduction and some more will be given below. In addition to this we
mention here that certain IST type schemes are also available for the
so-called finite gap algebro-geometric solutions to KdV. (see e.g. \cite%
{GH03} where extensive updated literature is given.) The IST methods are
quite different in this context and based on analysis on Riemann surfaces.

A comprehensive account of classes of initial data for which the IST is
rigorously developed is given in the recent paper \cite{Teschl2}.

%----------------------Subsection 4.2: WP of the KdV------------------------------

\subsection{Well-posedness (WP) of the KdV equation\label{WP}}

The interest in WP problems arose almost at the same time as the IST boom
started but they are typically approached by means of PDEs techniques \cite%
{Tao06} (norm estimates, etc.) and the IST is not usually employed. The
opposite is quite typical instead: assuming WP one applies the IST method to
find the unique solution to KdV. There is also a considerable gap between
classes of $V_{0}$'s for which WP is established and those $V_{0}$'s for
which the IST is rigorously justified, the former being much wider than the
latter.

Solutions of the KdV can be understood in a number of different senses \cite%
{Tao06} (classical, strong, weak, etc.) resulting in a variety of different
well-posedness results. WP issues are not in the focus of the present paper
and we do not attempt to give a comprehensive survey. We mention only the
recent sharp results on global well-posedness in $H^{-3/4}\left( \mathbb{R}%
\right) $ \cite{Guo09} (which extends \cite{ColKeStaTao03} where it was
proven for\footnote{%
We recall $f\in H^{s},s\in \mathbb{R}$, if $\mathcal{F}f\in L^{2}\left(
\langle \lambda \rangle ^{s}d\lambda \right) $.} $H^{-s}\left( \mathbb{R}%
\right) $ with $s<3/4$) and a similar result, in the periodic context, \cite%
{KapTop06} where well-posedness is proven in $H^{-1}\left( \mathbb{T}\right) 
$. Note that the approach of \cite{KapTop06} utilizes complete integrability
in a crucial way.

We understand WP in a strong way.

%-------------------------Definition 4.2---------------------------

\begin{definition}
\label{def2} Let $\left\{ V_{n}\left( x,t\right) \right\} ,\ x\in \mathbb{R}$
and $t\geq 0$ be a sequence of classical solutions of (\ref{KdV}) with the
compactly supported initial data%
\begin{equation*}
V_{n}\left( x,0\right) =V_{0,n}\left( x\right)
\end{equation*}%
approximating $V_{0}\left( x\right) $ in $L_{\limfunc{loc}}^{2}$. We call $%
V\left( x,t\right) ,\ x\in \mathbb{R}$ and $t\geq 0$, a global natural
solution to (\ref{KdV}), (\ref{KdVID}) if $V$ is a classical solution and%
\begin{equation*}
V\left( x,t\right) =\lim_{n\rightarrow \infty }V_{n}\left( x,t\right)
\end{equation*}%
uniformly on any compact for any $t>0$ independently of the choice of $%
\left\{ V_{0,n}\right\} $.
\end{definition}

Our choice of definition is motivated by the methods we employ and it also
looks quite natural from the computational and physical point of view.

%-------------------------Section 5: Reflection coefficient---------------

\section{The reflection coefficient}

In this section we define a right reflection coefficient $R$ from the right
incident for arbitrary potentials (i.e. subject to Hypothesis \ref{H1}) on $%
\mathbb{R}_{-}$ and $L^{1}$ on $\mathbb{R}_{+}$. It is convenient to
fragment 
\begin{equation}
V\left( x\right) =V_{-}\left( x\right) +V_{+}\left( x\right) \text{ }
\label{Vfragm}
\end{equation}%
into two potentials $V_{\pm }\left( x\right) =V\left( x\right) \chi _{\pm
}\left( x\right) $ supported on $\mathbb{R}_{\pm }$ and consider the
reflection coefficient from $V_{\pm }$ separately and then combine them.
Recall that we have agreed to write $f_{\pm }$ ($f$ could be an operator,
space, scattering quantity, $m$-function, etc.) with $\pm $ if it is
associated with $\mathbb{R}_{\pm }$. 
%(or more specifically with $V_{\pm }$).

%---------------Subsection 5.1: Potentials supported on a half line--------------------

\subsection{Potentials supported on a half line}

Assume first that $V=V_{-}$ is supported on $\mathbb{R}_{-}$ and subject to
Hypothesis \ref{H1}. The Schr\"{o}dinger equation then has a solution $\Psi
\left( x,\lambda \right) $ such that for any real $\lambda $ 
\begin{equation*}
\Psi \left( x,\lambda \right) =\left\{ 
\begin{array}{ccc}
C\left( \lambda \right) \Psi _{-}\left( x,\lambda \right) & , & x<0 \\ 
e^{-i\lambda x}+R_{-}\left( \lambda \right) e^{i\lambda x} & , & x\geq 0%
\end{array}%
\right.
\end{equation*}%
where $\Psi _{-}$ is the Weyl solution and $C$ and $R_{-}$ are some
coefficients\footnote{%
We remind that in our notation $R_{-}$ stands for the right reflection
coefficient off the potential $V_{-}$. In the literature $R_{-}\,$\ also
denotes the left reflection coefficient.}. Note that $\Psi _{-}$ turns into
the Jost solution $\psi _{-}$ $\left( \ref{ksi-}\right) $ if $V$ is from the
Faddeev class. The continuity of $\Psi \left( x,\lambda \right) $ and its
derivative at $x=0$ immediately implies that for a.e. real $\lambda $ 
\begin{equation*}
R_{-}\left( \lambda \right) =\frac{i\lambda -m_{-}\left( \lambda
^{2}+i0\right) }{i\lambda +m_{-}\left( \lambda ^{2}+i0\right) }.
\end{equation*}%
Due to the analyticity of $m_{-}$ and the symmetry property 
\begin{equation*}
m_{-}\left( \overline{z}\right) =\overline{m_{-}\left( z\right) }
\end{equation*}%
the function $R_{-}\left( \lambda \right) $ can be analytically continued
into the upper half plane and%
\begin{equation*}
R_{-}\left( -\overline{\lambda }\right) =\overline{R_{-}\left( \lambda
\right) }
\end{equation*}%
for any $\lambda \in \mathbb{C}_{+}$ except for those purely imaginary $%
\lambda $'s for which $\lambda ^{2}\in \limfunc{Spec}\left( H_{-}\right) $.
For real $\lambda $'s one can easily see that 
\begin{equation*}
\left\vert R_{-}\left( \lambda \right) \right\vert \leq 1.
\end{equation*}%
The next important property of $R$ is related to inverse problems. By the
Borg-Marchenko uniqueness, $m_{-}$ determines $V$ and hence $R$ also
determines $V$. Due to analyticity this means that the knowledge of $R\left(
\lambda \right) $ on any set of real $\lambda $'s of positive Lebesgue
measure determines $V\left( x\right) $ for a.e. $x<0$. Therefore, no
additional information about bound states and their norming constants $%
\{-\kappa _{n}^{2},c_{n}\}$ is required in our case. In fact $\{i\kappa
_{n}\}$ are the (simple) poles of $R$ in the upper half plane with residues $%
\limfunc{Res}\left( R,i\kappa _{n}\right) =ic_{n}^{2}$ where $c_{n}$ are
norming constants \cite{AK01}. For the reader's convenience we summarize
what we have said as

%-------------------Proposition 5.1---------------------------

\begin{proposition}
\label{Prop_R}Let $H_{-}=-\partial _{x}^{2}+V_{-}\left( x\right) $ be the
Schr\"{o}dinger operator on $L^{2}$ with $V_{-}$ supported on $\mathbb{R}%
_{-} $ subject to Hypothesis \ref{H1}. Let $m_{-}$ be the Dirichlet
Titchmarsh-Weyl $m$-function of $-\partial _{x}^{2}+V_{-}\left( x\right) $
corresponding to $\mathbb{R}_{-}$. Then the right reflection coefficient $%
R_{-}\left( \lambda \right) $ is given by%
\begin{equation}
R_{-}\left( \lambda \right) =\frac{i\lambda -m_{-}\left( \lambda ^{2}\right) 
}{i\lambda +m_{-}\left( \lambda ^{2}\right) }=-1+\frac{2i\lambda }{i\lambda
+m_{-}\left( \lambda ^{2}\right) }  \label{R}
\end{equation}%
and it represents an analytic in the upper half plane function except for
those $\lambda $'s on the imaginary line for which $\lambda ^{2}\in \limfunc{%
Spec}\left( H_{-}\right) $. \ Furthermore, it is symmetric with respect to
the imaginary axis, i.e.%
\begin{equation*}
R_{-}\left( -\overline{\lambda }\right) =\overline{R_{-}\left( \lambda
\right) }
\end{equation*}%
and contractive on the real line: 
\begin{eqnarray*}
\left\vert R_{-}\left( \lambda \right) \right\vert &\leq &1\text{ \ \ \ for
a.e.\ \ \ }\lambda \in \mathbb{R} \\
\left\vert R_{-}\left( \lambda \right) \right\vert &<&1\text{ \ \ \ for
a.e.\ \ \ }\lambda \in \func{Spec}_{ac}\left( H_{-}\right) .
\end{eqnarray*}%
The function $R$ may have simple poles $\{i\kappa _{n}\}$ on the positive
part of the imaginary axis. Moreover, the set $\{-\kappa _{n}^{2}\}$
coincides with the negative discrete spectrum of $H_{-}$ and 
\begin{equation*}
\limfunc{Res}\left( R,i\kappa _{n}\right) =ic_{n}^{2},
\end{equation*}%
where $c_{n}$ is the norming constant corresponding to the bound state $%
-\kappa _{n}^{2}$. If $V_{-}$ is short range then $R_{-}$ defined by $\left( %
\ref{R}\right) $ and $\left( \ref{R1}\right) $ agree.
\end{proposition}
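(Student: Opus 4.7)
The proof is essentially a compilation: each clause of Proposition \ref{Prop_R} is a direct consequence of standard properties of the Titchmarsh--Weyl $m$-function combined with an elementary matching of a piecewise solution at $x=0$. I would begin with the formula itself. The trial solution $\Psi(x,\lambda)$ equals $C(\lambda)\Psi_-(x,\lambda)$ on $x<0$ and $e^{-i\lambda x}+R_-(\lambda)e^{i\lambda x}$ on $x\geq 0$; imposing continuity of $\Psi$ and $\partial_x\Psi$ at $x=0$ and taking their ratio eliminates $C(\lambda)$ and yields a single linear equation for $R_-$ in terms of $m_-(\lambda^2)$. Solving it gives \eqref{R}, and the second equality is a routine algebraic rewriting. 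This derivation is initially valid for $\lambda^2\in\mathbb{C}_+$, where Definition \ref{def1} applies directly, and extends to a.e.\ real $\lambda$ via the nontangential boundary value $m_-(\lambda^2+i0)$.

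For analyticity I would use that $m_-$ is Herglotz on $\mathbb{C}_+$ and extends by Schwarz reflection to $\mathbb{C}\setminus\func{Spec}(H_-^D)$; composition with $\lambda\mapsto\lambda^2$ then gives an analytic function of $\lambda$ on $\mathbb{C}_+$ except where $\lambda^2\in\func{Spec}(H_-^D)$, i.e.\ on a subset of the positive imaginary axis. Off that axis the Herglotz inequality gives $\func{Im}(i\lambda+m_-(\lambda^2))>0$, so the denominator in \eqref{R} cannot vanish there; on the positive imaginary axis $m_-(\lambda^2)$ is real and the denominator vanishes precisely at those $\lambda=i\kappa_n$ attached to simple negative eigenvalues of $H_-^D$, by the standard $m$-function characterization of bound states. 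The symmetry $R_-(-\bar\lambda)=\overline{R_-(\lambda)}$ follows from $m_-(\bar z)=\overline{m_-(z)}$ at $z=\lambda^2$ together with $\overline{i\lambda}=-i\bar\lambda$.

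Contractivity on $\mathbb{R}$ is a direct computation: writing $m_-(\lambda^2+i0)=u+iv$ with $v\geq 0$, one has $|i\lambda+m_-|^2-|i\lambda-m_-|^2=4\lambda v$, so $|R_-|\leq 1$ for $\lambda>0$, and the symmetry extends this to all real $\lambda$. Strict contractivity on $\func{Spec}_{ac}(H_-^D)$ corresponds to $v>0$ on the essential support of the absolutely continuous spectral measure, which is the standard $m$-function criterion.

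The residue computation is the step I expect to be the main obstacle. Poles of $R_-$ in $\mathbb{C}_+$ come from simple zeros of $i\lambda+m_-(\lambda^2)$ on the positive imaginary axis, so $\func{Res}(R_-,i\kappa_n)=2i\kappa_n\big/\partial_\lambda[i\lambda+m_-(\lambda^2)]\big|_{\lambda=i\kappa_n}$. To identify this residue with $ic_n^2$ I would invoke the classical link (as in \cite{AK01}) between $m_-'$ at a bound-state energy and the inverse squared norm of the associated Weyl eigenfunction, matched to the normalization of the Jost solution at $x=0$. Agreement of \eqref{R} with the Faddeev-class formula \eqref{R1} for short-range $V_-$ is then obtained by observing that the Weyl solution $\Psi_-$ is proportional to the Jost solution $\psi_-$ of \eqref{ksi-} and comparing the piecewise trial $\Psi$ against \eqref{ksi-} at $x=0$.
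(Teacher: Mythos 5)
Your proposal is correct and follows essentially the same route as the paper: matching the piecewise trial solution at $x=0$ to derive \eqref{R}, the Herglotz and reflection properties of $m_{-}$ for analyticity and symmetry, a direct modulus computation for contractivity, and the classical residue/norming-constant link from \cite{AK01} together with the identification of $\Psi_{-}$ with the Jost solution in the short-range case. One small caution: the poles of $R_{-}$, i.e.\ the zeros of $i\lambda+m_{-}\left(\lambda^{2}\right)$, correspond to the negative eigenvalues of the \emph{full-line} operator $H_{-}$ rather than of the half-line Dirichlet operator $H_{-}^{D}$ — the poles of $m_{-}$ (Dirichlet eigenvalues) are removable singularities of $R_{-}$, as is clear from $R_{-}=-1+2i\lambda/\left(i\lambda+m_{-}\right)$, so your references to $H_{-}^{D}$ in the analyticity and residue steps should read $H_{-}$.
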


The statement for the left reflection coefficient $L_{+}$ associated with $%
H_{+}$ is almost identical to Proposition \ref{Prop_R} with 
\begin{equation}
L_{+}\left( \lambda \right) =\frac{i\lambda -m_{+}\left( \lambda ^{2}\right) 
}{i\lambda +m_{+}\left( \lambda ^{2}\right) }=-1+\frac{2i\lambda }{i\lambda
+m_{+}\left( \lambda ^{2}\right) }  \label{L+}
\end{equation}%
in place of $\left( \ref{R}\right) $. We believe that $\left( \ref{R}\right) 
$ is originally due to Faddeev but we could not locate the paper where this
appeared first. When $V$ is supported on the whole line, a similar approach
was used in \cite{GNP97} and \cite{GS97} to define certain relative
reflection coefficient in situations when there is no classical scattering.

%-----------------------------------------Example 5.2--------------------------------

\begin{example}
If $V\left( x\right) =-h^{2}\chi_{-}(x)$ then $m_{-}\left( \lambda
^{2}\right) =i\sqrt{\lambda ^{2}+h^{2}}$ and hence $\left( \ref{R}\right) $
takes the form 
\begin{equation*}
R_{-}\left( \lambda \right) =\frac{\lambda -\sqrt{\lambda ^{2}+h^{2}}}{%
\lambda +\sqrt{\lambda ^{2}+h^{2}}}=-\left( \frac{h}{\lambda +\sqrt{\lambda
^{2}+h^{2}}}\right) ^{2}
\end{equation*}%
which is analytic on $\mathbb{C}_{+}\mathbb{\setminus }\left[ 0,ih\right] $.
\end{example}

%-----------------Subsection 5.2: Potentials supported on the full line-------------------

\subsection{Potentials supported on the full line}

We now define the right reflection coefficient $R$ for any potential subject
to Hypothesis \ref{H1}.

%--------------------Definition 5.3------------------------------------

\begin{definition}
\label{Def_refl}Let $V=V_{-}+V_{+}$ where $V_{-}$ is arbitrary (i.e. subject
to Hypothesis \ref{H1}) and $V_{+}\in L_{+}^{1}$. Let $V_{b}:=V\chi _{\left(
-b,\infty \right) }$ and let $R_{b}$ be the right reflection coefficient. We
call the limit 
\begin{equation}
R=\text{w}-\lim R_{b},b\rightarrow \infty ,  \label{defR}
\end{equation}%
if it exists, the right reflection coefficient from $V$.
\end{definition}

%----------------------------Lemma 5.4----------------------------------

\begin{lemma}
\label{Fragmentation}The right reflection coefficient $R\left( \lambda
\right) $ defined by $\left( \ref{defR}\right) $ is well-defined and
satisfies%
\begin{align}
R& =\frac{T_{+}}{\overline{T}_{+}}\dfrac{R_{-}-\overline{L}_{+}}{1-R_{-}L_{+}%
},  \label{fullRC'} \\
\text{or }R& =R_{+}+\dfrac{T_{+}^{2}R_{-}}{1-R_{-}L_{+}}  \label{fullRC}
\end{align}%
where the subscript $\pm $ indicates that the corresponding scattering
quantities are related to $V_{\pm }$, the right hand side of $\left( \ref%
{fullRC}\right) $ being independent of a particular partition $\left( \ref%
{Vfragm}\right) $. Moreover 
\begin{equation*}
\left\vert R\left( \lambda \right) \right\vert \leq 1\text{ for a.e. real }%
\lambda ,
\end{equation*}%
\begin{equation*}
\left\vert R\left( \lambda \right) \right\vert <1\text{ for a.e. real }%
\lambda \in \func{Spec}_{ac}\left( H\right) \text{ of multiplicity }2,
\end{equation*}%
\begin{equation*}
\text{and }\left\vert R\right\vert =1\text{ if and only if }\left\vert
R_{-}\right\vert =1.
\end{equation*}
\end{lemma}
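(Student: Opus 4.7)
The plan is to combine three ingredients: (i) the classical scattering factorization for a sum of two potentials with disjoint half-line supports, applied after truncating $V_-$ on the far left; (ii) continuity of the Titchmarsh--Weyl $m$-function under such truncation; and (iii) the unitarity relations $\eqref{2-3}$--$\eqref{2'}$ for $V_+$.

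First I would fix $b>0$ and write $V_b = V_-^{(b)} + V_+$ with $V_-^{(b)} := V_-\chi_{(-b,0)}$. Since $V \in L^1_{\mathrm{loc}}$, $V_-^{(b)}$ is compactly supported, so $V_b \in L^1(\langle x\rangle\,dx)$ falls into the Faddeev class and all classical scattering quantities for $V_b$ exist. Because $V_-^{(b)}$ and $V_+$ have disjoint supports, matching Jost solutions at $x=0$ (a standard computation summing the geometric series of multiple reflections through the barriers) gives
\begin{equation*}
R_b = R_+ + \frac{T_+^2\, R_-^{(b)}}{1 - R_-^{(b)} L_+},
\end{equation*}
where $R_-^{(b)}(\lambda) = (i\lambda - m_-^{(b)}(\lambda^2))/(i\lambda + m_-^{(b)}(\lambda^2))$ by Proposition~\ref{Prop_R} applied to the short-range potential $V_-^{(b)}$.

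Next I would appeal to the standard fact that truncating a limit-point potential far from the boundary preserves the limit-point character and the $m$-function converges: under Hypothesis~\ref{H1}, $m_-^{(b)}(z) \to m_-(z)$ pointwise on $\mathbb{C}_+$ (Weyl-disk argument, using that the limit-point Weyl solution at $-\infty$ is the unique $L^2$-solution and depends continuously on the truncation). Taking boundary values then yields $R_-^{(b)}(\lambda) \to R_-(\lambda)$ for a.e.\ real $\lambda$, with the uniform bound $|R_-^{(b)}|\le 1$. Dominated convergence gives the weak limit $R_b \to R$ in $L^\infty(\mathbb{R})$ tested against $L^1$ functions, and the algebraic formula passes to the limit, giving $\eqref{fullRC}$. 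Independence of the partition $V=V_-+V_+$ is then automatic, since $R=\mathrm{w}\text{-}\lim R_b$ is intrinsic.

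To pass from $\eqref{fullRC}$ to $\eqref{fullRC'}$ I would substitute the unitarity identity $R_+ = -(T_+/\overline{T}_+)\overline{L}_+$ from $\eqref{2'}$ together with $|T_+|^2 = 1 - |L_+|^2$ from $\eqref{2-3}$; after placing the two terms over the common denominator $1-R_-L_+$, the cross terms involving $|L_+|^2$ cancel and the expression collapses to the quotient form of $\eqref{fullRC'}$. The contractivity statements then follow from the identity
\begin{equation*}
1 - |R|^2 = \frac{(1-|R_-|^2)(1-|L_+|^2)}{|1 - R_- L_+|^2},
\end{equation*}
obtained by a short algebraic manipulation from $\eqref{fullRC'}$ using $|T_+/\overline{T}_+|=1$. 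This immediately yields $|R|\le 1$ a.e.; strict inequality on the multiplicity-two part of $\mathrm{Spec}_{\mathrm{ac}}(H)$, since both $|R_-|<1$ and $|L_+|<1$ a.e.\ there; and the equivalence $|R|=1 \iff |R_-|=1$, because $V_+\in L^1_+$ forces $|L_+|<1$ for a.e.\ $\lambda$.

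I expect the main obstacle to be the weak-limit step. The numerator $T_+^2 R_-^{(b)}$ is handled by pointwise convergence and domination, but the denominator $1-R_-^{(b)} L_+$ must be controlled away from zero on the relevant set; this requires exploiting that $|R_-^{(b)} L_+|\le 1$ with strict inequality a.e.\ on $\mathrm{Spec}_{\mathrm{ac}}(H_+)=\mathbb{R}_+$, so that the quotient is defined a.e.\ and stays in an $L^\infty$ ball uniformly in $b$. Once the $m$-function convergence and this denominator bound are in hand, the rest of the argument is essentially algebra plus the unitarity relations.
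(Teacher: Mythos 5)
Your overall architecture is the same as the paper's: truncate $V_-$ at $-b$, apply the classical two-potential factorization (your ``geometric series of multiple reflections'' is exactly the transfer-matrix fragmentation identity the paper quotes from the scattering literature) to get $R_b=R_+ + T_+^2 R_-^{(b)}/(1-R_-^{(b)}L_+)$, pass to the limit $b\to\infty$ via convergence of the Titchmarsh--Weyl function, and read off the contractivity statements from the M\"obius identity $1-|R|^2=(1-|R_-|^2)(1-|L_+|^2)/|1-R_-L_+|^2$. The algebra relating \eqref{fullRC} and \eqref{fullRC'} via \eqref{2-3}--\eqref{2'} is also the same (you run it in the opposite direction, which is immaterial), and your closing observation that $V_+\in L^1_+$ forces $|L_+|<1$ a.e.\ correctly delivers the equivalence $|R|=1\iff|R_-|=1$.

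The genuine gap is in the limiting step. You assert that $m_-^{(b)}(z)\to m_-(z)$ on $\mathbb{C}_+$ implies $R_-^{(b)}(\lambda)\to R_-(\lambda)$ for \emph{a.e.\ real} $\lambda$, and then invoke dominated convergence. Pointwise (even locally uniform) convergence of Herglotz functions in the open upper half-plane does \emph{not} imply a.e.\ convergence of their boundary values: boundary behavior is governed by the spectral measures, which are highly unstable under truncation. A concrete obstruction is a $V_-$ whose half-line operator has purely singular spectrum on $[0,\infty)$, so that $|R_-(\lambda)|=1$ a.e., while every compactly supported truncation has $|R_-^{(b)}(\lambda)|<1$ a.e.; nothing in the Weyl-disk argument controls where the boundary values of $R_-^{(b)}$ actually go pointwise. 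This is precisely why the paper defines $R$ only as a \emph{weak} limit and deduces the weak convergence on $\mathbb{R}$ from the locally uniform convergence of the analytically continued quantity $\widetilde{G}=T_+^2\widetilde{R}_-/(1-\widetilde{R}_-L_+)$ inside $\mathbb{C}_+$ (in effect a Poisson-representation / weak-$*$ compactness argument for uniformly bounded analytic data), rather than from any pointwise boundary convergence. Your conclusion is reachable, but the route you propose through a.e.\ boundary convergence plus dominated convergence would fail; the step needs to be replaced by the harmonic-analysis argument that converts interior convergence into weak convergence of boundary values. Your separate worry about the denominator $1-R_-^{(b)}L_+$ is also best resolved the same way, by working with the continued function in $\mathbb{C}_+$ rather than on the real line.
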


\begin{proof}
To avoid the subscript $b$ we denote $\widetilde{V}:=V_{b}=V\chi _{\left(
-b,\infty \right) }$. The potential splitting 
\begin{equation*}
\widetilde{V}=\widetilde{V}_{-}+V_{+}
\end{equation*}%
implies the fragmentation principle (see, e.g. \cite{AK01}) 
\begin{equation}
\left( 
\begin{array}{cc}
1/\widetilde{T} & -\widetilde{R}/\widetilde{T} \\ 
\widetilde{L}/\widetilde{T} & \overline{1/\widetilde{T}}%
\end{array}%
\right) =\left( 
\begin{array}{cc}
1/\widetilde{T}_{-} & -\widetilde{R}_{-}/\widetilde{T}_{-} \\ 
\widetilde{L}_{-}/\widetilde{T}_{-} & \overline{1/\widetilde{T}}_{-}%
\end{array}%
\right) \left( 
\begin{array}{cc}
1/T_{+} & -R_{+}/T_{+} \\ 
L_{+}/T_{+} & 1/\overline{T}_{+}%
\end{array}%
\right)  \label{fragm}
\end{equation}%
where each entry is well-defined. Multiplying out the matrices in $\left( %
\ref{fragm}\right) $%
\begin{align*}
\frac{1}{\widetilde{T}}& =\frac{1-L_{+}\widetilde{R}_{-}}{T_{+}\widetilde{T}%
_{-}} \\
\frac{\widetilde{R}}{\widetilde{T}}& =\frac{R_{+}}{T_{+}\widetilde{T}_{-}}+%
\frac{\widetilde{R}_{-}}{\overline{T}_{+}\widetilde{T}_{-}},
\end{align*}%
a straightforward algebra yields%
\begin{equation}
\widetilde{R}=\frac{R_{+}+\left( T_{+}/\overline{T}_{+}\right) \widetilde{R}%
_{-}}{1-L_{+}\widetilde{R}_{-}}.  \label{Rtilda}
\end{equation}%
Inserting the following relation from $\left( \ref{2'}\right) $ 
\begin{equation*}
\overline{T}_{+}/T_{+}=-\overline{L}_{+}/R_{+}
\end{equation*}%
into $\left( \ref{Rtilda}\right) $ yields $\left( \ref{fullRC'}\right) $.
Using the relations $\left( \ref{2-3}\right) ,$ and $\left( \ref{2'}\right) $
we have%
\begin{eqnarray*}
\frac{L_{+}R_{+}-T_{+}^{2}}{T_{+}} &=&-\frac{\left( T_{+}/\overline{T}%
_{+}\right) R_{+}\overline{R}_{+}+T_{+}^{2}}{T_{+}} \\
&=&-\frac{\left( T_{+}/\overline{T}_{+}\right) \left( 1-T_{+}\overline{T}%
_{+}\right) +T_{+}^{2}}{T_{+}}=-\frac{1}{\overline{T}_{+}}
\end{eqnarray*}%
which implies that 
\begin{equation*}
T_{+}/\overline{T}_{+}=T_{+}^{2}-L_{+}R_{+}.
\end{equation*}%
Inserting this relation into $\left( \ref{Rtilda}\right) $ one obtains 
\begin{equation}
\widetilde{R}=R_{+}+\widetilde{G},\ \ \ \ \ \ \widetilde{G}:=\dfrac{T_{+}^{2}%
\widetilde{R}_{-}}{1-\widetilde{R}_{-}L_{+}}.  \label{Rtilda1}
\end{equation}%
As discussed in the previous subsections each quantity in $\widetilde{G}$
can admits an analytic continuation into $\mathbb{C}_{+}$ and hence $%
\widetilde{G}$ \ can also be continued into $\mathbb{C}_{+}$. But according
to \cite{PavSmirn82}, $\widetilde{m}_{-}$ converges uniformly on every
compact in $\mathbb{C}_{+}$ to $m_{-}$ as $b\rightarrow \infty $ and hence
so does $\widetilde{R}_{-}$ to $R$. This means that uniformly on every
compact in $\mathbb{C}_{+}$ 
\begin{equation}
\lim_{b\rightarrow \infty }\widetilde{G}=\dfrac{T_{+}^{2}R_{-}}{1-R_{-}L_{+}}%
=:G.  \label{limG}
\end{equation}%
$\left( \ref{limG}\right) $ implies that on the real line $\widetilde{R}%
_{-}\rightarrow R$ and $\widetilde{G}\rightarrow G$ weakly as $b\rightarrow
\infty $. Since $\widetilde{R}$ in $\left( \ref{Rtilda1}\right) $ is
independent of the point of splitting, the reflection coefficient defined by 
$\left( \ref{defR}\right) $ is well-defined and $\left( \ref{fullRC}\right) $
holds. The last statements of the lemma immediately follow from 
\begin{equation*}
1-\left\vert R\right\vert ^{2}=\frac{\left( 1-\left\vert R_{-}\right\vert
^{2}\right) \left( 1-\left\vert L_{+}\right\vert ^{2}\right) }{\left\vert
1-R_{-}L_{+}\right\vert ^{2}}
\end{equation*}%
which in turn follows from the fact that $\left( \ref{fullRC'}\right) $
represents a M\"{o}bius transform (or can be verified by a direct
computation).
\end{proof}

It is quite clear that our definition \ref{Def_refl} agrees with the
standard one if $V$ is short range.

Since each of the scattering quantities in the second term on the right hand
side of $\left( \ref{fullRC}\right) $ can be analytically extended into the
upper half plane, so can the whole second term in $\left( \ref{fullRC}%
\right) $. We have no grounds to believe that $R_{+}$ is analytic under the
Faddeev condition only. However it is the case if $V$ is reflectionless.
Namely, the following curious statement holds.

\begin{proposition}
\label{Reflectionless}Let $V$ be a reflectionless potential (i.e. $R\left(
\lambda \right) =L\left( \lambda \right) =0$) and $V\in L^{1}$then all (left
and right) reflection coefficients corresponding to $V_{\pm }$ can be
analytically continued into the upper half plane.
\end{proposition}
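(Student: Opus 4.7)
Of the four half-line reflection coefficients $R_\pm,L_\pm$, two are already controlled: $R_-$ is meromorphic in $\mathbb{C}_+$ by Proposition~\ref{Prop_R}, and $L_+$ is meromorphic in $\mathbb{C}_+$ by the analogous formula $(\ref{L+})$ (with simple imaginary poles lying over the Dirichlet eigenvalues of the respective half-line operator). So the burden reduces to extending $R_+$ and $L_-$ into $\mathbb{C}_+$, and the plan is to use the reflectionless hypothesis to eliminate the unknown on the left-hand side of the appropriate fragmentation identity.

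For $R_+$, I would exploit $R\equiv 0$ by solving the identity $(\ref{fullRC})$ of Lemma~\ref{Fragmentation} directly for $R_+$, obtaining for $\lambda\in\mathbb{R}$
\begin{equation*}
R_+(\lambda)\;=\;-\,\frac{T_+^{2}(\lambda)\,R_-(\lambda)}{1-R_-(\lambda)L_+(\lambda)}.
\end{equation*}
Because $V_+\in L^1_+$, the transmission coefficient $T_+$ extends analytically to $\mathbb{C}_+$ (the Wronskian of the Jost solutions is analytic there for $L^1$ potentials), while $R_-$ and $L_+$ extend meromorphically by the previous paragraph. The right-hand side is thus a meromorphic function on $\mathbb{C}_+$ whose boundary values coincide with $R_+$ on $\mathbb{R}$, and by the identity principle this is the desired continuation.

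For $L_-$, I would first derive the mirror fragmentation formula for the left reflection coefficient. Reading off the $(2,1)$ entry of the matrix identity $(\ref{fragm})$, passing to the limit $b\to\infty$ exactly as in the proof of Lemma~\ref{Fragmentation}, and simplifying via the dual relation $T_-/\overline{T}_- = T_-^{2}-L_-R_-$ (obtained in complete analogy with the computation displayed after $(\ref{Rtilda})$), one arrives at
\begin{equation*}
L\;=\;L_-+\frac{T_-^{2}\,L_+}{1-R_-L_+}.
\end{equation*}
Since the hypothesis $V\in L^1$ automatically gives $V_-\in L^1_-$, the coefficient $T_-$ is also analytic on $\mathbb{C}_+$. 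With $L\equiv 0$ the same reasoning as for $R_+$ produces the meromorphic continuation
\begin{equation*}
L_-(\lambda)\;=\;-\,\frac{T_-^{2}(\lambda)\,L_+(\lambda)}{1-R_-(\lambda)L_+(\lambda)}.
\end{equation*}

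The step I expect to be the most delicate is the derivation of this mirror fragmentation formula, because Lemma~\ref{Fragmentation} is stated asymmetrically ($V_-$ arbitrary, $V_+\in L^1_+$) and one must verify that under the stronger assumption $V\in L^1$ the roles of $\pm$ may indeed be interchanged and that the limiting argument of \cite{PavSmirn82} applies on the other side as well. Beyond that, no genuinely new analytic difficulty appears: the argument is essentially algebraic, the reflectionless assumption converts each unknown half-line coefficient into a meromorphic combination of objects whose analytic structure is already understood.
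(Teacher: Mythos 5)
Your proposal is correct and follows essentially the same route as the paper: the paper also solves the fragmentation identity $(\ref{fullRC})$ for $R_{+}$ using $R\equiv 0$ to get $R_{+}=-T_{+}^{2}R_{-}/(1-R_{-}L_{+})$, whose right-hand side continues into $\mathbb{C}_{+}$, and disposes of $L_{-}$ "similarly." Your spelled-out derivation of the mirror identity for $L_{-}$ (and the caution about interchanging the roles of $\pm$) is simply the detail the paper leaves implicit.
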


Indeed, it immediately follows from $\left( \ref{fullRC}\right) $ that 
\begin{equation*}
R_{+}=-\dfrac{T_{+}^{2}R_{-}}{1-R_{-}L_{+}}
\end{equation*}%
and hence $R_{+}$ admits an analytic continuation into $\mathbb{C}_{+}$.
Similarly one proves that $L_{-}$ has the same property.

Proposition \ref{Reflectionless} is, of course, well known for $N$-soliton
reflectionless potentials but appears to be new as it admits infinitely many
solitons.

%-------------------------------Section 6: Fredholm determinants

\section{Fredholm determinants}

In this section we specify in what sense we understand Fredholm determinants
(Definition \ref{Def Fred}) and prove two important lemmas.

It is well-known that if $A$ is a trace class operator than one can define
the invariant\footnote{%
i.e. independent of a matrix representation of $A$.} Fredholm determinant $%
\det \left( I+A\right) $. If $A$ is Hilbert-Schmidt then 
\begin{equation*}
\det\nolimits_{2}\left( I+A\right) :=\det \left( I+A\right) e^{-A}
\end{equation*}%
is also well-defined. Apparently if $A\in \mathfrak{S}_{1}$ then 
\begin{equation*}
\det \left( I+A\right) =\det\nolimits_{2}\left( I+A\right) \cdot e^{\limfunc{%
tr}A}.
\end{equation*}

In particular situations it is usually very hard to verify that $A\in 
\mathfrak{S}_{1}$. It is not easy even if $A$ is an integral whereas
verifying $A\in \mathfrak{S}_{2}$ merely requires computing a double
integral. However for an integral operator on $L^{2}\left( S\right) $, $%
S\subseteq \mathbb{R}$, with the kernel $A\left( x,y\right) $ the condition $%
A\left( x,x\right) \in L^{1}\left( S\right) $ is much easier to check and
the trace can then be conveniently defined as the integral of the kernel on
the diagonal. Namely, one introduces

%----------------Definition 6.1--------------------

\begin{definition}
\label{Def Fred}Let $A:L^{2}\left( S\right) \rightarrow L^{2}\left( S\right) 
$ be a Hilbert-Schmidt integral operator with the kernel $A\left( x,y\right) 
$. We call $A$ a trace type operator if $A\left( x,x\right) $ is
well-defined, continuous on $S$ and $A\left( x,x\right) \in L^{1}\left(
S\right) $ and we set then 
\begin{eqnarray*}
\limfunc{Tr}A &:&=\int_{S}A\left( x,x\right) dx \\
\limfunc{Det}\left( I+A\right) &:&=\det\nolimits_{2}\left( I+A\right) \cdot
e^{\limfunc{Tr}A}.
\end{eqnarray*}
\end{definition}

Of course if $A\in \mathfrak{S}_{1}$ then 
\begin{equation*}
\limfunc{Det}\left( I+A\right) =\det \left( I+A\right)
\end{equation*}%
but there are examples of Hilbert-Schmidt integral operators $A$'s not from $%
\mathfrak{S}_{1}$ but for which $A\left( x,x\right) \in L^{1}\left( S\right) 
$. Such examples are quite pathological though (see \cite{SimonTrace}).

The Fredholm determinants in two particular cases will be important in our
consideration.

%------------------------Lemma 6.2------------------------------

\begin{lemma}
\label{Trace}Let ${\phi }\left( \lambda \right) $ be a smooth function
defined on a piecewise differentiable contour $\Gamma =\left\{ \lambda \in 
\mathbb{C}:\lambda =\alpha +ih\left( \alpha \right) ,h\geq 0,\alpha \in 
\mathbb{R}\right\} $ such that $\dfrac{{\phi }\left( \lambda \right) }{\func{%
Im}\lambda }\in $ $L^{1}\left( \Gamma \right) $. Then the integral operator $%
\mathbf{\Phi}$ on $L_{+}^{2}$ with the kernel 
\begin{equation*}
{\Phi }\left( x,y\right) =\int_{\Gamma }e^{i\lambda \left( x+y\right) }{\phi 
}\left( \lambda \right) \frac{d\lambda }{2\pi },\: x,y\geq 0,
\end{equation*}%
is trace class,%
\begin{equation*}
\left\Vert {{\mathbf{\Phi }}}\right\Vert _{\mathfrak{S}_{1}}\leq \frac{1}{%
4\pi }\left\Vert \frac{{\phi }\left( \lambda \right) }{\func{Im}\lambda }%
\right\Vert _{L^{1}\left( \Gamma \right) }\mathbf{,}
\end{equation*}%
and hence $\det \left( I+\mathbf{\Phi }\right) $ is well-defined in the
classical Fredholm sense.
\end{lemma}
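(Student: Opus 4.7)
The plan is to exploit the Hankel structure of the kernel and decompose $\mathbf{\Phi}$ as a Bochner integral of rank-one operators over the contour $\Gamma$. Concretely, for each $\lambda \in \Gamma$ with $\operatorname{Im}\lambda>0$ introduce the vector $\varphi_\lambda(x):=e^{i\lambda x}$, which belongs to $L^2_+$ since
\begin{equation*}
\|\varphi_\lambda\|_{L^2_+}^2 \;=\; \int_0^\infty e^{-2(\operatorname{Im}\lambda)x}\,dx \;=\; \frac{1}{2\operatorname{Im}\lambda}.
\end{equation*}
The rank-one operator $P_\lambda$ on $L^2_+$ whose integral kernel is $e^{i\lambda(x+y)}=\varphi_\lambda(x)\varphi_\lambda(y)$ acts by $P_\lambda f = \varphi_\lambda\,\langle \overline{\varphi_\lambda},f\rangle$, so it is automatically trace class with
\begin{equation*}
\|P_\lambda\|_{\mathfrak{S}_1} \;=\; \|\varphi_\lambda\|_{L^2_+}\,\|\overline{\varphi_\lambda}\|_{L^2_+} \;=\; \frac{1}{2\operatorname{Im}\lambda}.
\end{equation*}

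Next I would rewrite the defining formula as a contour-valued Bochner integral
\begin{equation*}
\mathbf{\Phi} \;=\; \int_\Gamma \phi(\lambda)\,P_\lambda\,\frac{d\lambda}{2\pi},
\end{equation*}
and verify it converges absolutely in the trace norm using the hypothesis $\phi(\lambda)/\operatorname{Im}\lambda \in L^1(\Gamma)$. By the triangle inequality for Bochner integrals valued in $\mathfrak{S}_1$,
\begin{equation*}
\|\mathbf{\Phi}\|_{\mathfrak{S}_1} \;\le\; \int_\Gamma |\phi(\lambda)|\,\|P_\lambda\|_{\mathfrak{S}_1}\,\frac{|d\lambda|}{2\pi} \;=\; \frac{1}{4\pi}\int_\Gamma \frac{|\phi(\lambda)|}{\operatorname{Im}\lambda}\,|d\lambda| \;=\; \frac{1}{4\pi}\left\|\frac{\phi(\lambda)}{\operatorname{Im}\lambda}\right\|_{L^1(\Gamma)},
\end{equation*}
which is exactly the asserted bound. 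Because $\mathbf{\Phi}\in\mathfrak{S}_1$, the Fredholm determinant $\det(I+\mathbf{\Phi})$ is defined in the classical sense.

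The one point that needs a little care — and the main (mild) obstacle — is justifying the identification of the Bochner integral with the integral operator having kernel $\Phi(x,y)$: one needs to show $\left(\int_\Gamma \phi(\lambda)P_\lambda\frac{d\lambda}{2\pi}\right)f = \int_0^\infty \Phi(x,y) f(y)\,dy$ for $f\in L^2_+$. This follows from Fubini once we note that for $f\in L^2_+$,
\begin{equation*}
\int_\Gamma |\phi(\lambda)|\,|\varphi_\lambda(x)|\,|\langle\overline{\varphi_\lambda},f\rangle|\,\frac{|d\lambda|}{2\pi} \;\lesssim\; \int_\Gamma \frac{|\phi(\lambda)|}{\operatorname{Im}\lambda}\,|d\lambda|\cdot\|f\|_{L^2_+} \;<\;\infty,
\end{equation*}
by Cauchy–Schwarz applied to $\langle\overline{\varphi_\lambda},f\rangle$. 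A secondary nuisance is that $\Gamma$ may touch $\mathbb{R}$ where $h(\alpha)=0$; however the finiteness hypothesis on $\phi/\operatorname{Im}\lambda$ forces $\phi$ to vanish sufficiently at such points so that the integrand is well-defined a.e. on $\Gamma$, and the estimate above is unaffected.
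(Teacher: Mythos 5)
Your proof is correct, and it takes a genuinely different route from the paper's. You write the Hankel operator as an $\mathfrak{S}_{1}$-valued Bochner integral $\mathbf{\Phi }=\int_{\Gamma }\phi \left( \lambda \right) P_{\lambda }\,\frac{d\lambda }{2\pi }$ of the rank-one operators $P_{\lambda }$ with kernel $e^{i\lambda \left( x+y\right) }$, and the triangle inequality in trace norm together with $\left\Vert P_{\lambda }\right\Vert _{\mathfrak{S}_{1}}=\left( 2\operatorname{Im}\lambda \right) ^{-1}$ gives the bound with the exact constant $1/\left( 4\pi \right) $. The paper instead factors $\phi =\phi _{1}\phi _{2}$ with $\phi _{1}=\phi _{2}=\sqrt{\phi }$ and, via a delta-function/auxiliary-variable manipulation, exhibits $\mathbf{\Phi }=\mathbf{\Phi }_{1}\mathbf{\Phi }_{2}$ as a product of two Hilbert--Schmidt operators whose $\mathfrak{S}_{2}$-norms are computed explicitly, then uses $\left\Vert \mathbf{\Phi }\right\Vert _{\mathfrak{S}_{1}}\leq \left\Vert \mathbf{\Phi }_{1}\right\Vert _{\mathfrak{S}_{2}}\left\Vert \mathbf{\Phi }_{2}\right\Vert _{\mathfrak{S}_{2}}$ to reach the same constant. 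Your argument is more elementary (no auxiliary integration variable, no square root of a complex-valued symbol) and makes the origin of the constant transparent; the paper's factorization in principle allows asymmetric splittings $\phi _{1}\neq \phi _{2}$, though that freedom is not exploited. One small imprecision: in your Fubini step the estimate $\left\vert \left\langle \overline{\varphi _{\lambda }},f\right\rangle \right\vert \leq \left\Vert f\right\Vert /\sqrt{2\operatorname{Im}\lambda }$ yields the weight $\left\vert \phi \right\vert /\sqrt{\operatorname{Im}\lambda }$, not $\left\vert \phi \right\vert /\operatorname{Im}\lambda $, and the two are not comparable for $\operatorname{Im}\lambda $ large; the clean repair is to identify the Bochner integral with the kernel operator weakly, testing against $g\in L_{+}^{2}$ as well, which reinstates the full weight $\left( 2\operatorname{Im}\lambda \right) ^{-1}$ and lets Fubini run under exactly the hypothesis $\phi /\operatorname{Im}\lambda \in L^{1}\left( \Gamma \right) $. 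Your observation about points of $\Gamma $ on the real axis is also right: the hypothesis forces $\phi =0$ a.e.\ where $\operatorname{Im}\lambda =0$, so those points contribute nothing to the integral.
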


\begin{proof}
Set $\gamma \left( \alpha \right) :=\alpha +ih\left( \alpha \right) $ %and
%let ${\phi }={\phi }_{1}{\phi }_{2}$ be a factorization of ${\phi }$ to be
%chosen later
. We have ($x,y\geq 0$)%
\begin{equation}
{\Phi }\left( x,y\right) =\int e^{i\gamma \left( \alpha \right) \left(
x+y\right) }{\phi }\left( \gamma \left( \alpha \right) \right) \gamma
^{\prime }\left( \alpha \right) \frac{d\alpha }{2\pi }.  \label{Int1}
\end{equation}%
We now split this integral in a certain way. To this end let ${\phi }={\phi }%
_{1}{\phi }_{2}$ be a factorization of ${\phi }$ to be chosen later 
\begin{align}
& e^{i\gamma \left( \alpha \right) \left( x+y\right) }{\phi }\left( \gamma
\left( \alpha \right) \right) \gamma ^{\prime }\left( \alpha \right) ^{1/2}
\label{Int2} \\
& =e^{i\gamma \left( \alpha \right) x}{\phi }_{1}\left( \gamma \left( \alpha
\right) \right) \int e^{i\gamma \left( \beta \right) y}{\phi }_{2}\left(
\gamma \left( \beta \right) \right) \delta \left( \beta -\alpha \right)
\gamma ^{\prime }\left( \beta \right) ^{1/2}d\beta  \notag \\
& =\int \left( e^{i\gamma \left( \alpha \right) x}{\phi }_{1}\left( \gamma
\left( \alpha \right) \right) \int e^{i\gamma \left( \beta \right) y}{\phi }%
_{2}\left( \gamma \left( \beta \right) \right) e^{i\left( \beta -\alpha
\right) s}\gamma ^{\prime }\left( \beta \right) ^{1/2}d\beta \right) \dfrac{%
ds}{2\pi }  \notag \\
& ={\phi }_{1}\left( \gamma \left( \alpha \right) \right) \int e^{i\left(
\gamma \left( \alpha \right) x-\alpha s\right) }\left( \int e^{i\left(
\gamma \left( \beta \right) y+\beta s\right) }{\phi }_{2}\left( \gamma
\left( \beta \right) \right) \gamma ^{\prime }\left( \beta \right)
^{1/2}d\beta \right) \dfrac{ds}{2\pi }  \notag
\end{align}%
where we have used%
\begin{equation*}
\delta \left( \beta -\alpha \right) =\int e^{i\left( \beta -\alpha \right) s}%
\dfrac{ds}{2\pi }.
\end{equation*}%
Substituting $\left( \ref{Int2}\right) $ into $\left( \ref{Int1}\right) $
yields%
\begin{align}
{\Phi }\left( x,y\right) & =\int \left\{ \left( \int e^{i\left( \gamma
\left( \alpha \right) x-\alpha s\right) }{\phi }_{1}\left( \gamma \left(
\alpha \right) \right) \gamma ^{\prime }\left( \alpha \right) ^{1/2}\frac{%
d\alpha }{2\pi }\right) \right.  \notag \\
& \left. \left( \int e^{i\left( \gamma \left( \beta \right) y+\beta s\right)
}{\phi }_{2}\left( \gamma \left( \beta \right) \right) \gamma ^{\prime
}\left( \beta \right) ^{1/2}\frac{d\beta }{2\pi }\right) \right\} ds  \notag
\\
& =\int {\Phi }_{1}\left( x,s\right) {\Phi }_{2}\left( s,y\right) ds
\label{Int3}
\end{align}%
where 
\begin{align*}
{\Phi }_{1}\left( x,s\right) & =\int \exp \left\{ i\left( x-s\right) \alpha
-h\left( \alpha \right) x\right\} {\phi }_{1}\left( \gamma \left( \alpha
\right) \right) \gamma ^{\prime }\left( \alpha \right) ^{1/2}d\alpha \\
{\Phi }_{2}\left( s,y\right) & =\int \exp \left\{ i\left( s+y\right) \alpha
-h\left( \alpha \right) y\right\} {\phi }_{2}\left( \gamma \left( \alpha
\right) \right) \gamma ^{\prime }\left( \alpha \right) ^{1/2}d\alpha .
\end{align*}%
By a straightforward computation ($k=1,2$) 
\begin{align*}
\iint_{\mathbb{R}_{+}} \left\vert {\Phi }_{k}\left( x,y\right) \right\vert
^{2}dxdy& =\int \frac{\left\vert \gamma ^{\prime }\left( \alpha \right)
\right\vert }{2h\left( \alpha \right) }\left\vert {\phi }_{k}\left( \gamma
\left( \alpha \right) \right) \right\vert ^{2}\frac{d\alpha }{2\pi } \\
& =\int_{\Gamma }\left\vert \frac{{\phi }_{k}\left( \lambda \right) }{\sqrt{%
\func{Im}\lambda }}\right\vert ^{2}\frac{\left\vert d\lambda \right\vert }{%
4\pi }.
\end{align*}%
On the other hand%
\begin{equation*}
\iint_{\mathbb{R}_{+}} \left\vert {\Phi }_{k}\left( x,y\right) \right\vert
^{2}dxdy=\left\Vert \mathbf{\Phi }_k\right\Vert _{\mathfrak{S}_{2}}^{2}
\end{equation*}%
and hence 
\begin{equation*}
\left\Vert {{\mathbf{\Phi }}}_{k}\right\Vert _{\mathfrak{S}_{2}}=\frac{1}{2%
\sqrt{\pi }}\left\Vert \frac{{\phi }_{k}\left( \lambda \right) }{\sqrt{\func{%
Im}\lambda }}\right\Vert _{L^{2}\left( \Gamma \right) }.
\end{equation*}%
It follows from $\left( \ref{Int3}\right) $ that ${{\mathbf{\Phi }}}={{%
\mathbf{\Phi }}}_{1}{{\mathbf{\Phi }}}_{2}$ and, taking ${\phi }_{1}={\phi }%
_{2}$ we finally have%
\begin{eqnarray*}
\left\Vert {{\mathbf{\Phi }}}\right\Vert _{\mathfrak{S}_{1}} &\leq
&\left\Vert {{\mathbf{\Phi }}}_{1}\right\Vert _{\mathfrak{S}_{2}}\left\Vert {%
{\mathbf{\Phi }}}_{2}\right\Vert _{\mathfrak{S}_{2}}=\frac{1}{4\pi }%
\left\Vert \sqrt{\frac{{\phi }\left( \lambda \right) }{\func{Im}\lambda }}%
\right\Vert _{L^{2}\left( \Gamma \right) }^{2} \\
&=&\frac{1}{4\pi }\left\Vert \frac{{\phi }\left( \lambda \right) }{\func{Im}%
\lambda }\right\Vert _{L^{1}\left( \Gamma \right) }<\infty
\end{eqnarray*}%
and the lemma is proven.
\end{proof}

%-------------------------Lemma 6.3--------------------------

\begin{lemma}
\label{Trace type}Let $R$ be as in (\ref{R(lambda/2)}) such that $\partial
_{\lambda }R$ $\in L^{2}$ and $t\geq 0$ is a parameter (time), Then the
integral operator $\mathbf{\Phi }$ on $L^{2}\left( a,\infty \right)
,a>-\infty ,$ with the kernel%
\begin{equation*}
{\Phi }\left( x,y\right) =\left( \mathcal{F}e^{8i\lambda ^{3}t}R\right)
\left( x+y\right)
\end{equation*}%
is a trace type operator in the sense of Definition\ \ref{Def Fred}.
\end{lemma}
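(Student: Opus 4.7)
The kernel is Hankel, $\Phi(x,y) = F(x+y)$ with $F(s) := (\mathcal{F}\, e^{8i\lambda^{3}t}R)(s)$. According to Definition \ref{Def Fred}, I must verify three things: (i) $\mathbf{\Phi}$ is Hilbert--Schmidt on $L^{2}(a,\infty)$; (ii) the diagonal value $\Phi(x,x) = F(2x)$ is well-defined and continuous on $(a,\infty)$; (iii) $F(2x) \in L^{1}(a,\infty)$. The substitution $s=x+y$ gives
\begin{equation*}
\|\mathbf{\Phi}\|_{\mathfrak{S}_{2}}^{2} = \int_{2a}^{\infty}(s-2a)|F(s)|^{2}\, ds,
\end{equation*}
which is dominated by a constant multiple of $\|F\|_{L^{2}}^{2}+\|sF\|_{L^{2}}^{2}$, so by Plancherel (i) reduces to $R \in L^{2}$ and $\partial_{\lambda}(e^{8i\lambda^{3}t}R)\in L^{2}$. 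The first of these follows from representation $\left( \ref{R(lambda/2)}\right) $, using boundedness of $f$ and $\mathcal{F}^{-1}g \in L^{2}$ (the latter from $\left( \ref{cond on g}\right) $).

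The delicate point is the second $L^{2}$ condition when $t>0$. Since
\begin{equation*}
\partial_{\lambda}(e^{8i\lambda^{3}t}R) = 24it\lambda^{2}e^{8i\lambda^{3}t}R + e^{8i\lambda^{3}t}\partial_{\lambda}R,
\end{equation*}
the second summand lies in $L^{2}$ by hypothesis, but the first does not: $R(\lambda)=O(1/\lambda)$ at infinity only gives $\lambda^{2}R=O(|\lambda|)$. This is the main obstacle. At $t=0$ the offending summand vanishes and the argument is direct. For $t>0$, I would bypass the pointwise bound by shifting the Fourier contour in
\begin{equation*}
\sqrt{2\pi}\, F(s) = \int_{\mathbb{R}}e^{i\lambda s + 8i\lambda^{3}t}R(\lambda)\, d\lambda
\end{equation*}
up to $\Gamma_{\epsilon} := \mathbb{R}+i\epsilon$ for some small $\epsilon>0$ below the first eigenvalue $\kappa_{1}$ (so the meromorphic extension of $R$ to $\mathbb{C}_{+}$ has no poles between $\mathbb{R}$ and $\Gamma_{\epsilon}$). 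On $\Gamma_{\epsilon}$, writing $\lambda=\alpha+i\epsilon$,
\begin{equation*}
|e^{8i\lambda^{3}t}| = e^{-24\epsilon t \alpha^{2} + 8\epsilon^{3}t},
\end{equation*}
so the cubic phase supplies Gaussian decay in $\alpha$. Combined with boundedness of $R$ on $\Gamma_{\epsilon}$, this lets me verify the hypothesis of Lemma \ref{Trace} for $\phi(\lambda) := \sqrt{2\pi}\, e^{8i\lambda^{3}t}R(\lambda)$; Lemma \ref{Trace} in turn yields the stronger conclusion that $\mathbf{\Phi}$ is actually trace class on $L^{2}(a,\infty)$, hence certainly of trace type.

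For (ii) and (iii), the same contour-shifted representation gives, for $t>0$, the estimate $|F(s)| \lesssim e^{-\epsilon s}$ for large positive $s$, together with continuity (indeed smoothness) of $F$ by differentiation under the Gaussian-decaying integral on $\Gamma_{\epsilon}$. For $t=0$, continuity of $F$ follows from $R \in L^{1}(\mathbb{R})$ (by Cauchy--Schwarz applied to $R = f \cdot \mathcal{F}^{-1}g$ with $f, \mathcal{F}^{-1}g \in L^{2}$) and Riemann--Lebesgue, while $F \in L^{1}(\mathbb{R})$ follows from $\int|F|\le\|\langle s\rangle F\|_{L^{2}}\|\langle s\rangle^{-1}\|_{L^{2}}$, both factors being finite by the $L^{2}$ bounds in (i). Restricting to $[2a,\infty)$ yields (iii), and the lemma is established.
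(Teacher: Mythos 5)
Your reduction of the claim to the three conditions of Definition \ref{Def Fred}, and your treatment of the case $t=0$, are fine and consistent with the paper. The argument for $t>0$, however, has a fatal gap: the contour shift to $\Gamma_{\epsilon}=\mathbb{R}+i\epsilon$ presupposes that $R$ extends analytically (or meromorphically with poles only at the $i\kappa_{n}$) into a strip of the upper half plane, and this is simply not available under the hypotheses of the lemma. The $R$ of (\ref{R(lambda/2)}) is the reflection coefficient of a full-line Faddeev-class potential; by (\ref{R1}) it involves $\int e^{-2i\lambda x}V(x)y_{-}(x,\lambda)\,dx$ with $x$ ranging over all of $\mathbb{R}$ (and over $\mathbb{R}_{+}$ in the application to $V_{+}$), so it continues, if anywhere, into $\mathbb{C}_{-}$ --- the paper itself remarks that $R_{+}$ ``admits an analytic continuation into $\mathbb{C}_{-}$ but does not imply analyticity in $\mathbb{C}_{+}$,'' and the meromorphic continuation into $\mathbb{C}_{+}$ with poles at $i\kappa_{n}$ is a feature of $R_{-}$ (potential supported on a left half line, Section 5), not of this $R$. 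A telltale symptom is that in your $t>0$ argument the hypothesis $\partial_{\lambda}R\in L^{2}$ is never used: if $R$ were bounded and analytic in a strip, the lemma would need no such assumption.

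The paper's mechanism is genuinely different and stays on the real line: it writes the diagonal as $\int R(\lambda/2)\,d\bigl(e^{i(\lambda x+\lambda^{3}t)}\bigr)/\bigl(i(x+3\lambda^{2}t)\bigr)$ and integrates by parts, exploiting that the phase derivative $x+3\lambda^{2}t$ is bounded below by $b>0$ on $(b,\infty)$, so the gain is decay in $x$ rather than in $\lambda$. The term where $\partial_{\lambda}$ lands on $R$ is exactly where the hypothesis $\partial_{\lambda}R\in L^{2}$ enters (via an explicit convolution with the Fourier transform of $(x+3\lambda^{2}t)^{-1}$), and the other term is handled by Fubini and Cauchy--Schwarz. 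If you want to salvage your approach, you would have to replace the contour shift by such a stationary-phase/integration-by-parts argument on $\mathbb{R}$; as written, the key step fails.
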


\begin{proof}
The fact that $\mathbf{\Phi \in }\mathfrak{S}_{2}$ is well known \cite%
{Deift79} and, since $R$ is in $L^{1}$, ${\Phi }\left( x,x\right) $ is
clearly continuous on $\mathbb{R}$. We only need to show that ${\Phi }\left(
x,x\right) $ decays fast enough to guarantee ${\Phi }\left( x,x\right) \in
L^{1}\left( b,\infty \right) $ for large $b$'s. For $t=0$ the statement is
obvious and it is enough to assume $t>0$. By parts%
\begin{eqnarray*}
2\sqrt{2\pi }\left( \mathcal{F}e^{8i\lambda ^{3}t}R\right) \left( 2x\right)
&=&\int R\left( \lambda /2\right) \dfrac{de^{i(\lambda x+\lambda ^{3}t)}}{%
i(x+3\lambda ^{2}t)} \\
&=&i\int e^{i(\lambda x+\lambda ^{3}t)}\partial _{\lambda }\dfrac{R\left(
\lambda /2\right) }{(x+3\lambda ^{2}t)}d\lambda \\
&=&-i\int e^{i(\lambda x+\lambda ^{3}t)}\dfrac{6\lambda tR\left( \lambda
/2\right) }{(x+3\lambda ^{2}t)^{2}}d\lambda \\
&& +i\int e^{i(\lambda x+\lambda ^{3}t)}\dfrac{\partial _{\lambda }R\left(
\lambda /2\right) }{x+3\lambda ^{2}t}d\lambda \\
&=:&I_{1}\left( 2x\right) +I_{2}\left( 2x\right) .
\end{eqnarray*}%
We have%
\begin{eqnarray*}
\left\Vert I_{1}\right\Vert _{L^{1}\left( b,\infty \right) } &\leq &6t\int
\left\{ \int_{b}^{\infty }\dfrac{dx}{(x+3\lambda ^{2}t)^{2}}\right\}
\left\vert \lambda R\left( \lambda /2\right) \right\vert d\lambda \\
&=&6t\int \dfrac{\left\vert \lambda R\left( \lambda /2\right) \right\vert }{%
b+3\lambda ^{2}t}d\lambda \leq 12t\left\Vert R\right\Vert _{L^{2}}\left\Vert 
\dfrac{\lambda }{b+3\lambda ^{2}t}\right\Vert _{L^{2}\left( d\lambda \right)
}.
\end{eqnarray*}%
Turn now to $I_{2}$. Observe that the following convolution type formula
holds\footnote{%
With the usual definition of the convolution%
\begin{equation*}
\left( f\ast g\right) \left( x\right) =\frac{1}{\sqrt{2\pi }}\int
f\left(s\right) g\left(x- s\right) ds.
\end{equation*}%
}:%
\begin{equation}
\left( \mathcal{F}f_{x}g\right) \left( x\right) =\left( \mathcal{F}f_{x}\ast 
\mathcal{F}g\right) \left( x\right)  \label{convol}
\end{equation}%
where the subscript $x$ indicates that $f$ depends on $x$. Rewriting 
\begin{equation*}
I_{2}\left( 2x\right) =i\int e^{i\lambda x}\left\{ \dfrac{1}{x+3\lambda ^{2}t%
}\right\} \left\{ e^{i\lambda ^{3}t}\partial _{\lambda }R\left( \lambda
/2\right) \right\} d\lambda
\end{equation*}%
and applying the convolution formula $\left( \ref{convol}\right) $ we have 
\begin{equation}
I_{2}\left( 2x\right) =i\sqrt{2\pi }F\left( \cdot ,x\right) \ast \mathcal{F}%
\left( e^{i\lambda ^{3}t}\partial _{\lambda }R\right)  \label{convol 1}
\end{equation}%
where 
\begin{eqnarray*}
F\left( s,x\right) &=&\frac{1}{\sqrt{2\pi }}\int \frac{e^{i\lambda s}}{%
x+3\lambda ^{2}t}d\lambda \\
&=&\sqrt{2\pi }\frac{e^{-\left\vert s\right\vert \left( x/3t\right) ^{1/2}}}{%
6t\left( x/3t\right) ^{1/2}}.
\end{eqnarray*}%
It follows from $\left( \ref{convol 1}\right) $ that 
\begin{eqnarray}
\left\Vert I_{2}\right\Vert _{L^{1}\left( b,\infty \right) } &\leq &\int
\left\Vert F\left( s,x\right) \mathcal{F}\left( e^{i\lambda ^{3}t}\partial
_{\lambda }R\right) \left( x-s\right) \right\Vert _{L^{1}\left( b,\infty
\right) }ds  \notag \\
&\leq &\int \left\Vert F\left( s,\cdot \right) \right\Vert _{L^{2}\left(
b,\infty \right) }\left\Vert \mathcal{F}\left( e^{i\lambda ^{3}t}\partial
_{\lambda }R\right) \right\Vert _{L^{2}\left( b-s,\infty \right) }ds  \notag
\\
&\leq &\left\Vert \mathcal{F}\left( e^{i\lambda ^{3}t}\partial _{\lambda
}R\right) \right\Vert _{L^{2}}\int \left\Vert F\left( s,\cdot \right)
\right\Vert _{L^{2}\left( b,\infty \right) }ds  \notag \\
&=&\left\Vert \partial _{\lambda }R\right\Vert _{L^{2}}\int \left\Vert
F\left( s,\cdot \right) \right\Vert _{L^{2}\left( b,\infty \right) }ds.
\label{estim}
\end{eqnarray}%
The norm $\left\Vert F\left( s,\cdot \right) \right\Vert _{L^{2}\left(
b,\infty \right) }$ on the right hand side of $\left( \ref{estim}\right) $
can be explicitly evaluated:%
\begin{eqnarray*}
\left\Vert F\left( s,\cdot \right) \right\Vert _{L^{2}\left( b,\infty
\right) }&=&\frac{\sqrt{2\pi }}{6t}\left( \int_{b}^{\infty }\frac{%
e^{-2\left\vert s\right\vert \left( x/3t\right) ^{1/2}}}{x/3t}dx\right)
^{1/2} \\
&=&\sqrt{\frac{\pi }{6t}}\left( \int_{b}^{\infty }\frac{e^{-2\left\vert
s\right\vert \left( x/3t\right) ^{1/2}}}{x}dx\right) ^{1/2} \\
&=&\sqrt{\frac{\pi }{3t}}\left( \int_{2\sqrt{\frac{b}{3t}}\left\vert
s\right\vert }^{\infty }\frac{e^{-x}}{x}dx\right) ^{1/2},
\end{eqnarray*}%
which immediately implies that $\left\Vert F\left( s,\cdot \right)
\right\Vert _{L^{2}\left( b,\infty \right) }$ is continuous with respect to $%
s\in \mathbb{R}\backslash \left\{ 0\right\} $ and 
\begin{equation}
\left\Vert F\left( s,\cdot \right) \right\Vert _{L^{2}\left( b,\infty
\right) }=o\left( e^{-\sqrt{\frac{b}{3t}}\left\vert s\right\vert }\right)
,\: s\rightarrow \pm \infty .  \label{large s}
\end{equation}%
Around $s=0$ (denoting $\alpha :=2\sqrt{\frac{b}{3t}}\left\vert s\right\vert 
$) 
\begin{equation*}
\int_{\alpha }^{\infty }\frac{e^{-x}}{x}dx=\frac{e^{-\alpha }}{\alpha }%
-\int_{\alpha }^{\infty }\frac{e^{-x}}{x^{2}}dx=O\left( 1/\alpha \right)
=O\left( 1/\left\vert s\right\vert \right) ,\: s\rightarrow 0,
\end{equation*}%
and therefore%
\begin{equation}
\left\Vert F\left( s,\cdot \right) \right\Vert _{L^{2}\left( b,\infty
\right) }=O\left( 1/\left\vert s\right\vert ^{1/2}\right) ,\: s\rightarrow 0.
\label{small s}
\end{equation}%
$\left( \ref{large s}\right) $ and $\left( \ref{small s}\right) $ imply that 
$\int \left\Vert F\left( s,\cdot \right) \right\Vert _{L^{2}\left( b,\infty
\right) }ds$ is finite and the lemma is proven since $\left\Vert \partial
_{\lambda }R\right\Vert _{L^{2}}$ is also finite.
\end{proof}

%-------------------Section 7: Hankel integral operator--------------------- 

\section{A Hankel Integral Operator}

In this section we introduce and study a Hankel integral operator
particularly important in the context of the IST.

%------------------------------------Definition 7.1------------------------

\begin{definition}
\label{March type}Let $\mu $ be a non-negative finite measure on $\mathbb{R}%
_{+}$ and $\phi $ be an $L^{\infty }$ function. We call an operator $\mathbb{%
M}:L_{+}^{2}\rightarrow L_{+}^{2}$ a Marchenko type operator associated with 
$\left( \mu ,\phi \right) $ if%
\begin{equation}
\mathbb{M}=\mathbb{M}_{1}+\mathbb{M}_{2}  \label{K}
\end{equation}%
where $\mathbb{M}_{1}$ is the integral operator with the kernel 
\begin{equation}
M_{1}\left( x,y\right) =\int_{\mathbb{R}_{+}}e^{-\alpha \left( x+y\right)
}d\mu \left( \alpha \right)  \label{K1}
\end{equation}%
and ($\chi :=\chi _{+}$) 
\begin{equation}
\mathbb{M}_{2}=\chi \mathcal{F}\phi \mathcal{F}.  \label{K2}
\end{equation}%
Here $\phi $ and $\chi $ are the operators of multiplication by the
functions $\phi $ and $\chi $ respectively.
\end{definition}

The Marchenko operator $\mathbb{M}_{x,t}$ defined by $\left( \ref{ClassM'}%
\right)-\eqref{classM} $ is Marchenko type as it can be represented by $%
\left( \ref{K}\right) -\left( \ref{K2}\right) $ with%
\begin{eqnarray*}
d\mu \left( \alpha \right) &=&\dsum_{n=1}^{N}c_{n}^{2}e^{-2\alpha x+8\alpha
^{3}t}\delta \left( \alpha -\kappa _{n}\right) d\alpha , \\
\phi \left( \lambda \right) &=&e^{2i\lambda x+8i\lambda ^{3}t}R(\lambda ),
\end{eqnarray*}%
where $\delta $ denotes the Dirac delta function.

The operator $\mathbb{M}$ is clearly a Hankel operator. In this section we
are concerned with two main questions: when is $\mathbb{M}$ a trace class
operator (or at least when is $\limfunc{Det}\left( I+\mathbb{M}\right) $
well-defined) and when is $I+\mathbb{M}$ boundedly invertible?

Introduce yet another two parametric family ( $z\in \mathbb{C}$ and $t\geq 0$
are parameters) of integral operators 
\begin{equation}
\left( \mathbb{G}_{z,t}f\right) \left( x\right) :=\int_{\mathbb{R_{+}}}
G_{z,t}\left( x,y\right) f\left( y\right) dy,  \label{G-oper}
\end{equation}%
acting in $L_{+}^{2}$ with the kernel $G_{z,t}\left( x,y\right) $ defined by 
\begin{equation}
G_{z,t}\left( x,y\right) :=\int_{\Gamma }e^{i\lambda \left( x+y\right)
}g_{z,t}\left( \lambda \right) \frac{d\lambda }{2\pi }  \label{Mkern}
\end{equation}%
where $\Gamma $ is as in the Figure \ref{fig1} and

\begin{figure}[ht]
\begin{tikzpicture}[scale=0.6]
% draw axes
\draw[help lines] (-7.5,0) -- (7.5,0)
			(0,-0.3) -- (0,4.5);
% draw nodes
\draw (0,0) node[below left]{$0$}
			(-0.05,1.5) -- + (0.1,0) node[right]{$ih_0$};
			
%\begin{scope}[decoration={markings,mark= at position 0.5 with {\arrow{stealth}}}]
\draw[thick] (150:8) -- (150:4.2) -- (30:4.2) -- node[above]{$\Gamma$}(30:8); 
\draw[dashed] (150:4) -- (0,0) -- (30:4);
\draw (1,0) arc (0:30:1);
\draw (1.3,0.5) node[right]{$\pi/6$};
 %\end{scope}

\end{tikzpicture}
\caption{Contour of integration $\Gamma $ in equation \eqref{Mkern}, where $%
h_{0}=\left\vert \inf \limfunc{Spec} (H)\right\vert ^{1/2}.$}
\label{fig1}
\end{figure}
\begin{equation}
g_{z,t}(\lambda ):=e^{2i\lambda z}e^{8i\lambda ^{3}t}G(\lambda )  \label{g}
\end{equation}%
with some function $G$ specified in the proposition below. The operator $%
\mathbb{G}_{z,t}$ is a Hankel operator having some important properties
which we summarize in the following statement.

%------------------------------Proposition 7.2--------------------------

\begin{proposition}
\label{Properties of G} Let $\mathbb{G}_{z,t}:L_{+}^{2}\rightarrow L_{+}^{2}$
be defined by $\left( \ref{G-oper}\right) $-$\left( \ref{Mkern}\right) $
with some $G$ analytic in $\mathbb{C}_{+}\diagdown \left[ 0,ia\right] ,a\geq
0$, subject to

\begin{enumerate}
\item \renewcommand{\theenumi}{\roman{enumi}}

\item (symmetry) 
\begin{equation}
G\left( -\overline{\lambda }\right) =\overline{G\left( \lambda \right) }
\label{symmetry}
\end{equation}

\item (decay)%
\begin{equation*}
\left\vert G\left( \lambda \right) \right\vert \rightarrow 0,\: \left\vert
\lambda \right\vert \rightarrow \infty ,\: 0<\arg \lambda <\pi
\end{equation*}

\item (boundary values on the real line)%
\begin{equation*}
G\left( \lambda +i0\right) \in L^{\infty }
\end{equation*}

\item (boundary behavior on the imaginary line) 
\begin{equation}
d\rho \left( \alpha \right) :=\frac{1}{\pi }\func{Im}G\left( +0+i\alpha
\right) d\alpha  \label{drou}
\end{equation}%
defines a non-negative finite on $\left[ 0,a\right] $ measure. I.e.%
\begin{equation*}
d\rho \left( \alpha \right) \geq 0\text{ and }\int_{0}^{a}d\rho \left(
\alpha \right) <\infty .
\end{equation*}
\end{enumerate}

Then

\begin{enumerate}
\item \label{pr7.2.1} $\mathbb{G}_{z,t}$ is a Marchenko type operator
(Definition \ref{March type}) associated with $\left( \mu ,\phi \right) $
given by%
\begin{equation*}
d\mu \left( \alpha \right) =e^{-2\alpha z+8\alpha ^{3}t}d\rho \left( \alpha
\right) \text{ and }\phi \mathbb{=}g_{z,t}.
\end{equation*}%
Moreover

\item \label{pr7.2.2} $\mathbb{G}_{z,t}$ is selfadjoint for any $z\in 
\mathbb{R}$ and $t\geq 0$

\item \label{pr7.2.3} $\mathbb{G}_{z,t}\in \mathfrak{S}_{1}$ for any $z\in 
\mathbb{C}$ and $t>0$ and%
\begin{equation*}
\left\Vert \mathbb{G}_{z,t}\right\Vert _{\mathfrak{S}_{1}}\leq \frac{1}{4\pi 
}\left\Vert \frac{g_{z,t}(\lambda )}{\func{Im}\lambda }\right\Vert _{\mathrm{%
L}^{1}(\Gamma )}
\end{equation*}

\item \label{pr7.2.4} $\mathbb{G}_{z,t}$ is entire and $\left( I+\mathbb{G}%
_{z,t}\right) ^{-1}$ is a meromorphic operator valued function in $z$ on the
entire complex plane for any $t>0$.
\end{enumerate}
\end{proposition}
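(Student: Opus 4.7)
The plan is to establish the four properties in order, with the first (the Marchenko-type decomposition) doing most of the structural work.

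For part (\ref{pr7.2.1}), I would deform the contour $\Gamma$ down onto the real axis, picking up a clockwise contour around the branch cut $[0,ia]$. The decay of $G$ from condition (ii), together with the fact that for $t>0$ the cubic exponential $e^{8i\lambda^{3}t}$ decays super-exponentially on the rays of $\Gamma$ at angles $\pi/6$ and $5\pi/6$ (since $i\lambda^{3}=-r^{3}$ there), justifies the deformation. The real-line piece is exactly the kernel of $\chi\mathcal{F}g_{z,t}\mathcal{F}$, which is $\mathbb{M}_{2}$. Using the symmetry (i), the jump of $G$ across the cut at $\lambda=+0+i\alpha$ is $2i\operatorname{Im}G(+0+i\alpha)$, which by (iv) equals $2\pi i\,d\rho/d\alpha$. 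Substituting $\lambda=i\alpha$ into $e^{i\lambda(x+y)}g_{z,t}(\lambda)$ produces the factor $e^{-\alpha(x+y)}e^{-2\alpha z+8\alpha^{3}t}$, so the cut contribution is precisely $\mathbb{M}_{1}$ with $d\mu(\alpha)=e^{-2\alpha z+8\alpha^{3}t}d\rho(\alpha)$.

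For part (\ref{pr7.2.2}), I would argue from the decomposition in (\ref{pr7.2.1}): for real $z$ and $t\geq 0$, the weight $e^{-2\alpha z+8\alpha^{3}t}$ is real and non-negative, so $d\mu$ is a positive measure and $\mathbb{M}_{1}$ has a real symmetric kernel, hence is selfadjoint. For $\mathbb{M}_{2}$, the symmetry (i) restricted to real $\lambda$ gives $G(-\lambda)=\overline{G(\lambda)}$, and for real $z,t$ also $e^{2i(-\lambda)z+8i(-\lambda)^{3}t}=\overline{e^{2i\lambda z+8i\lambda^{3}t}}$, hence $g_{z,t}(-\lambda)=\overline{g_{z,t}(\lambda)}$, which is exactly the symmetry required for the Hankel operator $\chi\mathcal{F}g_{z,t}\mathcal{F}$ to be selfadjoint on $L^{2}_{+}$. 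For part (\ref{pr7.2.3}), I would apply Lemma \ref{Trace} directly with $\phi=g_{z,t}$: the Lemma yields the stated bound once one verifies $g_{z,t}/\operatorname{Im}\lambda\in L^{1}(\Gamma)$. This is where $t>0$ is essential: on the rays of $\Gamma$, $|e^{8i\lambda^{3}t}|=e^{-8r^{3}t\sin(3\arg\lambda)}=e^{-8r^{3}t}$, which for any $z\in\mathbb{C}$ dominates the at-worst exponential growth $|e^{2i\lambda z}|\leq e^{2r|z|}$ and the bounded factor $G$; on the bounded middle segment of $\Gamma$, $\operatorname{Im}\lambda\geq h_{0}>0$ keeps the integrand tame. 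Finiteness of the $L^{1}(\Gamma)$ norm follows.

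For part (\ref{pr7.2.4}), the kernel $G_{z,t}(x,y)$ depends on $z$ only through the entire factor $e^{2i\lambda z}$ inside $g_{z,t}$, so $z\mapsto\mathbb{G}_{z,t}$ is entire with values in $\mathfrak{S}_{1}$ (the Lemma \ref{Trace} bound is locally uniform in $z$ on compact sets, by the same cubic-dominance argument). Hence $z\mapsto\det(I+\mathbb{G}_{z,t})$ is an entire scalar function, and the analytic Fredholm theorem gives that $(I+\mathbb{G}_{z,t})^{-1}$ is meromorphic on $\mathbb{C}$ \emph{provided} this determinant is not identically zero. To exclude that, I would let $z\to+\infty$ along the real axis: since $|e^{2i\lambda z}|=e^{-2z\operatorname{Im}\lambda}\to 0$ uniformly on $\Gamma\cap\{\operatorname{Im}\lambda\geq\varepsilon\}$, and the cubic factor continues to dominate near the ends of $\Gamma$, the trace-norm bound shows $\|\mathbb{G}_{z,t}\|_{\mathfrak{S}_{1}}\to 0$, whence $\det(I+\mathbb{G}_{z,t})\to 1$, ruling out the identically-zero case.

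The main obstacle I anticipate is the rigorous execution of the contour deformation in part (\ref{pr7.2.1}): one must justify closing down onto the cut (using the decay of the cubic and the bound from (ii) to kill any contributions at infinity) and correctly account for the orientation around the cut so that the measure produced coincides with the $d\rho$ of (iv). Everything else is then essentially a bookkeeping application of Lemma \ref{Trace} and the analytic Fredholm theorem, leveraging the strong decay that $t>0$ supplies on the rays of $\Gamma$.
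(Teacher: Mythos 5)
Your overall strategy coincides with the paper's: contour deformation for Part (\ref{pr7.2.1}), symmetry of the kernel for Part (\ref{pr7.2.2}), Lemma \ref{Trace} for Part (\ref{pr7.2.3}), and analytic Fredholm theory plus invertibility at one value of $z$ for Part (\ref{pr7.2.4}) (the paper fixes a concrete finite $z_{0}>0$ via a quadratic-form estimate rather than sending $z\rightarrow +\infty $, but the two arguments are interchangeable). The one place where your sketch has a genuine gap is precisely the step you flag as the main obstacle: the justification of the deformation in Part (\ref{pr7.2.1}). The mechanism you propose --- the decay of $e^{8i\lambda ^{3}t}$ on the rays together with the bound from (ii) --- does not suffice. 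On the circular arc $\left\{ Ne^{i\theta }:0<\theta <\pi /6\right\} $ joining the real axis to the ray, the modulus of the cubic factor is $e^{-8N^{3}t\sin 3\theta }$, which is close to $1$ for $\theta $ near $0$, while the arc has length of order $N$; and condition (ii) gives $\left\vert G\right\vert \rightarrow 0$ with no rate. Moreover, on the real line $g_{z,t}$ is merely bounded, so the ``real-line piece of the kernel'' $\int e^{i\lambda \left( x+y\right) }g_{z,t}\left( \lambda \right) d\lambda $ is not an absolutely convergent integral, and the identification with $\mathbb{M}_{2}=\chi \mathcal{F}g_{z,t}\mathcal{F}$ cannot be made at the level of kernels alone.

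The paper resolves this by carrying out the deformation at the operator level: it truncates the contour at radius $N$, shows that the truncated real-line piece converges to $\chi \mathcal{F}g_{z,t}\mathcal{F}$ in the strong operator topology via Plancherel, that the tail of the ray tends to $0$ in Hilbert--Schmidt norm by Lemma \ref{Trace}, and --- the genuinely nonobvious step --- that the arc contribution tends to $0$ weakly along a subsequence $N_{k}\rightarrow \infty $. That last step uses that for $f,\varphi \in L_{+}^{2}$ the product $F=\left( \mathcal{F}f_{+}\right) \left( \mathcal{F}\varphi _{+}\right) $ lies in $H_{+}^{1}$, so its Hardy--Littlewood maximal function $F^{\ast }$ is in $L^{1}$, which is what beats the factor $N$ coming from the arc length. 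You would need to supply an argument of this kind (or an equivalent weak-convergence substitute); as written, your deformation would not close. The remaining parts of your proposal are correct and essentially identical to the paper's.
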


\begin{proof}
Due to $\left( \ref{symmetry}\right) $ 
%the function $\left( \ref{Mfunc}\right) $ is real and hence 
the kernel $\left( \ref{Mkern}\right) $ is real %and 
for real $z$ and symmetric. The operator $\mathbb{G}_{z,t}$ is therefore
selfadjoint for real $z$ and Part (\ref{pr7.2.2}) is proven. To prove the
representation $\left( \ref{K}\right) -\left( \ref{K2}\right) $ one merely
needs to deform the contour $\mathbb{\Gamma }$ to the real line. The only
issue is to make sure that the corresponding integral operators converge
strongly. Denoting ($\Gamma ^{+}:=\mathbb{\Gamma \cap }\left\{ \limfunc{Re}%
z>0\right\} $)%
\begin{equation*}
G_{z,t}\left( x\right) :=\func{Re}\int_{\Gamma ^{+}}e^{i\lambda
x}g_{z,t}\left( \lambda \right) \frac{d\lambda }{\pi },
\end{equation*}%
we have 
\begin{eqnarray}
G_{z,t}\left( x\right) &=&G_{z,t}^{\left( 1\right) }\left( x\right)
+G_{z,t}^{\left( 2\right) }\left( x\right) +G_{z,t}^{\left( 3\right) }\left(
x\right) +G_{z,t}^{\left( 4\right) }\left( x\right)  \label{ReM} \\
G_{z,t}^{\left( k\right) }\left( x\right) &:=&\frac{1}{\pi }\func{Re}\int_{%
\mathbb{\gamma }_{k}}e^{i\lambda x}G_{z,t}\left( \lambda \right) \frac{%
d\lambda }{\pi },\: k=1,2,3,4.  \notag
\end{eqnarray}%
where 
\begin{equation*}
\begin{array}{ll}
\mathbb{\gamma }_{1}=\left\{ i\alpha +0:0<\alpha <a\right\} , & \quad\mathbb{%
\gamma }_{2}=\left( 0,N\right) , \\ 
\mathbb{\gamma }_{3}=\left\{ Ne^{i\theta },0<\theta <\pi /6\right\} , & \quad%
\mathbb{\gamma }_{4}=\left\{ re^{i\pi /6}:r>N\right\}%
\end{array}%
\end{equation*}
as shown in Figure \ref{fig2} 
\begin{figure}[th]
\begin{tikzpicture}
% draw axes
\draw[gray,very thin] (-0.8,0) -- (6,0)
			(0,-0.3) -- (0,2.8);
% draw nodes
\draw (0,0) node[below left]{$0$}
			(0,2) node[left]{$ia$}
			(-0.05,2) -- (0.05,2)
			(3.5,0) node[below]{$N$}
			(3.5,-0.05) -- (3.5,0.05);
\begin{scope}[decoration={markings,mark= at position 0.5 with {\arrow{stealth}}}]
\draw[thick,postaction={decorate}] (0,2) -- node[left]{$\gamma_1$} (0,0); 
\draw[thick,postaction={decorate}] (0,0) -- node[below right]{$\gamma_2$} (3.5,0);
\draw[thick,postaction={decorate}] (3.5,0) arc(0:30:3.5); 
\draw (15:3.5) node[right]{$\gamma_3$};
\draw[thick,postaction={decorate}] (30:3.5) -- node[above left]{$\gamma_4$} (30:6);
 \end{scope}
\draw[dashed] (0,0) -- (30:3.5);
\draw (0:1) arc(0:30:1);
\draw (15:1) node[right] {$\pi/6$}; 
\end{tikzpicture}
\caption{{}}
\label{fig2}
\end{figure}

The representation $\left( \ref{ReM}\right) $ leads to%
\begin{equation*}
\mathbb{G}_{z,t}=\mathbb{G}_{z,t}^{\left( 1\right) }+\mathbb{G}%
_{z,t}^{\left( 2\right) }+\mathbb{G}_{z,t}^{\left( 3\right) }+\mathbb{G}%
_{z,t}^{\left( 4\right) }
\end{equation*}%
where $\mathbb{G}_{z,t}^{\left( k\right) }$ are the integral operators
defined by ($f_{+}=f\chi _{+}$) 
\begin{equation*}
\mathbb{G}_{z,t}^{\left( k\right) }f\left( x\right) =\int G_{z,t}^{\left(
k\right) }\left( x+y\right) f_{+}\left( y\right) dy.
\end{equation*}%
The arbitrary $N>0$ will be taken to infinity. For $G_{z,t}^{\left( 1\right)
}\left( x\right) $ we get%
\begin{eqnarray*}
G_{z,t}^{\left( 1\right) }\left( x\right) &=&\frac{1}{\pi }\lim_{\varepsilon
\rightarrow +0}\func{Re}\int_{a}^{0}e^{i\left( \varepsilon +i\alpha \right)
x}g_{z,t}\left( \varepsilon +i\alpha \right) id\alpha \\
&=&\frac{1}{\pi }\lim_{\varepsilon \rightarrow +0}\int_{0}^{a}e^{-\alpha x}%
\func{Im}g_{z,t}\left( \varepsilon +i\alpha \right) d\alpha \\
&=&\frac{1}{\pi }\lim_{\varepsilon \rightarrow +0}\int_{0}^{a}e^{-\alpha
x}e^{-2\alpha z+8\alpha ^{3}t}\func{Im}G\left( \varepsilon +i\alpha \right)
d\alpha \\
&=&\int_{0}^{a}e^{-\alpha x}e^{-2\alpha z+8\alpha ^{3}t}d\rho \left( \alpha
\right) ,
\end{eqnarray*}%
and $G_{z,t}^{\left( 1\right) }$ hence produces $\mathbb{M}_{1}$ in the
decomposition $\left( \ref{K}\right) $. For $\mathbb{G}_{z,t}^{\left(
2\right) }$: ($f_{+}=\chi_{+}f$)%
\begin{eqnarray*}
\mathbb{G}_{z,t}^{\left( 2\right) }f\left( x\right) &=&\int G_{z,t}^{\left(
2\right) }\left( x+y\right) f_{+}\left( y\right) dy \\
&=&\frac{1}{\sqrt{2\pi }}\int_{-N}^{N}e^{i\lambda x}\left( g_{z,t}\mathcal{F}%
f_{+}\right) \left( \lambda \right) d\lambda .
\end{eqnarray*}%
Since $\mathcal{F}f_{+}\in L^{2}$ and $g_{z,t}\in L^{\infty }$ we have $%
g_{z,t}\mathcal{F}f_{+}\in L^{2}$, and hence 
\begin{eqnarray*}
\left\Vert \mathbb{G}_{z,t}f-\mathbb{G}_{z,t}^{\left( 2\right) }f\right\Vert
_{L^{2}} &=&\left\Vert \frac{1}{\sqrt{2\pi }}\int \left( 1-\chi _{\left(
-N,N\right) }\left( \lambda \right) \right) e^{i\lambda x}g_{z,t}\left(
\lambda \right) (\mathcal{F}f_{+})\left( \lambda \right) d\lambda
\right\Vert _{L^{2}} \\
&\lesssim &\left\Vert \int e^{i\lambda x}\left( 1-\chi _{\left( -N,N\right)
}\left( \lambda \right) \right) g_{z,t}\left( \lambda \right) \left( 
\mathcal{F}f_{+}\right) \left( \lambda \right) d\lambda \right\Vert _{L^{2}}
\\
&=&\left\Vert \mathcal{F}\left( 1-\chi _{\left( -N,N\right) }\right)
g_{z,t}\left( \mathcal{F}f_{+}\right) \right\Vert _{L^{2}} \\
&=&\left\Vert \left( 1-\chi _{\left( -N,N\right) }\right) g_{z,t}\mathcal{F}%
\left( f_{+}\right) \right\Vert _{L^{2}}\rightarrow 0,\: N\rightarrow \infty
.
\end{eqnarray*}%
Therefore $\mathbb{G}_{z,t}^{\left( 2\right) }\rightarrow \mathbb{G}_{z,t}$
in the strong operator topology for any real $z$ and $t$. It follows from
Lemma \ref{Trace} that $\left\Vert G_{z,t}^{(4)}\right\Vert _{\mathfrak{S}%
_{2}}\rightarrow 0$ when $N\rightarrow \infty $. It remains to show that $%
\left\Vert \mathbb{G}_{z,t}^{\left( 3\right) }f\right\Vert
_{L^{2}}\rightarrow 0,\: N\rightarrow \infty $. Since $\mathbb{G}_{z,t}$ and 
$\mathbb{G}_{z,t}^{\left( 1\right) }$ are independent of $N$ and, as we have
already proven, $\mathbb{G}_{z,t}^{\left( 2\right) }+\mathbb{G}%
_{z,t}^{\left( 4\right) }$ strongly converges to $\mathbb{G}_{z,t}^{+}$, it
follows from the decomposition $\left( \ref{ReM}\right) $ that $\mathbb{G}%
_{z,t}^{\left( 3\right) }$ must also converge strongly. It is sufficient to
show that $\mathbb{G}_{z,t}^{\left( 3\right) }$ converges weakly to $0$ as $%
N\rightarrow \infty $ as this will force the strong convergence of $\mathbb{G%
}_{z,t}^{\left( 3\right) }$ to $0$. We have ($\xi =e^{i\theta }$)%
\begin{eqnarray}
\left( \mathbb{G}_{z,t}^{\left( 3\right) }f\right) \left( x\right) &=&\func{%
Re}\int_{0}^{\infty }\left( \int_{0}^{\pi /6}e^{iN\xi \left( x+y\right)
}g_{z,t}\left( N\xi \right) iN\xi \frac{d\theta }{\pi }\right) f\left(
y\right) dy  \label{M3} \\
&=&-N\func{Im}\int_{0}^{\pi /6}\left( \int_{0}^{\infty }e^{iN\xi y}f\left(
y\right) dy\right) e^{iN\xi x}g_{z,t}\left( N\xi \right) \xi \frac{d\theta }{%
\pi }  \notag \\
&=&-\sqrt{2}N\func{Im}\int_{0}^{\pi /6}e^{iN\xi x}\left( g_{z,t}\mathcal{F}%
f_{+}\right) \left( N\xi \right) \xi \frac{d\theta }{\sqrt{\pi }}  \notag
\end{eqnarray}%
Changing the order of integration in $\left( \ref{M3}\right) $ is justified
as Fubini's theorem clearly applies. Consider $\left\langle \mathbb{G}%
_{z,t}^{\left( 3\right) }f,\varphi \right\rangle $ with arbitrary $f,\varphi
\in L^{2}$ which, without loss of generality, can be taken real. Hence one
has 
\begin{eqnarray}
\left\vert \left\langle \mathbb{G}_{z,t}^{\left( 3\right) }f,\varphi
\right\rangle \right\vert &=&\sqrt{2}N\left\vert \int_{0}^{\pi /6}\func{Im}%
\left( g_{z,t}\right) \left( \mathcal{F}\left( f_{+}\right) \mathcal{F}%
\left( \varphi _{+}\right) \right) \left( N\xi \right) \xi d\theta
\right\vert  \label{M3'} \\
&\lesssim &N\int_{0}^{\pi /6}\left\vert \left( g_{z,t}\mathcal{F}f_{+}%
\mathcal{F}\varphi _{+}\right) \left( N\xi \right) \right\vert d\theta 
\notag
\end{eqnarray}%
Since $\mathcal{F}f_{+},\mathcal{F}\varphi _{+}$ are both in $H_{+}^{2}$
their product $F:=\left( \mathcal{F}f_{+}\right) \left( \mathcal{F}\varphi
_{+}\right) $ is in $H_{+}^{1}$ and so by the Hardy-Littlewood theorem the
maximal function $F^{\ast }\left( N\right) :=\sup\limits_{\alpha
>0}\left\vert F\left( Ne^{i\alpha} \right) \right\vert $ is in $L^{1}.$
Therefore it follows from $\left( \ref{M3'}\right) $ that 
\begin{eqnarray*}
\left\vert \left\langle \mathbb{G}_{z,t}^{\left( 3\right) }f,\varphi
\right\rangle \right\vert &\lesssim &NF^{\ast }\left( N\right) \int_{0}^{\pi
/6}\left\vert g_{z,t}\left( N\xi \right) \right\vert d\theta \\
&\lesssim &NF^{\ast }\left( N\right) \int_{0}^{\pi /6}\exp \left\{ \frac{4}{%
\pi }N\left( \left\vert z\right\vert -12N^{2}t\right) \theta \right\}
\left\vert g\left( N\xi \right) \right\vert d\theta \\
&\lesssim &\frac{F^{\ast }\left( N\right) }{12N^{2}t-\left\vert z\right\vert 
}.
\end{eqnarray*}

The latter implies that there is a sequence $\left\{ N_{k}\right\}
\rightarrow \infty $ such that $\left\langle \mathbb{G}_{z,t}^{\left(
3\right) }f,\varphi \right\rangle \rightarrow 0$ and Part (\ref{pr7.2.1}) is
finally proven.

Part (\ref{pr7.2.3}), immediately follows from Lemma \ref{Trace}. It only
remains to show Part (\ref{pr7.2.4}). To this end consider for any $f$ and $%
g $ from $L^{2}_{+}$ the function $\left\langle \mathbb{G}%
_{z,t}f,g\right\rangle $ which is clearly differentiable in $z$ for any
complex $z$ and $t>0 $ and hence $\mathbb{G}_{z,t}$ is an entire
operator-valued function. We now show that for any $t>0$ the operator $I+%
\mathbb{G}_{z,t}$ is boundedly invertible for at least one real $z$.
Consider 
\begin{eqnarray*}
\left\langle \mathbb{G}_{z,t}f,f\right\rangle &=&\int_{\mathbb{R}_{+}}%
\overline{f\left( x\right) }dx\int_{\mathbb{R}_{+}}dy\;G_{z,t}\left(
x,y\right) f\left( y\right) \\
&=&\int_{\Gamma }g_{z,t}(\lambda )\left\{ \frac{1}{\sqrt{2\pi }}%
\int_{0}^{\infty }e^{i\lambda x}\overline{f\left( x\right) }dx\right\}
\left\{ \frac{1}{\sqrt{2\pi }}\int_{0}^{\infty }e^{i\lambda y}f\left(
y\right) dy\right\} d\lambda \\
&=&\int_{\Gamma }g_{z,t}(\lambda )\mathcal{F}\overline{f_{+}}\left( \lambda
\right) \mathcal{F}f_{+}\left( \lambda \right) d\lambda =\int_{\Gamma
}g_{z,t}(\lambda )F\left( \lambda \right) d\lambda ,
\end{eqnarray*}%
where $F:=\mathcal{F}\overline{f_{+}}\mathcal{F}f_{+}$. Notice that since $%
\mathcal{F}\overline{f_{+}},\mathcal{F}f_{+}\in H_{+}^{2}$, the function $%
F\in H_{+}^{1}$ and therefore by $\left( \ref{Hp}\right) $ 
\begin{eqnarray*}
\left\vert F\left( \lambda \right) \right\vert &\leq &\frac{\left\Vert
F\right\Vert _{H_{+}^{1}}}{\func{Im}\lambda }\leq \frac{\left\Vert \mathcal{F%
}f_{+}\right\Vert _{H_{+}^{2}}\left\Vert \mathcal{F}\overline{f_{+}}%
\right\Vert _{H_{+}^{2}}}{\func{Im}\lambda } \\
&=&\frac{\left\Vert f\right\Vert_{L^2_{+}} \left\Vert \overline{f}%
\right\Vert_{L^2_{+}} }{\func{Im}\lambda }=\frac{\left\Vert f\right\Vert
^{2}_{L^2_{+}}}{\func{Im}\lambda }.
\end{eqnarray*}%
Hence for $z>0$%
\begin{eqnarray}
\left\vert \left\langle \mathbb{G}_{z,t}f,f\right\rangle \right\vert &\leq
&\int_{\Gamma }\left\vert g_{z,t}(\lambda )\right\vert \left\vert F\left(
\lambda \right) \right\vert \left\vert d\lambda \right\vert  \notag \\
&\leq &\left\Vert f\right\Vert_{L^2_{+}}^{2}\int_{\Gamma }\left\vert
g_{z,t}(\lambda )\right\vert \frac{\left\vert d\lambda \right\vert }{\func{Im%
}\lambda }  \notag \\
&\leq &\left\Vert f\right\Vert_{L^2_{+}}^{2}\sup_{\lambda \in \Gamma
}\left\vert e^{2i\lambda z}\right\vert \int_{\Gamma }\left\vert
g_{0,t}(\lambda )\right\vert \frac{\left\vert d\lambda \right\vert }{\func{Im%
}\lambda }  \notag \\
&=&\left\Vert f\right\Vert_{L^2_{+}}^{2}\sup_{\lambda \in \Gamma }e^{-2z%
\func{Im}\lambda }\int_{\Gamma }\left\vert g_{0,t}(\lambda )\right\vert 
\frac{\left\vert d\lambda \right\vert }{\func{Im}\lambda }  \notag \\
&\leq &\left\Vert f\right\Vert_{L^2_{+}}^{2}e^{-2zh}\int_{\Gamma }\left\vert
g_{0,t}(\lambda )\right\vert \frac{\left\vert d\lambda \right\vert }{\func{Im%
}\lambda },  \label{ineq2}
\end{eqnarray}%
where $h:=\inf_{\lambda \in \Gamma }\func{Im}\lambda $. If we choose (the
integral $\int_{\Gamma }\left\vert g_{0,t}(\lambda )\right\vert \frac{%
\left\vert d\lambda \right\vert }{\func{Im}\lambda }$ is apparently finite
for any $t\geq 0$) 
\begin{equation}
z=z_{0}:=\frac{1}{2h}\left\vert \ln 2\int_{\Gamma }\left\vert
g_{0,t}(\lambda )\right\vert \frac{\left\vert d\lambda \right\vert }{\func{Im%
}\lambda }\right\vert .  \label{z0}
\end{equation}%
then %$\varepsilon $%
\begin{equation*}
\left\vert \left\langle \mathbb{G}_{z,t}f,f\right\rangle \right\vert \leq 
\frac{1}{2}\left\Vert f\right\Vert_{L^2_{+}}^{2}
\end{equation*}%
and it now follows from $\left( \ref{ineq2}\right) $ that 
\begin{equation*}
\left\langle \left( I+\mathbb{G}_{z_{0},t}\right) f,f\right\rangle \geq 
\frac{1}{2}\left\Vert f\right\Vert_{L^2_{+}}^{2}
\end{equation*}%
which shows that for any $t\geq 0$ there is $z_{0}$ found by $\left( \ref{z0}%
\right) $ such that $I+\mathbb{G}_{z_{0},t}$ is invertible. Therefore by 
\cite{Steinberg69} $\left( I+\mathbb{G}_{z,t}\right) ^{-1}$ is a meromorphic
(operator) function of $z$ on the whole complex plane for any $t>0$.%
\footnote{%
for $t=0$ the operator $\mathbb{M}_{z,0}$ need not of course be analytic.}
The proposition is proven.
\end{proof}

%--------------------------Lemma 7.3------------------------------

\begin{lemma}
\label{Invert}Let $\mathbb{M}$ defined by $\left( \ref{K}\right) -\left( \ref%
{K2}\right) $ be compact. Then the operator $I+\mathbb{M}$ is boundedly
invertible if at least one of the following conditions holds:

\begin{enumerate}
\item $\mu \left( S\right) >0$ for some set $S$ of non-uniqueness of an $%
H_{+}^{2}$ function

\item $\left\vert \phi \left( \lambda \right) \right\vert <1$ a.e. on a set $%
S\subset \mathbb{R}$ of positive Lebesgue measure.
\end{enumerate}
\end{lemma}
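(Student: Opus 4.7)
The plan is to invoke the Fredholm alternative: since $\mathbb{M}$ is compact, $I+\mathbb{M}$ is Fredholm of index zero, so bounded invertibility reduces to injectivity. Accordingly, suppose $(I+\mathbb{M})f=0$ for some nonzero $f\in L^{2}_{+}$ and aim for a contradiction. Let $F=\mathcal{F}f\in H^{2}_{+}$. The analytic extension of $F$ to $\mathbb{C}_{+}$ satisfies $F(i\alpha)=\frac{1}{\sqrt{2\pi}}\int_{0}^{\infty}e^{-\alpha x}f(x)\,dx$ for $\alpha>0$, so values of $F$ on the positive imaginary axis are precisely the Laplace transform of $f$. This identification is what couples the spectral piece $\mathbb{M}_{1}$ (whose measure $\mu$ lives on $\mathbb{R}_{+}$) to boundary behavior of $H^{2}_{+}$ functions.

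The core computation is to express $\langle(I+\mathbb{M})f,f\rangle$ explicitly. Plugging the kernel of $\mathbb{M}_{1}$ into the inner product and applying Fubini yields
\begin{equation*}
\langle\mathbb{M}_{1}f,f\rangle=\int_{\mathbb{R}_{+}}\Bigl|\int_{0}^{\infty}e^{-\alpha x}f(x)\,dx\Bigr|^{2}d\mu(\alpha)=2\pi\int_{\mathbb{R}_{+}}|F(i\alpha)|^{2}d\mu(\alpha)\geq 0.
\end{equation*}
For $\mathbb{M}_{2}$, using its Hankel kernel $\tilde{\phi}(x+y)$ with $\tilde{\phi}=(2\pi)^{-1/2}\mathcal{F}\phi$, together with $\int_{0}^{\infty}e^{i\lambda y}f(y)\,dy=\sqrt{2\pi}F(\lambda)$ and $\int_{0}^{\infty}e^{i\lambda x}\overline{f(x)}\,dx=\sqrt{2\pi}\,\overline{F(-\lambda)}$, one obtains
\begin{equation*}
\langle\mathbb{M}_{2}f,f\rangle=\int\phi(\lambda)F(\lambda)\overline{F(-\lambda)}\,d\lambda.
\end{equation*}
Under the standing bound $|\phi|\leq 1$ a.e.\ (the natural normalization in the Marchenko-type setting), two applications of Cauchy-Schwarz give $|\langle\mathbb{M}_{2}f,f\rangle|^{2}\leq\int|\phi|^{2}|F|^{2}\,d\lambda\cdot\|F\|^{2}\leq\|f\|^{4}$. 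Taking real parts of $\langle(I+\mathbb{M})f,f\rangle=0$ then yields the crucial chain
\begin{equation*}
\|f\|^{2}+2\pi\int|F(i\alpha)|^{2}d\mu(\alpha)=-\operatorname{Re}\langle\mathbb{M}_{2}f,f\rangle\leq|\langle\mathbb{M}_{2}f,f\rangle|\leq\|f\|^{2}.
\end{equation*}

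This forces simultaneously $\int|F(i\alpha)|^{2}d\mu=0$ (so $F(i\alpha)=0$ for $\mu$-a.e.\ $\alpha$) and equality throughout Cauchy-Schwarz, the latter yielding $\int(1-|\phi|^{2})|F|^{2}d\lambda=0$, i.e.\ $F=0$ a.e.\ on $\{|\phi|<1\}$. Under the first hypothesis, $F$ vanishes on a subset of $S$ of positive $\mu$-measure, and the uniqueness property intended by the condition forces $F\equiv 0$. Under the second, $F$ vanishes on a subset of $\mathbb{R}$ of positive Lebesgue measure, and the classical boundary uniqueness for $H^{2}_{+}$ (a nonzero $H^{2}_{+}$ function cannot vanish on a positive-measure subset of $\mathbb{R}$) again forces $F\equiv 0$. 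Either way $f=0$, the desired contradiction. The delicate point is the identity for $\langle\mathbb{M}_{2}f,f\rangle$: the asymmetric appearance of $F(-\lambda)$ rules out a direct sign argument, so all information must be squeezed out of the tightness of Cauchy-Schwarz, which becomes tight precisely when $\mu$ charges a uniqueness set or $|\phi|$ dips below $1$ on a set of positive Lebesgue measure.
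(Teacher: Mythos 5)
Your proof is correct and follows essentially the same route as the paper's: reduce to injectivity via compactness, evaluate the quadratic form to get $\|f\|^{2}+2\pi\int_{\mathbb{R}_{+}}|\mathcal{F}f(i\alpha)|^{2}\,d\mu(\alpha)+\operatorname{Re}\int\phi\,\mathcal{F}f\,\mathcal{F}\overline{f}\,d\lambda=0$, then combine Cauchy--Schwarz with the boundary/interior uniqueness of $H_{+}^{2}$ functions (and the tacit normalization $|\phi|\le 1$, which the paper also relies on) to force $f=0$. The only difference is organizational: you extract both consequences of the inequality chain simultaneously, whereas the paper splits into the two hypotheses before drawing the contradiction.
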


\begin{proof}
The proof is standard. Since the operator $\mathbb{M}$ is compact, the point 
$-1$ may only be its eigenvalue. One needs to show that it is not the case.
Consider the homogeneous equation 
\begin{equation}
f+\mathbb{M}f=0.  \label{March}
\end{equation}%
Denoting $f_{+}=\chi_{+}f$, equation $\left( \ref{March}\right) $ therefore
implies%
\begin{equation*}
\left\langle f_{+},f_{+}\right\rangle +\left\langle \mathbb{M}%
f_{+},f_{+}\right\rangle =0
\end{equation*}%
or explicitly 
\begin{equation*}
\int \left\vert f_{+}\left( x\right) \right\vert ^{2}dx+\int_{\mathbb{R}%
_{+}}\left\vert \int e^{-\alpha x}f_{+}\left( x\right) dx\right\vert
^{2}d\mu \left( \alpha \right) +\int \phi \left( \lambda \right) \mathcal{F}%
f_{+}\left( \lambda \right) \mathcal{F}\overline{f_{+}}\left( \lambda
\right) d\lambda =0.
\end{equation*}%
Assuming that $\left\Vert f_{+}\right\Vert _{L^{2}}=1$ and observing that $%
\int e^{-\alpha x}f_{+}\left( x\right) dx=\sqrt{2\pi }\mathcal{F}f_{+}\left(
i\alpha \right) $ the last equation takes the form%
\begin{equation*}
1+2\pi \int_{\mathbb{R}_{+}}\left\vert \mathcal{F}f_{+}\left( i\alpha
\right) \right\vert ^{2}d\mu \left( \alpha \right) +\func{Re}\int \phi
\left( \lambda \right) \mathcal{F}f_{+}\left( \lambda \right) \mathcal{F}%
\overline{f_{+}}\left( \lambda \right) d\lambda =0.
\end{equation*}%
It follows from this equation that $\left(\left\Vert \mathcal{F}\overline{%
f_{+}}\right\Vert _{L^{2}}=1\right)$%
\begin{eqnarray*}
&&-2\pi \int_{\mathbb{R}_{+}}\left\vert \mathcal{F}f_{+}\left( i\alpha
\right) \right\vert ^{2}d\mu \left( \alpha \right) \\
&=&1+\func{Re}\int \phi \left( \lambda \right) \mathcal{F}f_{+}\left(
\lambda \right) \mathcal{F}\overline{f_{+}}\left( \lambda \right) d\lambda \\
&\geq &1-\left\vert \int \phi \left( \lambda \right) \mathcal{F}f_{+}\left(
\lambda \right) \mathcal{F}\overline{f_{+}}\left( \lambda \right) d\lambda
\right\vert \\
&\geq &1-\int \left\vert \phi \left( \lambda \right) \right\vert \left\vert 
\mathcal{F}f_{+}\left( \lambda \right) \right\vert \left\vert \mathcal{F}%
\overline{f_{+}}\left( \lambda \right) \right\vert d\lambda \\
&\geq &1-\left\Vert \phi \mathcal{F}f_{+}\right\Vert _{L^{2}}.
\end{eqnarray*}%
Thus 
\begin{equation}
\left\Vert \phi \mathcal{F}f_{+}\right\Vert _{L^{2}}\geq 1+2\pi \int_{%
\mathbb{R}_{+}}\left\vert \mathcal{F}f_{+}\left( i\alpha \right) \right\vert
^{2}d\mu \left( \alpha \right) .  \label{ineq}
\end{equation}%
If $\ \left\vert \phi \left( \lambda \right) \right\vert <1$ a.e. on a set $%
S $ of positive Lebesgue measure then $\left\Vert \phi \mathcal{F}%
f_{+}\right\Vert _{L^{2}}<\left\Vert \mathcal{F}f_{+}\right\Vert _{L^{2}}=1$
(as $\mathcal{F}f_{+}$ is in $H_{+}^{2}$ and hence cannot vanish on $S$) and 
$\left( \ref{ineq}\right) $ implies the obvious contradiction%
\begin{equation*}
0\leq \int_{\mathbb{R}_{+}}\left\vert \mathcal{F}f_{+}\left( i\alpha \right)
\right\vert ^{2}d\mu \left( \alpha \right) <0.
\end{equation*}%
Therefore $f_{+}=0$ and $I+\mathbb{M}$ is boundedly invertible. Assume now
that Condition 1 is satisfied. Without loss of generality we may assume $%
\left\vert \phi \left( \lambda \right) \right\vert =1$ a.e. on $\mathbb{R}$.
Then $\left\Vert \phi \mathcal{F}f_{+}\right\Vert _{L^{2}}=\left\Vert 
\mathcal{F}f_{+}\right\Vert _{L^{2}}=1$ and $\left( \ref{ineq}\right) $
implies 
\begin{equation*}
\int_{\mathbb{R}_{+}}\left\vert \mathcal{F}f_{+}\left( i\alpha \right)
\right\vert ^{2}d\mu \left( \alpha \right) \leq 0
\end{equation*}%
forcing $\mathcal{F}f_{+}\left( i\alpha \right) =0$ for every $\alpha \in S$%
. But $\mathcal{F}f_{+}$ is an $H_{+}^{2}$ function and hence cannot vanish
on $S$. The lemma is proven.
\end{proof}

%------------------------Section 8: Main results--------------------

\section{Main Results}

In this section we present our main results (given in two statements) which
will appear as simple consequences of the considerations above.

%-----------------------Subsection 8.1: Properties of Marchenko operator--------------

\subsection{The properties of the Marchenko operator}

The following statement relates the properties of a Marchenko type operator
with the properties of the underlying potential.

%---------------------Theorem 8.1-----------------

\begin{theorem}
\label{Structure}Let $V$ be a real function such that%
\begin{equation*}
\inf \limfunc{Spec}\left( -\partial _{x}^{2}+V\right) =-h_{0}>-\infty
\end{equation*}%
and that $V$ admits a decomposition%
\begin{equation}
V=V_{-}+V_{+}\ \left( V_{\pm }=\chi _{\pm }V\right)  \label{decomp}
\end{equation}%
where $V_{-}$ is arbitrary (subject to Hypothesis \ref{H1} at $-\infty $)
and $V_{+}\in L_{+}^{1}\left( \left\langle x\right\rangle dx\right) $.
Consider %in the notation of Lemma \ref{Fragmentation} 
a two parametric family of Marchenko type (Definition \ref{March type})
operators $\mathbb{M}_{z,t}$ ($z\in \mathbb{C}$ and $t\geq 0$) associated
with the data $\left( \rho _{z,t},R_{z,t}\right) $ where $d\rho _{z,t}\left(
\alpha \right) :=e^{-2\lambda z+8\alpha ^{3}t}d\rho \left( \alpha \right) $,
and $d\rho \left( \alpha \right) $ is defined by%
\begin{equation}
d\rho \left( \alpha \right) =\left\{ \dsum_{n=1}^{N}\left( c_{n}^{+}\right)
^{2}\delta \left( \alpha -\kappa _{n}^{+}\right) +\left( \dfrac{T_{+}^{2}%
\func{Im}R_{-}}{\left\vert 1-R_{-}L_{+}\right\vert ^{2}}\right) \left(
+0+i\alpha \right) \right\} d\alpha ,  \label{d_rou}
\end{equation}%
and 
\begin{equation*}
R_{z,t}\left( \lambda \right) :=e^{i(2\lambda z+8\lambda ^{3}t)}R\left(
\lambda \right) .
\end{equation*}%
The measure $\rho $ is non-negative, finite, supported on $\left[ 0,h_{0}%
\right] $ and independent of the choice of the splitting point in $\left( %
\ref{decomp}\right) $. The operator $\mathbb{M}_{z,t}$ therefore is well
defined and has the following properties:

\begin{enumerate}
\item \label{Prop1}$\mathbb{M}_{z,t}$ is selfadjoint and bounded for any
real $z$ and $t\geq 0$.

\item \label{Prop2}If $V_{+}\in L_{+}^{2}\left( e^{\delta x^{1/2}}dx\right) $
for some $\delta >0$ then $\mathbb{M}_{z,t}$ is a trace type real analytic
operator-valued function in $z$ for any $t>0$ and $\left( I+\mathbb{M}%
_{z,t}\right) ^{-1}$ is real meromorphic\footnote{%
I.e. a ratio of two real analytic functions.
\par
{}}.

\item \label{Prop3}If $V_{+}=0$ then $\mathbb{M}_{z,t}$ is an entire in $z$
operator-valued function of trace class for any $t>0$ and $\left( I+\mathbb{M%
}_{z,t}\right) ^{-1}$ is meromorphic in $z$.

\item \label{Prop4}If $\limfunc{Spec}_{\limfunc{ac}}\left( -\partial
_{x}^{2}+V\right) $ has a nonempty component of multiplicity two then $I+%
\mathbb{M}_{z,t}$ is boundedly invertible for any real $z$ and $t\geq 0.$
\end{enumerate}
\end{theorem}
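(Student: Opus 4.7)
The plan is to apply Proposition \ref{Properties of G} and Lemmas \ref{Invert}, \ref{Trace}, \ref{Trace type} to the full right reflection coefficient $R$ built in Lemma \ref{Fragmentation}. The guiding identity is the splitting
\begin{equation*}
R \;=\; R_{+} \;+\; G,\qquad G \;:=\; \frac{T_{+}^{2} R_{-}}{1-R_{-}L_{+}},
\end{equation*}
in which $G$ is analytic in $\mathbb{C}_{+}$ with a possible cut only on $[0,ih_{0}]$ (since each of $T_{+},R_{-},L_{+}$ is analytic there, with $R_{-}$ carrying the cut), while $R_{+}$ is meromorphic in $\mathbb{C}_{+}$ with simple poles at $\{i\kappa_{n}^{+}\}$ of residues $i(c_{n}^{+})^{2}$ by Proposition \ref{Prop_R}. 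The first preparatory step is to verify that $d\rho$ in \eqref{d_rou} is precisely the boundary measure of $R$ on the positive imaginary axis: the poles of $R_{+}$ contribute the atoms $(c_{n}^{+})^{2}\delta(\alpha-\kappa_{n}^{+})$, and $\tfrac{1}{\pi}\operatorname{Im}G(+0+i\alpha)$ yields the continuous density. Since all $-(\kappa_{n}^{+})^{2}$ and the support of the continuous part lie in $[-h_{0},0]$ (as $\inf\operatorname{Spec}(-\partial_{x}^{2}+V)=-h_{0}$) and the total mass is controlled by a spectral argument, $\rho$ is finite and supported on $[0,h_{0}]$; invariance under the splitting point follows from the second clause of Lemma \ref{Fragmentation}.

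Part (\ref{Prop1}) is essentially immediate: the symmetry $R(-\bar\lambda)=\overline{R(\lambda)}$ together with the reality of $\mu$ makes the kernel of $\mathbb{M}_{z,t}$ real symmetric for real $z$; $\mathbb{M}_{1}$ is bounded because $\mu$ is finite (a Cauchy--Schwarz estimate using $\int_{0}^{\infty}e^{-2\alpha x}dx=1/(2\alpha)$ and the fact that the singularity of $1/\alpha$ at $0$ is mild because $\rho(\{0\})=0$), and $\mathbb{M}_{2}=\chi\mathcal{F}R_{z,t}\mathcal{F}$ is bounded because $|R_{z,t}(\lambda)|=|R(\lambda)|\le 1$ on $\mathbb{R}$ by Lemma \ref{Fragmentation}. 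Part (\ref{Prop3}) is a direct application of Proposition \ref{Properties of G} to $G:=R_{-}$: when $V_{+}\equiv 0$ one has $T_{+}=1$, $L_{+}=R_{+}=0$, so $R=R_{-}$ satisfies all four hypotheses of that proposition (symmetry, decay, $L^{\infty}$ boundary values on $\mathbb{R}$, and a nonnegative finite boundary measure $d\rho$ on the imaginary axis), and clauses (\ref{pr7.2.3})--(\ref{pr7.2.4}) give entire trace-class dependence on $z$ with meromorphic $(I+\mathbb{M}_{z,t})^{-1}$ for every $t>0$. Part (\ref{Prop4}) follows from Lemma \ref{Invert}: multiplicity two of a nonempty component of $\operatorname{Spec}_{\mathrm{ac}}(H)$ forces $|R(\lambda)|<1$ a.e.\ on a set of positive Lebesgue measure (last clause of Lemma \ref{Fragmentation}), so the function $\phi=R_{z,t}$ in Definition \ref{March type} satisfies hypothesis (2) of Lemma \ref{Invert} and $I+\mathbb{M}_{z,t}$ is boundedly invertible.

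For Part (\ref{Prop2}) we split $\mathbb{M}_{z,t}=\mathbb{M}_{z,t}^{(a)}+\mathbb{M}_{z,t}^{(b)}$ with $\mathbb{M}_{z,t}^{(a)}$ built from $G$ and $\mathbb{M}_{z,t}^{(b)}$ the short-range Marchenko piece with kernel $(\mathcal{F}e^{2i\lambda z+8i\lambda^{3}t}R_{+})(x+y)$. The operator $\mathbb{M}_{z,t}^{(a)}$ is handled by Proposition \ref{Properties of G} applied to $G$ and delivers entire trace-class dependence on $z$ for every $t>0$. For $\mathbb{M}_{z,t}^{(b)}$, note that $V_{+}\in L_{+}^{2}(e^{\delta x^{1/2}}dx)$ trivially implies $V_{+}\in L^{2}(\langle x\rangle^{2}dx)\cap L^{1}(\langle x\rangle dx)$, so Lemma \ref{Lemma 1} gives $\partial_{\lambda}R_{+}\in L^{2}$ and Lemma \ref{Trace type} yields that $\mathbb{M}_{z,t}^{(b)}$ is a trace type operator in the sense of Definition \ref{Def Fred}. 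Real analyticity in $z$ of the short-range piece is the delicate point: the stretched-exponential decay of $V_{+}$ allows one to analytically continue the Jost function $y_{+}(x,\lambda)$ from \eqref{y+-}, and hence $R_{+}(\lambda)$ from \eqref{R1}, into a Gevrey-type tube around $\mathbb{R}$, enabling a contour deformation that shows $\mathbb{M}_{z,t}^{(b)}$ is real analytic in $z$ with uniform trace type bounds; meromorphy of the inverse then follows from the Steinberg theorem (\cite{Steinberg69}) exactly as in part (\ref{pr7.2.4}) of Proposition \ref{Properties of G}.

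The main obstacle is precisely this last real-analyticity argument in Part (\ref{Prop2}): the kernel of $\mathbb{M}_{z,t}^{(b)}$ involves $e^{2i\lambda z}$, which is only bounded on $\mathbb{R}$ for real $z$, so to gain a strip of analyticity in $z$ one must move the $\lambda$-integration off the real axis, which in turn requires quantitative analytic continuation of $R_{+}$ controlled by the decay hypothesis on $V_{+}$. The other parts, as sketched, are essentially bookkeeping on top of the machinery already assembled in the previous sections.
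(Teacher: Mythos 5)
Your overall architecture is the same as the paper's: split $\mathbb{M}_{z,t}=\mathbb{M}_{z,t}^{+}+\mathbb{G}_{z,t}$ along $R=R_{+}+G$, feed $G$ into Proposition \ref{Properties of G}, and dispatch Parts (\ref{Prop1}), (\ref{Prop3}), (\ref{Prop4}) via Lemmas \ref{Trace}, \ref{Trace type} and \ref{Invert} plus Steinberg's theorem. Those parts are essentially right. But there are two genuine gaps.

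First, the claim that ``the total mass is controlled by a spectral argument'' does not hold up, and you never verify non\-negativity of the continuous part of $\rho$. The natural approximants $\widetilde{\rho}$ (for $V\chi_{(-b,\infty)}$) are pure point measures $\sum\widetilde{c}_{n}^{\,2}\delta(\alpha-\widetilde{\kappa}_{n})$ whose total masses are not uniformly bounded in $b$ in any obvious way, so no soft limiting argument gives finiteness. The real danger is at $\alpha=0$: the density $\frac{1}{\pi}\operatorname{Im}G(+0+i\alpha)$ contains the factor $\bigl(T_{+}(i\alpha)/2i\alpha\bigr)^{2}$, which a priori blows up like $\alpha^{-2}$ at a generic point. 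The paper spends roughly a page here: it rewrites $G=G_{1}+G_{2}+G_{3}+G_{4}$ in terms of $m_{\pm}$, uses that $-(m_{-}+m_{+})^{-1}$ and $(1/m_{-}+1/m_{+})^{-1}$ are Herglotz (which also yields $\operatorname{Im}G(+0+i\alpha)\geq 0$ via $\operatorname{Im}m_{-}\geq 0$), and controls $2\alpha^{2}/(1+L_{+}(i\alpha))$ by iterating the Volterra equation \eqref{y+-} for $y_{+}(x,i\alpha)$. Separate treatment of the generic and exceptional cases at $\alpha=0$ is unavoidable. Your proof as written simply skips the hardest part of the theorem.

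Second, your mechanism for real analyticity in Part (\ref{Prop2}) would fail. The weight $e^{\delta x^{1/2}}$ gives only stretched\-/exponential decay of $V_{+}$, hence only Gevrey-2 (not analytic) regularity of $\mathcal{F}^{-1}V_{+}$ and of $R_{+}$; there is no tube around $\mathbb{R}$ into which $R_{+}$ continues, so no contour deformation in $\lambda$ is available. Worse, by \eqref{R1} a potential supported on $\mathbb{R}_{+}$ yields an $R_{+}$ that continues into $\mathbb{C}_{-}$, the wrong half plane for exploiting the decay of $e^{8i\lambda^{3}t}$ on $\Gamma$ — this is exactly the directional anisotropy the paper discusses, and it is why genuine exponential weight $e^{\delta x}$ is needed to get entirety of $\mathbb{M}_{z,t}^{+}$. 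The paper does not prove real analyticity of $\mathbb{M}_{z,t}^{+}$ at all; it imports it from Tarama's work \cite{Tarama04}, where it is obtained by a much subtler analysis of the smoothing effect of the cubic phase. You should either cite that result or accept that your sketch does not yield it.
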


\begin{proof}
Prove first that $\rho \left( \alpha \right) $ defined by $\left( \ref{d_rou}%
\right) $ is independent of the particular decomposition $\left( \ref{decomp}%
\right) $. Take $\widetilde{V}$ from the proof of Lemma \ref{Fragmentation}.
Since $\widetilde{V}\in L^{1}\left( \left\langle x\right\rangle dx\right) $
the measure $\widetilde{\rho }\left( \alpha \right) $ can be alternatively
computed by%
\begin{equation*}
d\widetilde{\rho }\left( \alpha \right) =\dsum_{n=1}^{\widetilde{N}}{%
\widetilde{c}_{n}}^{\ 2}\delta \left( \alpha -\widetilde{\kappa }_{n}\right)
d\alpha
\end{equation*}%
and hence it is independent of the splitting point in $\left( \ref{decomp}%
\right) $. On the other hand, 
\begin{equation}
d\widetilde{\rho }\left( \alpha \right) =\left\{ \dsum_{n=1}^{N}\left(
c_{n}^{+}\right) ^{2}\delta \left( \alpha -\kappa _{n}^{+}\right) +\left( 
\dfrac{T_{+}^{2}\func{Im}\widetilde{R}_{-}}{\left\vert 1-\widetilde{R}%
_{-}L_{+}\right\vert ^{2}}\right) \left( i\alpha +0+\right) \right\} d\alpha
.  \label{d_rou_tilda}
\end{equation}%
Note that $T_{+}$ and $L_{+}$ are real on the imaginary line 
\begin{eqnarray}
\left( \dfrac{T_{+}^{2}\func{Im}\widetilde{R}_{-}}{\left\vert 1-\widetilde{R}%
_{-}L_{+}\right\vert ^{2}}\right) \left( +0+i\alpha \right)
&=&T_{+}^{2}\left( +0+i\alpha \right) \func{Im}\left( \dfrac{\widetilde{R}%
_{-}-\left\vert \widetilde{R}_{-}\right\vert ^{2}\overline{L_{+}}}{%
\left\vert 1-\widetilde{R}_{-}L_{+}\right\vert ^{2}}\right) \left(
+0+i\alpha \right)  \notag \\
&=&T_{+}^{2}\left( +0+i\alpha \right) \func{Im}\left( \dfrac{\widetilde{R}%
_{-}}{1-\widetilde{R}_{-}L_{+}}\right) \left( +0+i\alpha \right)  \notag \\
&=&\func{Im}\left( \dfrac{\widetilde{R}_{-}}{1-\widetilde{R}_{-}L_{+}}%
T_{+}^{2}\right) \left( +0+i\alpha \right)  \notag \\
&=&\func{Im}\widetilde{G}\left( +0+i\alpha \right)  \label{x}
\end{eqnarray}%
where $\widetilde{G}$ is defined by $\left( \ref{Rtilda1}\right) $.
Inserting $\left( \ref{x}\right) $ into $\left( \ref{d_rou_tilda}\right) $
we have 
\begin{equation*}
d\widetilde{\rho }\left( \alpha \right) =\left\{ \dsum_{n=1}^{N}\left(
c_{n}^{+}\right) ^{2}\delta \left( \alpha -\kappa _{n}^{+}\right) +\func{Im}%
\widetilde{G}\left( +0+i\alpha \right) \right\} d\alpha .
\end{equation*}%
As proven in Lemma \ref{Fragmentation}, 
\begin{eqnarray*}
\lim_{b\rightarrow \infty }\func{Im}\widetilde{G}\left( +0+i\alpha \right)
d\alpha &=&\func{Im}G\left( +0+i\alpha +0\right) d\alpha \\
&=&\left( \dfrac{T_{+}^{2}\func{Im}R_{-}}{\left\vert 1-R_{-}L_{+}\right\vert
^{2}}\right) \left( +0+i\alpha \right) d\alpha
\end{eqnarray*}%
and therefore $\lim_{b\rightarrow \infty }d\widetilde{\rho }\left( \alpha
\right) =d\rho \left( \alpha \right) $. But each $\widetilde{\rho }$ is
independent of the split in $\left( \ref{decomp}\right) $ and thus $\rho $
doesn't depend on $\left( \ref{decomp}\right) $. We prove now that $\rho
\left( \alpha \right) $ is a non-negative finite measure on $\left[ 0,h_{0}%
\right] $. Non-negativity follows from $\left( \ref{x}\right) $ (with the
tilde dropped and using \eqref{R})%
\begin{align*}
\func{Im}G\left( +0+i\alpha \right) & =\func{Im}\left( \dfrac{R_{-}}{%
1-R_{-}L_{+}}T_{+}^{2}\right) \left( +0+i\alpha \right) \\
& =2\alpha \left( \dfrac{T_{+}^{2}}{\left\vert 1-R_{-}L_{+}\right\vert
^{2}\left\vert i\lambda +m_{-}\right\vert ^{2}}\right) \left( +0+i\alpha
\right) \func{Im}m_{-}\left( -\alpha ^{2}+i0+\right) \\
& \geq 0.
\end{align*}%
%
%
%
%
%
%
%
%
%
%where we have used the fact that $T_{+}$ and $L_{+}$ are real on the imaginary line. 
We show now that $\rho \left( \alpha \right) $ is a finite measure on $\left[
0,h_{0}\right] $. Choose, for simplicity, the splitting point so that $%
-\partial _{x}^{2}+V_{-}$ has only one bound state $-\kappa _{0}^{2}$ and
evaluate $\left( \ref{d_rou}\right) $ around $\alpha =\kappa _{0}\in \left[
0,h_{0}\right] $ and $\alpha =0$ separately. From $\left( \ref{R}\right) $
and \eqref{L+} one has 
\begin{eqnarray}
G &=&\dfrac{R_{-}}{1-R_{-}L_{+}}T_{+}^{2}  \notag \\
&=&\left( 1-\frac{i\lambda -m_{-}}{i\lambda +m_{-}}\frac{i\lambda -m_{+}}{%
i\lambda +m_{+}}\right) ^{-1}\frac{i\lambda -m_{-}}{i\lambda +m_{-}}T_{+}^{2}
\notag \\
&=&\left( -1+\frac{i\lambda +m_{+}}{m_{-}+m_{+}}\right) \frac{i\lambda +m_{+}%
}{2i\lambda }T_{+}^{2}  \notag \\
&=&\left( -1+\frac{i\lambda +m_{+}}{m_{-}+m_{+}}\right) \frac{2i\lambda }{%
i\lambda +m_{+}}g  \notag \\
&=&\frac{2i\lambda }{m_{-}+m_{+}}g-\frac{2i\lambda }{i\lambda +m_{+}}g,
\label{G}
\end{eqnarray}%
where 
\begin{equation*}
g:=\left( \frac{i\lambda +m_{+}}{2i\lambda }T_{+}\right) ^{2}.
\end{equation*}%
It follows from \eqref{L+} %\begin{equation}
%L_{+}\left( \lambda \right) =\dfrac{i\lambda -m_{+}\left( \lambda
%^{2}\right) }{i\lambda +m_{+}\left( \lambda ^{2}\right) }  \label{L2}
%\end{equation}%
that 
\begin{equation}
\frac{i\lambda +m_{+}\left( \lambda ^{2}\right) }{2i\lambda }=\left(
1+L_{+}\left( \lambda \right) \right) ^{-1}  \label{eq8.6}
\end{equation}%
and hence 
\begin{equation*}
g\left( i\alpha \right) =\left( \frac{T_{+}\left( i\alpha \right) }{%
1+L_{+}\left( i\alpha \right) }\right) ^{2}.
\end{equation*}%
Since $T_{+}$ and $L_{+}$ both have a simple pole at $\lambda =i\kappa _{0}$%
, the function $g$ has a removable singularity at $\lambda =i\kappa _{0}$.
Moreover, due to the symmetry, $T_{+}\left( i\alpha \right) $ and $%
L_{+}\left( i\alpha \right) $ are both real and hence $g\left( i\alpha
\right) \geq 0$ and bounded away from $\alpha =0$.

Note that since $m_{\pm }\left( z\right) $ and $i\sqrt{z}$ are both
Herglotz, i.e. $\mathbb{C}_{+}\rightarrow \mathbb{C}_{+}$, we immediately
conclude that $-\left( m_{-}\left( z\right) +m_{+}\left( z\right) \right)
^{-1}$ and $-\left( m_{+}\left( z\right) +i\sqrt{z}\right) ^{-1}$ are also
Herglotz and hence admit a Herglotz representation similar to $\left( \ref%
{Herglotz3.3}\right) $ with some non-negative finite measures $\mu _{1}$ and 
$\mu _{2}$ respectively computed by%
\begin{align*}
d\mu _{1}\left( s\right) &=-\frac{1}{\pi }\func{Im}\left( m_{-}\left(
s+i0+\right) +m_{+}\left( s+i0+\right) \right) ^{-1}ds. \\
d\mu _{2}\left( s\right) &=-\frac{1}{\pi }\func{Im}\left( m_{+}\left(
s+i0+\right) +i\sqrt{s}\right) ^{-1}ds.
\end{align*}%
Therefore it follows from $\left( \ref{drou}\right) $ and $\left( \ref{G}%
\right) $ that%
\begin{eqnarray*}
d\rho \left( \alpha \right) &=&\frac{1}{\pi }g\left( i\alpha \right) \left\{
-\func{Im}\left( m_{-}\left( -\alpha ^{2}+i0+\right) +m_{+}\left( -\alpha
^{2}+i0+\right) \right) ^{-1}\left( -2\alpha \right) \right. \\
&&\left. +\func{Im}\left( m_{+}\left( -\alpha ^{2}+i0+ \right) -\alpha
\right) ^{-1}\left( -2\alpha \right) \right\} d\alpha \\
&=&g\left( i\alpha \right) \left( d\mu _{1}\left( -\alpha ^{2}\right) -d\mu
_{2}\left( -\alpha ^{2}\right) \right) \\
&=&g\left( i\alpha \right) \left( -d\rho _{1}\left( \alpha \right) +d\rho
_{2}\left( \alpha \right) \right) ,
\end{eqnarray*}%
where $d\rho _{k}\left( \alpha \right) :=-d\mu _{k}\left( -\alpha
^{2}\right) $ $k=1,2$ and finite (non-negative) measures. Since, as already
proven, $g\left( i\alpha \right) $ is bounded away from $\alpha=0$ the
measure $\rho \left( \alpha \right) $ is real and finite on $\left[
\varepsilon ,h_{0}\right] $, $\varepsilon >0$. If $\alpha=0$ is an
exceptional point then $\lim\limits_{\alpha \rightarrow +0}L_{+}\left(
i\alpha \right) >-1$ and $\lim\limits_{\alpha \rightarrow +0}g\left( i\alpha
\right) $ is finite and $\rho \left( \alpha \right) $ is non-negative and
finite on $\left[ 0,h_{0}\right] $. It remains to show that $\rho \left(
\alpha \right) $ is finite on $\left[ 0,h_{0}\right] $ even if $%
\lim\limits_{\alpha \rightarrow +0}L_{+}\left( i\alpha \right) =-1$, i.e. $%
\alpha =0$ is a generic point. To this end we need to represent $G$
differently:%
\begin{eqnarray*}
G &=&2i\lambda \frac{\left( i\lambda -m_{-}\right) \left( i\lambda
+m_{+}\right) }{m_{-}+m_{+}}\left( \frac{T_{+}}{2i\lambda }\right) ^{2} \\
&=&2i\lambda \left\{ \frac{\left( i\lambda \right) ^{2}}{m_{-}+m_{+}}-\frac{%
m_{-}m_{+}}{m_{-}+m_{+}}-i\lambda +\frac{2i\lambda m_{+}}{m_{-}+m_{+}}%
\right\} \omega \\
&=&G_{1}+G_{2}+G_{3}+G_{4}\ ,\ \ \ \ \ \omega :=\left( \frac{T_{+}}{%
2i\lambda }\right) ^{2}.
\end{eqnarray*}%
If $\alpha =0$ is a generic point $\dfrac{T_{+}\left( i\alpha \right) }{%
\alpha }$ remains bounded as $\alpha \rightarrow +0$ and hence so does $%
\omega \left( i\alpha \right) $.

The term $G_{3}$ is trivial:%
\begin{equation*}
\frac{1}{\pi }\func{Im}G_{3}\left( +0+i\alpha \right) d\alpha =-\frac{2}{\pi 
}\alpha ^{2}\omega \left( i\alpha \right) d\alpha .
\end{equation*}%
Since $-\left( m_{-}+m_{+}\right) ^{-1}$ and $\dfrac{m_{-}m_{+}}{m_{-}+m_{+}}%
=\dfrac{1}{1/m_{-}+1/m_{+}}$ are Herglotz, by the same arguments as above,
one can easily conclude that the measures 
\begin{equation*}
\dfrac{1}{\pi }\func{Im}G_{k}\left( +0+i\alpha \right) d\alpha ,:k=1,2
\end{equation*}%
\ are finite on $\left[ 0,\varepsilon \right] $. The measure produced by 
\begin{equation*}
G_{4}=\frac{\left( 2i\lambda \right) ^{2}m_{+}}{m_{-}+m_{+}}\omega =\frac{%
2i\lambda }{m_{-}+m_{+}}\cdot 2i\lambda m_{+}\omega
\end{equation*}%
requires a bit more care. Since the factor$\dfrac{2i\lambda }{m_{-}+m_{+}}$
has already been analyzed above, one needs to make sure that $\alpha
m_{+}\left( -\alpha ^{2}\right) \omega \left( i\alpha \right) $ stays
bounded as $\alpha \rightarrow 0$. From $\left( \ref{eq8.6}\right) $ 
\begin{equation}
-\alpha m_{+}\left( -\alpha ^{2}\right) =-\alpha ^{2}+\frac{2\alpha ^{2}}{%
1+L_{+}\left( i\alpha \right) }.  \label{1'}
\end{equation}%
It follows from $\left( \ref{L}\right) $ and $\left( \ref{T}\right) $ that 
\begin{eqnarray*}
1+L_{+}\left( i\alpha \right) &=&1-\frac{\int_{0}^{\infty }e^{-2\alpha
x}V\left( x\right) y_{+}\left( x,i\alpha \right) dx}{2\alpha
+\int_{0}^{\infty }V\left( x\right) y_{+}\left( x,i\alpha \right) dx} \\
&=&\frac{2\alpha +\int_{0}^{\infty }V\left( x\right) \left( 1-e^{-2\alpha
x}\right) y_{+}\left( x,i\alpha \right) dx}{2\alpha +\int_{0}^{\infty
}V\left( x\right) y_{+}\left( x,i\alpha \right) dx} \\
&=&2\alpha \frac{1+\int_{0}^{\infty }V\left( x\right) \dfrac{1-e^{-2\alpha x}%
}{2\alpha }y_{+}\left( x,i\alpha \right) dx}{2\alpha +\int_{0}^{\infty
}V\left( x\right) y_{+}\left( x,i\alpha \right) dx}
\end{eqnarray*}%
where $y_{+}\left( x,i\alpha \right) $ solves the integral equation $\left( %
\ref{y+-}\right) $ 
\begin{equation}
y_{+}\left( x,i\alpha \right) =1+\int_{x}^{\infty }\frac{1-e^{-2\alpha (s-x)}%
}{2\alpha }V\left( s\right) y_{+}\left( s,i\alpha \right) ds.  \label{y+}
\end{equation}%
Hence for the second term on the right hand side\ of $\left( \ref{1'}\right) 
$ we have%
\begin{align}
\frac{2\alpha ^{2}}{1+L_{+}\left( i\alpha \right) }& =\dfrac{\alpha \left(
2\alpha +\dint_{0}^{\infty }V\left( x\right) y_{+}\left( x,i\alpha \right)
dx\right) }{1+\dint_{0}^{\infty }\dfrac{1-e^{-2\alpha x}}{2\alpha }V\left(
x\right) y_{+}\left( x,i\alpha \right) dx}  \label{ineq3} \\
& \leq \dfrac{\alpha \left( 2\alpha +\dint_{0}^{\infty }\left\vert V\left(
x\right) \right\vert \left\vert y_{+}\left( x,i\alpha \right) \right\vert
dx\right) }{\left\vert 1-\dint_{0}^{\infty }\dfrac{1-e^{-2\alpha x}}{2\alpha 
}\left\vert V\left( x\right) \right\vert \left\vert y_{+}\left( x,i\alpha
\right) \right\vert dx\right\vert }.  \notag
\end{align}%
From $\left( \ref{y+}\right) $ one has%
\begin{eqnarray*}
\left\vert y_{+}\left( x,i\alpha \right) \right\vert &\leq
&1+\int_{x}^{\infty }\frac{1-e^{-2\alpha (s-x)}}{2\alpha }\left\vert V\left(
s\right) \right\vert \left\vert y_{+}\left( s,i\alpha \right) \right\vert ds
\\
&\leq &1+\int_{x}^{\infty }\left( s-x\right) \left\vert V\left( s\right)
\right\vert \left\vert y_{+}\left( s,i\alpha \right) \right\vert ds \\
&\leq &1+\int_{x}^{\infty }s\left\vert V\left( s\right) \right\vert
\left\vert y_{+}\left( s,i\alpha \right) \right\vert ds.
\end{eqnarray*}%
Iterating this inequality immediately produces%
\begin{eqnarray*}
\left\vert y_{+}\left( x,i\alpha \right) \right\vert &\leq
&\dsum\limits_{n\geq 0}\left( \int_{0}^{\infty }x\left\vert V\left( x\right)
\right\vert dx\right) ^{n} \\
&\leq &\dsum\limits_{n\geq 0}\left\Vert V\right\Vert _{L_{+}^{1}\left(
\left\langle x\right\rangle dx\right) }^{n}=\frac{1}{1-\left\Vert
V\right\Vert _{L_{+}^{1}\left( \left\langle x\right\rangle dx\right) }}.
\end{eqnarray*}%
Denoting $\varepsilon =\left\Vert V\right\Vert _{L_{+}^{1}\left(
\left\langle x\right\rangle dx\right) }$ and taking it small enough, the
inequality $\left( \ref{ineq3}\right) $ then yields%
\begin{eqnarray*}
\frac{2\alpha ^{2}}{1+L_{+}\left( i\alpha \right) } &\leq &\frac{\alpha }{1-%
\dfrac{1}{1-\varepsilon }\dint_{0}^{\infty }x\left\vert V\left( x\right)
\right\vert dx}\left( 2\alpha +\frac{1}{1-\varepsilon }\int_{0}^{\infty
}\left\vert V\left( x\right) \right\vert dx\right) \\
&\leq &\frac{2\alpha +\varepsilon }{1-2\varepsilon }\alpha \rightarrow
0,:\alpha \rightarrow 0.
\end{eqnarray*}%
Thus the measure $d\rho $ is finite.

We now prove the Properties (\ref{Prop1})-(\ref{Prop4}). We start by
splitting 
\begin{equation}
\mathbb{M}_{z,t}=\mathbb{M}_{z,t}^{+}+\mathbb{G}_{z,t}  \label{M_split}
\end{equation}%
where $\mathbb{M}_{z,t}^{+}$ is the Marchenko type operator introduced in
Definition \ref{March type} with%
\begin{eqnarray*}
d\rho \left( \alpha \right) &=&e^{-2\alpha z+8\alpha
^{3}t}\dsum_{n=1}^{N}\left( c_{n}^{+}\right) ^{2}\delta \left( \alpha
-\kappa _{n}^{+}\right) d\alpha , \\
\phi \left( \lambda \right) &=&e^{2i\lambda x+8i\lambda ^{3}t}R_{+}(\lambda
),
\end{eqnarray*}%
and $\mathbb{G}_{z,t}$ defined by $\left( \ref{G-oper}\right) -\left( \ref{g}%
\right) $ with $G=\dfrac{R_{-}}{1-R_{-}L_{+}}T_{+}^{2}$. Let us show that $G$
satisfies all the conditions of Proposition \ref{Properties of G}. Indeed
each function $R_{-},L_{+},T_{+}$ is clearly subject to conditions (i)-(iv)
of Proposition \ref{Properties of G}. The function $G$ is then immediately
subject to conditions (i)-(ii) of Proposition \ref{Properties of G}. The
existence of boundary values of $G$ on the real line is also obvious. It is
also in $L^{\infty }$ since 
\begin{equation*}
\left\vert G\left( \lambda +i0\right) \right\vert =\left\vert R\left(
\lambda \right) -R_{+}\left( \lambda \right) \right\vert \leq 2.
\end{equation*}%
Thus $G$ satisfies condition (iii) of Proposition \ref{Properties of G}.
Condition (iv) was verified above. By Proposition \ref{Properties of G} (\ref%
{pr7.2.2}) $\mathbb{G}_{z,t}$ is then selfadjoint for real $z$'s for any $%
t>0 $. As was shown above $\mathbb{M}_{z,t}^{+}$ is Marchenko operator
corresponding to $V_{+}$ and its selfadjointness for real $z$'s is a
well-known fact. Thus $\mathbb{M}_{z,t}$ is selfadjoint for real $z$ and
positive $t$. The boundedness of $\mathbb{M}_{z,t}$ follows from the
finiteness of $d\rho $ (see, e.g. \cite{Peller2003}). This proves Property %
\ref{Prop1}. Turn now to Property \ref{Prop2}. Under condition $V_{+}\in
L_{+}^{2}\left( e^{\delta x^{1/2}}dx\right) $ for some $\delta >0$, the
operator $\mathbb{M}_{z,t}^{+}$ is real analytic in $z$ for any $t>0$ \cite%
{Tarama04} and by Proposition \ref{Properties of G} (\ref{pr7.2.4}) $\mathbb{%
G}_{z,t}$ is entire. Therefore, $\mathbb{M}_{z,t}$ is real analytic and \cite%
{Steinberg69} $\left( I+\mathbb{M}_{z,t}\right) ^{-1}$ is real meromorphic
in $z$ for any $t>0$. By Lemma \ref{Trace}, $\mathbb{G}_{z,t}\in \mathfrak{S}%
_{1}$ and by Lemma \ref{Trace type}, $\mathbb{M}_{z,t}^{+}$ is a trace type
operator. Thus Property \ref{Prop2} is proven. Note that if $V_{+}=0$ then $%
\mathbb{M}_{z,t}^{+}=0$ and Property \ref{Prop3} immediately follows.

Assume now that $-\partial _{x}^{2}+V$ has a nontrivial a.c. component $S$
of multiplicity 2 and hence, by Lemma \ref{Fragmentation}, $\left\vert
R\left( \lambda \right) \right\vert <1$ a.e. on $S$. By Lemma \ref{Invert}, $%
I+\mathbb{M}_{z,t}$ is boundedly invertible for any real $z$ and $t\geq 0$
and Property \ref{Prop4} is proven.
\end{proof}

%--------------------Remark 8.2-----------------------

\begin{remark}
If $V\left( x\right) =-h^{2}\chi _{-}\left( x\right) $ then 
\begin{equation*}
R\left( \lambda \right) =-\left( \frac{h}{\lambda +\sqrt{\lambda ^{2}+h^{2}}}%
\right) ^{2}
\end{equation*}
and $\rho $ is absolutely continuous, supported on $\left[ 0,h\right] $ and 
\begin{equation*}
d\rho \left( \alpha \right) =\frac{2}{\pi h^{2}}\alpha \sqrt{h^{2}-\alpha
^{2}}d\alpha .
\end{equation*}
\end{remark}

%-------------------------------Remark 8.3---------------------

\begin{remark}
\label{Known cases}Marchenko operators for the cases of $V$'s such that $%
V_{-}\left( x\right) -p\left( x\right) \in L_{-}^{1}\left( \left\langle
x\right\rangle dx\right) $ with either $p\left( x\right) =\limfunc{const}$
or periodic and $V_{+}\left( x\right) \in L_{+}^{1}\left( \left\langle
x\right\rangle dx\right) $ (so-called step-like potentials) have been
considered by many authors in the connection with inverse problems (see,
e.g., \cite{AK01}) and IST for KdV ( \cite{Hruslov76}, \cite{Cohen1984}, 
\cite{Venak86}, \cite{KK94}, \cite{Teschl2}, etc.), scattering quantities
being typically introduced differently from ours.
\end{remark}

%------------------Remark 8.4-----------------------

\begin{remark}
If $V_{+}\in L_{+}^{1}\left( e^{\delta x}dx\right) $ for some $\delta >0$
then $\mathbb{M}_{z,t}$ is entire and $\left( I+\mathbb{M}_{z,t}\right)
^{-1} $ is meromorphic in $z$ on $\mathbb{C}$ for any $t>0.$
\end{remark}

%---------------------Remark 8.5--------------------------

\begin{remark}
Loosely speaking, the measure $\rho $ carries over the information about the
negative spectrum and $R$ does it for the positive spectrum but they need
not be independent.
\end{remark}

%---------------------Subsection 8.2:Determinant solution to the Cauchy pb for KdV

\subsection{Determinant solution to the Cauchy problem for the KdV equation}

%-----------------------------------Theorem 8.6------------------------------

Our main result is given in the following theorem.

\begin{theorem}
\label{MainThm}Let real $V_{0}$ be such that $\limfunc{Spec}\left( -\partial
_{x}^{2}+V_{0}\right) $ is bounded from below and has a non-empty twofold
a.c. spectrum. Assume that%
\begin{eqnarray}
\chi _{-}V_{0} &\in &L_{-}^{2}\left( e^{-\delta _{-}\left\vert x\right\vert
}dx\right)  \label{decay at -infty} \\
\chi _{+}V_{0} &\in &L_{+}^{2}\left( e^{\delta _{+}x^{1/2}}dx\right)
\label{decay at +fty}
\end{eqnarray}%
for some $\delta _{\pm }>0$. Then the Cauchy problem for the KdV equation%
\begin{equation}
\left\{ 
\begin{array}{c}
\partial _{t}V-6V\partial _{x}V+\partial _{x}^{3}V=0 \\ 
V\left( x,0\right) =V_{0}\left( x\right)%
\end{array}%
\right.  \label{kdv}
\end{equation}%
has a unique global natural solution $V\left( x,t\right) $ given by 
\begin{equation}
V\left( x,t\right) =-2\partial _{x}^{2}\log \limfunc{Det}\left( I+\mathbb{M}%
_{x,t}\right)  \label{Det}
\end{equation}%
where $\mathbb{M}_{x,t}$ is defined in Theorem \ref{Structure}, $V\left(
x,t\right) $ being a real analytic function in $x$ for any $t>0$. Moreover,
for any $a>-\infty $ 
\begin{equation}
\lim_{t\rightarrow +0}\left\Vert V\left( \cdot ,t\right) -V_{0}\right\Vert
_{L^{2}\left( a,\infty \right) }=0.  \label{IC}
\end{equation}
\end{theorem}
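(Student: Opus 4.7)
The plan is to construct $V(x,t)$ as the limit of classical IST solutions for compactly supported approximations of $V_0$, identify this limit as the Marchenko--determinant formula (\ref{Det}), and verify the natural-solution requirements of Definition~\ref{def2}.

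First, take $V_{0,n} := V_0\chi_{[-n,n]}$. Each $V_{0,n}$ is compactly supported, hence in the Faddeev class, and $V_{0,n}\to V_0$ in $L^2_{\mathrm{loc}}$. Classical IST yields a unique classical solution $V_n(x,t)$ of (\ref{kdv}) with initial data $V_{0,n}$, given by $V_n(x,t) = -2\partial_x^2\log\det(I+\mathbb{M}^{(n)}_{x,t})$ where $\mathbb{M}^{(n)}_{x,t}$ is the classical Marchenko operator. Decomposing $\mathbb{M}^{(n)}_{x,t} = \mathbb{M}^{(n),+}_{x,t} + \mathbb{G}^{(n)}_{x,t}$ as in (\ref{M_split}), I treat the two summands separately: the right-half operator $\mathbb{M}^{(n),+}_{x,t}$ converges in trace norm to $\mathbb{M}^{+}_{x,t}$ because $V_{+,n}\to V_+$ in $L^2_+(e^{\delta_+ x^{1/2}}dx)$ and Lemma~\ref{Trace type} applies; the $\mathbb{G}^{(n)}_{x,t}$ part is handled by the Pavlov--Smirnov convergence invoked in Lemma~\ref{Fragmentation}, which gives $m_{-,n}\to m_-$ uniformly on compacts of $\mathbb{C}_+$, hence uniform convergence of the kernel functions $G_n\to G$ along the deformation contour $\Gamma$ of Figure~\ref{fig1}; the trace-class bound of Lemma~\ref{Trace} combined with dominated convergence upgrades this to $\mathbb{G}^{(n)}_{x,t}\to\mathbb{G}_{x,t}$ in $\mathfrak{S}_1$ for each $t>0$ and uniformly on $x$-compacts.

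Trace-class convergence of the Marchenko operators implies $\det(I+\mathbb{M}^{(n)}_{x,t})\to\limfunc{Det}(I+\mathbb{M}_{x,t})$ uniformly on $x$-compacts for each fixed $t>0$. By Theorem~\ref{Structure}, Property~\ref{Prop2}, this limit is real analytic in $x$, and by Property~\ref{Prop4} (invoking the twofold a.c.\ spectrum hypothesis), $I+\mathbb{M}_{x,t}$ is boundedly invertible for every real $x$, so the determinant is real-analytic on $\mathbb{R}$ and non-vanishing except on a discrete set. Standard facts about analytic functions then lift the uniform convergence of determinants to uniform convergence of $\partial_x^2\log$ on compacts avoiding that discrete set, yielding
\[
V_n(x,t)\to V(x,t) := -2\partial_x^2\log\limfunc{Det}(I+\mathbb{M}_{x,t}).
\]
Because $\mathbb{M}_{x,t}$ depends only on $V_0$, this limit is independent of the approximating sequence, so $V(x,t)$ is the global natural solution, with real analyticity in $x$ for each $t>0$ inherited from the determinant. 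The same argument, applied to an arbitrary compactly-supported $L^2_{\mathrm{loc}}$-approximation of $V_0$ (after a preliminary mollification if needed), produces the same limit, yielding uniqueness.

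It remains to verify the initial condition (\ref{IC}). For each fixed $n$, $V_n(\cdot,t)\to V_{0,n}$ as $t\to 0^+$ in every reasonable norm, and $V_{0,n}\to V_0$ in $L^2(a,\infty)$ for $a>-\infty$. The plan is to commute these two limits via an $n$-uniform modulus of continuity for $V_n(\cdot,t)$ in $t$ on $(a,\infty)$: for the $\mathbb{M}^{(n),+}$-part, hypothesis (\ref{decay at +fty}) together with Lemma~\ref{Trace type} supplies the required uniform trace-norm continuity; for the $\mathbb{G}^{(n)}$-part, the strict inequality $\mathrm{Im}\,\lambda\geq h_0>0$ along $\Gamma$ makes the kernel trace-norm continuous in $t$ down to $t=0^+$ when restricted to $L^2(a,\infty)$. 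The main obstacle is this last step: the oscillatory factor $e^{8i\lambda^3 t}$ supplies smoothing along $\Gamma$ only for $t>0$, so obtaining the $n$-uniform continuity at $t=0$ on right half-lines is where the analytic-smoothing hypothesis (\ref{decay at +fty}) is genuinely used, decoupling recovery of the right-half data from the possibly pathological behaviour of $V_0$ at $-\infty$.
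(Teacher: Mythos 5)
Your overall strategy --- approximate $V_{0}$ by compactly supported data, pass to the limit in the classical IST formula, and identify the limit with \eqref{Det} --- is the same as the paper's, but there are two genuine gaps. The first is the claim that $\mathbb{M}_{x,t}^{(n),+}\rightarrow \mathbb{M}_{x,t}^{+}$ in trace norm ``because Lemma \ref{Trace type} applies,'' from which you deduce $\det \left( I+\mathbb{M}_{x,t}^{(n)}\right) \rightarrow \limfunc{Det}\left( I+\mathbb{M}_{x,t}\right) $. Lemma \ref{Trace type} only shows that $\mathbb{M}_{x,t}^{+}$ is \emph{trace type} in the sense of Definition \ref{Def Fred} (Hilbert--Schmidt with integrable diagonal); it does not place it in $\mathfrak{S}_{1}$, and the paper explicitly notes elsewhere that it was unable to prove $\mathbb{M}_{x,t}\in \mathfrak{S}_{1}$ --- that is the whole reason \eqref{Det} is written with $\limfunc{Det}$ rather than $\det $. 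The paper's proof is structured precisely to avoid your step: it factors the determinant as in \eqref{delta V}, so that only the operator $\mathbb{G}_{z,t}$ and its increments, which \emph{are} trace class by Lemma \ref{Trace}, ever enter a classical Fredholm determinant, while the right-half contribution is absorbed into $\Delta V_{n}^{+}=V^{+}-V_{n}^{+}$ and handled at the level of solutions by Tarama's theorem, using only the operator-norm convergence $\left\Vert \Delta \mathbb{M}_{n}^{+}\right\Vert \rightarrow 0$. Without some such factorization your convergence of determinants is unsupported.

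The second gap concerns the initial condition \eqref{IC}, where you offer only a plan (an $n$-uniform modulus of continuity in $t$) and, more importantly, locate the difficulty on the wrong side: hypothesis \eqref{decay at -infty} is never used in your argument, yet it is exactly what the paper needs for \eqref{IC}. The paper expands the determinant, isolates the least regular term, and proves that $\left\Vert \int_{\Gamma }i\lambda e^{2i\lambda x}\left( e^{8i\lambda ^{3}t}-1\right) G\left( \lambda \right) d\lambda \right\Vert _{L_{+}^{2}}\rightarrow 0$ as $t\rightarrow +0$ by dominated convergence; the required dominating function is $\left( \func{Im}\lambda \right) ^{-1/2}\left\vert R_{-}\left( \lambda \right) \right\vert \in L^{1}\left( \Gamma \right) $, which is extracted from the large-$\func{Im}\lambda $ asymptotics of $m_{-}$ under \eqref{decay at -infty}. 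Your observation that $\func{Im}\lambda $ is bounded below on $\Gamma $ does not give integrability along the infinite contour once the decaying factor $e^{8i\lambda ^{3}t}$ degenerates at $t=0$. Finally, a ``global natural solution'' in the sense of Definition \ref{def2} must be a classical solution of the PDE; you establish $V_{n}\rightarrow V$ together with its $x$-derivatives but never address the convergence of $\partial _{t}V_{n}$, which the paper handles through the identity obtained by inserting $V=V_{n}+\Delta V_{n}$ into the equation.
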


\begin{proof}
Let $V_{0,n}\left( x\right) $ be an arbitrary sequence of real compactly
supported $L^{2}$ functions such $\limfunc{Supp}V_{0,n}=\left(
a_{n},b_{n}\right) $ and 
\begin{equation*}
\left\Vert V_{0,n}-V_{0}\right\Vert _{L_{\limfunc{loc}}^{2}}\rightarrow
0,:n\rightarrow \infty .
\end{equation*}%
Then the KdV equation with initial data $V_{0,n}\left( x\right) $ has a
unique solution $V_{n}\left( x,t\right) $ computed by the standard inverse
scattering transform%
\begin{equation*}
V_{n}\left( x,t\right) =-2\partial _{x}^{2}\log \limfunc{Det}\left( I+%
\mathbb{M}_{n,x,t}\right) ,
\end{equation*}%
where $\mathbb{M}_{n,x,t}$ is the Marchenko operator corresponding to $%
V_{0,n}$. By Theorem \ref{Structure}, each $V_{n}\left( x,t\right) $ is a
meromorphic function in $x$ on the entire complex plane. Consider the
function $V\left( x,t\right) $\ given by $\left( \ref{Det}\right) $. By
Theorem \ref{Structure}, it is well defined and real analytic in $x$ for any 
$t>0$ and it remains to prove that it solves $\left( \ref{kdv}\right) $. To
this end we rewrite $V=V_{n}+\Delta V_{n}$, where $\Delta V_{n}:=V-V_{n}$,
and insert this into the left hand side of $\left( \ref{kdv}\right) $:%
\begin{equation*}
\partial _{t}V-6V\partial _{x}V+\partial _{x}^{3}V \\
=\partial _{t}\Delta V_{n}+3\partial _{x}\left[ \left( \Delta
V_{n}-2V\right) \Delta V_{n}\right] +\partial _{x}^{3}\Delta V_{n}.
\end{equation*}%
Using $\left( \ref{M_split}\right) $ by a straightforward computation we
have (dropping subscipts $x,t$)%
\begin{align}
\Delta V_{n}\left( x,t\right) & =\Delta V_{n}^{+}\left( x,t\right)  \notag \\
& \quad +2\partial _{x}^{2}\log \limfunc{Det}\left( I+\left( I+\mathbb{M}%
_{n}^{+}\right) ^{-1}\mathbb{G}_{n}\right) \left( I+\left( I+\mathbb{M}%
^{+}\right) ^{-1}\mathbb{G}\right) ^{-1}  \notag \\
& =\Delta V_{n}^{+}\left( x,t\right)  \notag \\
& \quad +2\partial _{x}^{2}\log \det \left\{ I-\left( I+\mathbb{M}\right)
^{-1}\left( I+\mathbb{M}^{+}\right) \left( I+\mathbb{M}_{n}^{+}\right)
^{-1}\Delta \Omega \right\}  \label{delta V}
\end{align}%
with 
\begin{equation*}
\Delta \Omega =\Delta \mathbb{G}_{n}-\Delta \mathbb{M}_{n}^{+}\left( I+%
\mathbb{M}^{+}\right) ^{-1}\mathbb{G},
\end{equation*}%
where the determinant on the right hand side of \eqref{delta V} is
understood in the usual way because by Lemma \ref{Trace} both $\Delta 
\mathbb{G}$ and $\mathbb{G}$ are trace class, and $V_{n}^{+}$ stands for the
solution to the KdV equation with the initial profile $\chi _{+}V_{0,n}$.
Since $V_{n}^{+}\left( x,t\right) \rightarrow V^{+}\left( x,t\right) $
uniformly in $x$ as $n\rightarrow \infty $ (one of the main results of \cite%
{Tarama04}), the first term on the right hand side of $\left( \ref{delta V}%
\right) $ vanishes as $n\rightarrow \infty $. We now need to show that so
does the other one. But by \cite{Tarama04} $\left\Vert \Delta \mathbb{M}%
_{n}^{+}\right\Vert \rightarrow 0,n\rightarrow \infty ,$ and therefore 
\begin{equation}
\left\Vert \Delta \mathbb{M}_{n}^{+}\left( I+\mathbb{M}^{+}\right) ^{-1}%
\mathbb{G}\right\Vert _{\mathfrak{S}_{1}}\leq \left\Vert \left( I+\mathbb{M}%
^{+}\right) ^{-1}\right\Vert \left\Vert \mathbb{G}\right\Vert _{\mathfrak{S}%
_{1}}\left\Vert \Delta \mathbb{M}_{n}^{+}\right\Vert \rightarrow
0,:n\rightarrow \infty ,  \label{ineq1}
\end{equation}%
\begin{align}
\left\Vert \left( I+\mathbb{M}_{n}^{+}\right) ^{-1}\right\Vert & \leq
\left\Vert \left( I+\mathbb{M}^{+}\right) ^{-1}\right\Vert \left\Vert \left(
I-\left( I+\mathbb{M}^{+}\right) ^{-1}\Delta \mathbb{M}_{n}^{+}\right)
^{-1}\right\Vert  \label{ineq1'} \\
& \leq \left\Vert \left( I+\mathbb{M}^{+}\right) ^{-1}\right\Vert \left(
1-\left\Vert \left( I+\mathbb{M}^{+}\right) ^{-1}\right\Vert \left\Vert
\Delta \mathbb{M}_{n}^{+}\right\Vert \right) ^{-1}  \notag \\
& \quad \rightarrow \left\Vert \left( I+\mathbb{M}^{+}\right)
^{-1}\right\Vert ,:n\rightarrow \infty .  \notag
\end{align}%
Next, by Lemma \ref{Trace} (reinstating subscripts $x,t$)%
\begin{eqnarray}
\left\Vert \Delta \mathbb{G}_{n,x,t}\right\Vert _{\mathfrak{S}_{1}}
&\lesssim &\left\Vert \frac{\Delta {g}_{n,x.t}\left( \lambda \right) }{\func{%
Im}\lambda }\right\Vert _{L^{1}\left( \Gamma \right) }=\left\Vert \frac{%
e^{2i\lambda x}e^{8i\lambda ^{3}t}\Delta G_{n}(\lambda )}{\func{Im}\lambda }%
\right\Vert _{L^{1}\left( \Gamma \right) }  \notag \\
&\leq &\left\Vert \frac{e^{2i\lambda x}e^{8i\lambda ^{3}t}\Delta
G_{n}(\lambda )}{\func{Im}\lambda }\right\Vert _{L^{1}\left( \Gamma
_{N}\right) }+\left\Vert \frac{e^{2i\lambda x}e^{8i\lambda ^{3}t}\Delta
G_{n}(\lambda )}{\func{Im}\lambda }\right\Vert _{L^{1}\left( \Gamma
\backslash \Gamma _{N}\right) }  \label{L1 norm}
\end{eqnarray}%
where $\Gamma $ is as in Figure \ref{fig1} and $\Gamma _{N}:=\Gamma \cap
\left\{ \lambda :\left\vert \lambda \right\vert \leq N\right\} $. We need to
show that each term on the right hand side of (\ref{L1 norm}) is small for
large $n$ and $N$. For $\Delta G_{n}$ we have 
\begin{eqnarray*}
\Delta G_{n} &=&\Delta \dfrac{R_{n,-}}{1-R_{n,-}L_{n,+}}T_{n,+}^{2} \\
&=&\Delta \left( \dfrac{T_{n,+}^{2}}{L_{n,+}}\right) \left( \dfrac{1}{%
1-R_{-}L_{+}}-1\right) +\dfrac{T_{n,+}^{2}}{L_{n,+}}\dfrac{L_{n,+}\Delta
R_{n,-}+R_{-}\Delta L_{n,+}}{\left( 1-R_{n,-}L_{n,+}\right) \left(
1-R_{-}L_{+}\right) }.
\end{eqnarray*}%
But \cite{CarLac90}%
\begin{equation*}
\left\Vert V_{0,n}-V_{0}\right\Vert _{L_{\limfunc{loc}}^{2}}\rightarrow
0\quad \Longrightarrow \quad m_{\pm ,n,0}\left( z\right) \rightarrow m_{\pm
,0}\left( z\right) ,\ \ \ n\rightarrow \infty ,
\end{equation*}%
the latter convergence being uniform in $z$ on compacts in $\mathbb{C}_{+}$.
Due to (\ref{R}) and (\ref{L+}) then $\Delta R_{n,-}\rightarrow 0,:\Delta
L_{n,+}\rightarrow 0,\ \ n\rightarrow \infty ,$ also uniformly on compacts
in $\mathbb{C}_{+}$. Since by \cite{Tarama04} we also have $\Delta
T_{n,+}\rightarrow 0,\ \ n\rightarrow \infty $, we conclude that $\Delta
G_{n}\rightarrow 0,\ \ n\rightarrow \infty $ uniformly on $\Gamma _{N}$ and
hence the first norm on the right hand side of $\left( \ref{L1 norm}\right) $
is small for $n$ large enough. The second norm on the right hand side of $%
\left( \ref{L1 norm}\right) $ is small if $N$ is large enough due to the
decay of $e^{8i\lambda ^{3}t}$ on $\Gamma $ and one concludes from $\left( %
\ref{L1 norm}\right) $ that 
\begin{equation}
\left\Vert \Delta \mathbb{G}_{n,x,t}\right\Vert _{\mathfrak{S}%
_{1}}\rightarrow 0,:n\rightarrow \infty  \label{S norm}
\end{equation}%
for any real (and complex too) $x$ and $t>0$. Combining now (\ref{ineq1}), (%
\ref{ineq1'}), (\ref{S norm}) and taking into account \cite{SimonTrace}%
\begin{equation*}
\left\vert \det \left( I+A\right) -\det \left( I+B\right) \right\vert
\lesssim \left\Vert A-B\right\Vert _{\mathfrak{S}_{1}}e^{\left( 1+\left\Vert
A\right\Vert _{\mathfrak{S}_{1}}+\left\Vert B\right\Vert _{\mathfrak{S}%
_{1}}\right) }
\end{equation*}%
we conclude that 
\begin{equation*}
\det \left\{ I-\left( I+\mathbb{M}_{x,t}\right) ^{-1}\left( I+\mathbb{M}%
_{x,t}^{+}\right) \left( I+\mathbb{M}_{n,x,t}^{+}\right) ^{-1}\Delta \Omega
_{x,t}\right\} \rightarrow 0,:n\rightarrow \infty
\end{equation*}%
uniformly in $x$ on any compact interval for any $t>0$. Note that if a
sequence of real analytic functions $f_{n}\left( z\right) $ converges to
some real analytic function $f\left( z\right) $ uniformly on a compact set $%
K $ then $f_{n}^{\prime }\left( z\right) \rightarrow f^{\prime }\left(
z\right) $ also uniformly on $K$. Indeed, it immediately follows from the
Cauchy formula that for any function $f$ analytic on some closed disc $%
B_{r}\left( a\right) =\left\{ z:\left\vert z-a\right\vert \leq r\right\} $
one has 
\begin{equation*}
\left\Vert f^{\prime }\right\Vert _{L^{\infty }\left( B_{\varepsilon
r}\left( a\right) \right) }\lesssim \frac{\left\Vert f\right\Vert
_{L^{\infty }\left( B_{r}\left( a\right) \right) }}{\left( 1-\varepsilon
\right) r},\;0<\varepsilon <1.
\end{equation*}%
One can now easily conclude that the second term on the right hand side of (%
\ref{delta V}) converges to zero uniformly on compacts in $\mathbb{R}$ as $%
n\rightarrow \infty $ and the function $V\left( x,t\right) $ formally
defined by (\ref{Det}) is indeed a classical solution to the KdV problem. It
remains to demonstrate $\left( \ref{IC}\right) $. To this end one needs to
use the Fredholm expansion of the determinant in (\ref{Det}). The complete
expansion is unwieldy but \cite{RybIP09} its first term is the least
regular. Moreover, its component corresponding to $\chi _{+}V_{0}$ is
classical and the problem essentially boils down to showing that the $%
L_{+}^{2}$ norm of $\int_{\Gamma }i\lambda e^{2i\lambda x}\left(
e^{8i\lambda ^{3}t}-1\right) G\left( \lambda \right) d\lambda $ vanishes as $%
t\rightarrow 0$ in $L_{+}^{2}$. We have 
\begin{eqnarray*}
&&\left\Vert \int_{\Gamma }i\lambda e^{2i\lambda x}\left( e^{8i\lambda
^{3}t}-1\right) G\left( \lambda \right) d\lambda \right\Vert _{L_{+}^{2}} \\
&\leq &\int_{\Gamma }\left\vert \lambda \right\vert \left\Vert e^{2i\lambda
x}\right\Vert _{L_{+}^{2}\left( dx\right) }\left\vert e^{8i\lambda
^{3}t}-1\right\vert \left\vert G\left( \lambda \right) \right\vert
\left\vert d\lambda \right\vert \\
&=&\int_{\Gamma }\frac{\left\vert \lambda \right\vert }{2\sqrt{\func{Im}%
\lambda }}\left\vert e^{8i\lambda ^{3}t}-1\right\vert \left\vert G\left(
\lambda \right) \right\vert \left\vert d\lambda \right\vert \\
&\lesssim &\int_{\Gamma }\sqrt{\func{Im}\lambda }\left\vert e^{8i\lambda
^{3}t}-1\right\vert \left\vert G\left( \lambda \right) \right\vert
\left\vert d\lambda \right\vert \\
&\lesssim &\int_{\Gamma }\sqrt{\func{Im}\lambda }\left\vert e^{8i\lambda
^{3}t}-1\right\vert \left\vert R_{-}\left( \lambda \right) \right\vert
\left\vert d\lambda \right\vert .
\end{eqnarray*}%
To apply the Lebesgue dominated convergence theorem one merely needs to
verify that $\left( \func{Im}\lambda \right) ^{-1/2}R_{-}\left( \lambda
\right) \in L^{1}\left( \Gamma \right) $ or, due to (\ref{R}), that $\left( 
\func{Im}\lambda \right) ^{-1/2}\left( i\lambda -m_{-}\left( \lambda
^{2}\right) \right) \in L^{1}\left( \Gamma \right) $. But \cite{CarLac90}%
\begin{equation*}
\left\vert i\lambda -m_{-}\left( \lambda ^{2}\right) \right\vert \lesssim
\left\vert \int_{\mathbb{R}_{-}}e^{2i\lambda \left\vert x\right\vert
}V_{0,-}\left( x\right) dx\right\vert +\frac{1}{\left\vert \mathbb{\lambda }%
\right\vert ^{2}}\left\vert \int_{\mathbb{R}_{-}}e^{-2\func{Im}\lambda
\left\vert x\right\vert }V_{0,-}\left( x\right) ^{2}dx\right\vert
\end{equation*}%
and hence, due to (\ref{decay at -infty}), 
\begin{equation}
\left( \func{Im}\lambda \right) ^{-1/2}\int_{0}^{\infty }e^{2i\lambda
x}V_{0,-}\left( -x\right) dx\in L^{1}\left( \Gamma \right)  \label{L^1}
\end{equation}%
is to be demonstrated. But, in virtue of condition (\ref{decay at -infty}) 
\begin{eqnarray}
\int_{1}^{\infty }e^{-\func{Im}\lambda x}\left\vert V_{0,-}\left( -x\right)
\right\vert dx &=&e^{-\func{Im}\lambda }\int_{0}^{\infty }e^{-\func{Im}%
\lambda x}\left\vert V_{0,-}\left( -x-1\right) \right\vert dx  \notag \\
&=&O\left( e^{-\func{Im}\lambda }\right) ,\ \ \ \ \func{Im}\lambda
\rightarrow \infty ,  \label{1,infty}
\end{eqnarray}%
and \cite{Ry2001} 
\begin{equation}
\int_{0}^{1}e^{-2\func{Im}\lambda x}V_{0,-}\left( -x\right) dx=\frac{%
V_{0,-}\left( -0\right) }{2\func{Im}\lambda }+o\left( \frac{1}{\func{Im}%
\lambda }\right) ,\ \ \ \ \func{Im}\lambda \rightarrow \infty ,  \label{0,1}
\end{equation}%
provided that $x=0$ is a Lebesgue point of $V_{0}\left( x\right) $. Since $%
V_{0}\left( x\right) $ is locally integrable, almost every $x$ is a point of
Lebesgue continuity and thus, without loss of generality, we may assume that 
$a=0$ is Lebesgue. Putting (\ref{0,1}) and (\ref{1,infty}) together implies (%
\ref{L^1}) and thus (\ref{IC}) is proven for $a=0$. Simple shifting
arguments extend (\ref{IC}) to any real $a$.
\end{proof}

Theorem \ref{MainThm} extends our main result in \cite{Ryb10} where somewhat
weaker solutions are considered under much stronger conditions on initial
data. In addition no explicit formula (\ref{Det}) is given there either.
Theorem \ref{MainThm} considerably extends also \cite{Tarama04} where
similar results (save (\ref{Det})) are obtained. In \cite{Tarama04} $\chi
_{-}V_{0}$ is assumed to be Faddeev and from $L_{-}^{2}$ which is of course
much stronger than (\ref{decay at -infty}). The approach of \cite{Tarama04}
is also based upon inverse scattering but the analysis is conducted within
the classical Marchenko theory. We however crucially used \cite{Tarama04} to
combine the treatment of the $\mathbb{R}_{-}$ from \cite{Ryb10} with the one
given in \cite{Tarama04} for $\mathbb{R}_{+}$.

%-------------------------Remark 8.7----------------------------

\begin{remark}
Note that Conditions $\left( \ref{decay at -infty}\right) $ and (\ref{decay
at +fty}) alone accept those $V_{0}$'s for which $\left\vert V_{0}\left(
x\right) \right\vert \rightarrow \infty $ exponentially fast as $%
x\rightarrow -\infty $ but exhibit a subexponential decay as $x\rightarrow
\infty $. Condition $\inf \limfunc{Spec}\left( -\partial
_{x}^{2}+V_{0}\right) \geq -h_{0}>-\infty $, meaning that $V_{0}\left(
x\right) $ is essentially bounded from below, however does not allow $%
V_{0}\left( x\right) \rightarrow -\infty $ as $x\rightarrow -\infty $. The
condition that $\limfunc{Spec}_{ac}\left( -\partial _{x}^{2}+V_{0}\right) $
has a non-empty component of multiplicity two, on the other hand, does allow 
$V_{0}\left( x\right) $ to go to $-\infty $ as $x\rightarrow -\infty $ but
not to $\infty $. This condition assumes certain pattern of behavior of $%
V_{0}$ when $x\rightarrow -\infty $ but it is hard to express in terms of $%
V_{0}$ alone. Loosely speaking, for an operator with a.c. spectrum, it is
possible to approximately predict future values of the potential, with
arbitrarily high accuracy, based on information about past values (so-called
Oracle Theorems). We refer the reader to \cite{Remling09} and the literature
cited therein. On the other hand, there are many explicit examples of
potentials (including those decaying like $\left\vert x\right\vert
^{-a},\alpha <1/2$) for which the spectrum is purely singular (see, e.g. 
\cite{Pearson78}, \cite{KoU88}).
\end{remark}

%-----------------------Remark 8.8--------------------------

\begin{remark}
If we assume that $V_{0}$ is short range then the right reflection
coefficient is also well-defined by $\left( \ref{R1}\right) $. The formula $%
\left( \ref{R1}\right) $ says that $R_{+}\left( \lambda \right) $ admits an
analytic continuation into $\mathbb{C}_{-}$ but does not imply analyticity
in $\mathbb{C}_{+}$.
\end{remark}

\begin{remark}
\label{+ infty}Theorem \ref{MainThm} holds with an obvious change in (\ref%
{Det}) if we replace \ref{decay at +fty}) with $\chi _{+}V_{0}-c\in
L_{+}^{2}\left( e^{\delta _{+}x^{1/2}}dx\right) $ with some real $c$.
Indeed, performing a simple Galilean transform, one has 
\begin{equation*}
V\left( x,t\right) =c+W\left( x,t\right)
\end{equation*}%
where $W$ solve the KdV equation with initial data $W_{0}$ subject to the
conditions of Theorem \ref{MainThm}.
\end{remark}

\begin{remark}
The decay condition (\ref{decay at +fty}) in Theorem \ref{MainThm} can
likely be relaxed to read $\chi _{+}V_{0}\in L^{1}\left( \left\langle
x\right\rangle dx\right) $ (or at least to $L^{1}\left( \left\langle
x\right\rangle ^{2}dx\right) $) but $V\left( x,t\right) $ will no longer be
a real analytic function in $x$ for any $t>0$. Further relaxation of the
decay condition at $+\infty $ runs into serious problems which will be
discussed in the next section.
\end{remark}

\section{Discussions, corollaries, and open problems}

\subsection{Hirota $\protect\tau $-function}

The explicit formula (\ref{Det}) is by no means new. In the context of
integrable systems the formula 
\begin{equation}
V\left( x,t\right) =-2\partial _{x}^{2}\log f\left( x,t\right) 
\label{Hirota}
\end{equation}%
seems to have appeared first in the physical literature \cite{Hirota71} back
in early 70s and has become one of the main ingredient of soliton theory.
(or perhaps even earlier in the connection with $N$ soliton solutions.) The
nature of $f\left( x,t\right) $ varies depending on the context where it
appears in. It is used, e.g., as a substitution to transform the KdV
equation into the so-called bilinear KdV equation (Hirota's $\tau $%
-transform). Every known explicit solution can be written in the form (\ref%
{Hirota}) where $f\left( x,t\right) $ is a certain Wronskian (see, e.g. \cite%
{Ma05}) or finite determinant \cite{AktMee06}. It is also well known in
certain contexts similar to ours (see, e.g. \cite{ErcMcKean90}, \cite%
{Venak86}) and is referred to as determinant, Bargman, Dyson, etc. We
however call it Hirota's $\tau $-representation. The literature on such
formulas is enormously broad but in our generality (\ref{Det}) appears to be
new. We however were unable to prove that $\mathbb{M}_{x,t}$ is actually
trace class for any $t>0$ and the determinant in (\ref{Det}) can therefore
be understood in the classical Fredholm sense. We could not find a rigorous
proof of this fact in the literature even for short range initial profiles.
The problem is equivalent to the question what conditions on (short range) $%
V_{0}$ guarantee that the integral Hankel operator $\mathbb{H}$ on $L_{+}^{2}
$ defined by%
\begin{equation*}
\left( \mathbb{H}f\right) \left( x\right) =\int_{\mathbb{R}_{+}}H\left(
x+y\right) f\left( y\right) dy,\ \ \ x\in \mathbb{R}_{+}.
\end{equation*}%
were 
\begin{equation}
H\left( x\right) :=\int e^{i\lambda x}e^{i\lambda z+i\lambda ^{3}t}R\left(
\lambda \right) d\lambda ,  \label{M(x)}
\end{equation}%
is a trace class operator for any real $z$ and $t>0$. Peller found \cite%
{Peller2003} a complete characterization of all trace class Hankel operators
in terms of a very subtle smoothness of $H$ (membership of its Fourier
transform in the so-called analytic Besov class). However this
characterization cannot be easily expressed in term of $V_{0}$.

\subsection{Asymptotic solutions}

The formula (\ref{Det}) (and Theorem \ref{MainThm} as a whole) is
particularly convenient to study solutions of Cauchy problems for the KdV
equation. It has however been primarily used in the context of short range
reflectionless initial profiles where the Marchenko operator becomes finite
rank and the Fredholm determinant turns into a linear algebra object.
Historically, this case was studied first and the phenomenon of nonlinear
superposition and interaction of solitons was understood this way. This
analysis was then extended in \cite{Gesztesy_Duke92} to the closure of
reflectionless potentials. In the form closest to ours (\ref{Det}) seems to
have appeared in \cite{Venak86}. Namely, under the assumption that $%
V_{0}\rightarrow -c^{2}$ as $x\rightarrow -\infty $ and $V_{0}\rightarrow 0$
as $x\rightarrow \infty $ sufficiently fast it was used in \cite{Venak86} to
derive and analyze a long time asymptotic of $V\left( x,t\right) $ producing
short-cuts to most of previous results of \cite{Hruslov76}. The treatment
was justified under an extra assumption that $R=0$ which however holds only
asymptotically as $t\rightarrow \infty $. In \cite{KK94} this condition was
not imposed and a rigorous justification was done assuming only that $%
\partial _{x}V_{0}$ is from the Schwartz class. We emphasize that these
works are concerned with asymptotic solutions and WP issues are not treated
there. This is, in fact, the main focus of the present paper. We however
believe that Theorem \ref{MainThm} should be particularly useful to derive
various asymptotics generalizing and improving previously known results%
\footnote{%
Roughly speaking, a typical such result says that $V\left( x,t\right) $
asymptotically splits as $t\rightarrow \infty $ into an infinite train of
solitons of height $-2c^{2}$.} in the conditions it is proven under. It
could be achieved by a suitable approximation of the operator $\mathbb{M}%
_{x.t}$ in (\ref{Det}) which can be done in a number of different ways. We
hope to return to this important issue elsewhere.

\subsection{Conditions on a.c. spectrum}

The condition that $\func{Spec}\left( -\partial _{x}^{2}+V_{0}\right) $ is
bounded from below guarantees that no blow-up solitons will instantaneously
emerge from $-\infty $. It is however may be possible to construct an
initial condition $V_{0}$ such that the negative spectrum of $-\partial
_{x}^{2}+V_{0}$ is very spares but not bounded from below so that Theorem %
\ref{MainThm} would still hold. The physical content of such a situation
would be dubious though.

The condition that $\limfunc{Spec}\left( -\partial _{x}^{2}+V_{0}\right) $
has a non-trivial a.c. component of multiplicity two does not appear to be
physically motivated. While satisfied in all previously studied cases
mentioned in Remark \ref{Known cases} this condition is hard to verify in
general and only vague statements can be made as of today explaining what it
means for $V_{0}$ to satisfy such condition. Theorem \ref{Structure}
guarantees that the determinant in (\ref{Det}) does not vanish on the real
line and hence $V\left( x,t\right) $ has no real poles. The latter means no
blow-up solutions develop over finite time. However as mentioned above one
can explicitly construct initial profiles (e.g. Pearson blocks running to $%
-\infty $ and zero on the right) for which the spectrum is positive and its
a.c. component fills $\mathbb{R}_{+}$ but has uniform multiplicity one.
Lemma \ref{Invert} does not apply as $\left\vert R\left( \lambda \right)
\right\vert =1$ a.e. on the real line. Note that this problem has embedded
dense singular spectrum. Yet another example comes from a white noise type
random initial profile supported on the left half line. We however
conjecture that the assertion of Lemma \ref{Invert} would hold. In a
somewhat simplified case (absence of negative spectrum), the problem
essentially boils down to the following question. If the operator 
\begin{equation*}
I+\chi _{+}\mathcal{F}\phi \mathcal{F}:L_{+}^{2}\rightarrow L_{+}^{2}
\end{equation*}%
is boundedly invertible for any unimodular function $\phi $ of the form 
\begin{equation*}
\phi \left( \lambda \right) =e^{i\left( \lambda ^{3}t+c\lambda \right)
}I\left( \lambda \right)
\end{equation*}
where $t\geq 0,c\in \mathbb{R\ }$are constants and $I\left( \lambda \right) $
is an inner function of the upper half plane (i.e. $I\in H_{+}^{\infty }$
and such that $\left\vert I\left( \lambda \right) \right\vert <1$ in $%
\mathbb{C}_{+}$ and $\left\vert I\left( \lambda \right) \right\vert =1$ a.e.
on the real line)? This problem can be viewed as a Riemann-Hilbert problem
or a problem about invertibility of the Toeplitz operator with the
(unimodular) symbol $\phi $. Since $e^{i\lambda ^{3}t}$ is not a Nevanlinna
function, the powerful machinery of Riemann-Hilbert problem or
Hankel/Toeplitz operators does not readily apply. Moreover, the Bourgain
factorization theorem for a unimodular function \cite{Bourgain86} suggests
that the answer may actually be negative\footnote{%
We thank Donald Marshall for bringing out attention to this paper.} meaning
that, loosely speaking, positon solutions (with double pole singularities)
will emerge from noise. The latter would mean that either shallow water
rouge waves are likely a stochastic phenomenon or the KdV equation does not
model the situation good enough in this case. We however cautiously
conjecture that the answer is affirmative and moreover the conclusion of
Theorem \ref{MainThm} holds without the hard-to-verify condition on the a.c.
spectrum \ An improved theorem \ref{MainThm} would give solutions with
simple a.c. and dense singular spectra (produced, e.g., by Pearson blocks)
which would of course be a new type of KdV solutions.

\subsection{Uniqueness results}

The following statement is a direct consequence of the analyticity of $%
V\left( x,t\right) $ for $t>0$.

\begin{corollary}
\label{Corollary'}Under conditions of Theorem \ref{MainThm} the solution $%
V\left( x,t\right) $ can not vanish on a set of positive Lebesgue measure
for any $t>0$ unless $V_{0}$ is identically zero.
\end{corollary}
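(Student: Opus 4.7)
Fix $t_{0}>0$ and suppose $V(\cdot ,t_{0})$ vanishes on a set $E\subset \mathbb{R}$ of positive Lebesgue measure. The plan is in two stages: first upgrade the vanishing on $E$ to vanishing on all of $\mathbb{R}$ via real analyticity, then propagate back in time to conclude that $V_{0}\equiv 0$. The first stage is essentially the content of the analyticity assertion in Theorem \ref{MainThm}: it says precisely that $x\mapsto V(x,t_{0})$ is real analytic on $\mathbb{R}$, and the zero set of a non-identically-zero real analytic function on a connected open subset of $\mathbb{R}$ is discrete and hence has Lebesgue measure zero. Thus if this zero set contains $E$ of positive measure we must already have $V(\cdot ,t_{0})\equiv 0$ on $\mathbb{R}$.

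For the second stage I would exploit the time-reversal symmetry $(x,t)\mapsto (-x,-t)$ of the KdV equation \eqref{KdV1.1}. Setting $\widetilde{V}(x,s):=V(-x,\,t_{0}-s)$ for $s\in [0,t_{0}]$ produces a classical solution of \eqref{KdV1.1} on that time slab with initial datum $\widetilde{V}(\cdot ,0)=V(-\cdot ,t_{0})\equiv 0$. The zero function trivially satisfies the hypotheses \eqref{decay at -infty}--\eqref{decay at +fty}, and the natural solution furnished by Theorem \ref{MainThm} for zero initial data is identically zero. Invoking the uniqueness part of Theorem \ref{MainThm} then gives $\widetilde{V}\equiv 0$, and evaluating at $s=t_{0}$ yields $V_{0}(-x)=\widetilde{V}(x,t_{0})=0$, i.e.\ $V_{0}\equiv 0$.

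The main obstacle in this plan is the legitimacy of the second stage: one must check that the time-reversed, spatially reflected function $\widetilde{V}$ is itself a \emph{natural} solution in the sense of Definition \ref{def2}, so that Theorem \ref{MainThm}'s uniqueness clause really applies to it. If this causes trouble, a cleaner alternative is to argue directly on the Hirota side. From $V(\cdot ,t_{0})\equiv 0$ and \eqref{Det} one reads off that $\log \limfunc{Det}(I+\mathbb{M}_{x,t_{0}})$ is affine in $x$; combined with the decay of $\mathbb{M}_{x,t_{0}}$ as $x\to +\infty$ built into the exponential factors $e^{2i\lambda x}$ and $e^{-2\alpha x}$ appearing in the scattering data $(\rho _{x,t_{0}},R_{x,t_{0}})$ of Theorem \ref{Structure}, this forces $\mathbb{M}_{x,t_{0}}\equiv 0$, and therefore $R\equiv 0$ together with $\rho \equiv 0$. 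Via Proposition \ref{Prop_R} and the Borg-Marchenko uniqueness theorem applied to each of $m_{\pm}$, this last conclusion already gives $V_{0}\equiv 0$, completing the corollary.
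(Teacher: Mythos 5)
Your first stage is exactly the paper's proof: the corollary is presented there as ``a direct consequence of the analyticity of $V(x,t)$ for $t>0$,'' i.e.\ precisely your observation that Theorem \ref{MainThm} makes $V(\cdot ,t_{0})$ real analytic on $\mathbb{R}$, so that vanishing on a set of positive measure forces $V(\cdot ,t_{0})\equiv 0$. Where you go beyond the paper is in trying to justify the remaining implication $V(\cdot ,t_{0})\equiv 0\Rightarrow V_{0}\equiv 0$, which the paper indeed leaves implicit; you are right to see a gap there, but neither of your two completions closes it. Route (a) fails for the reason you yourself flag: the uniqueness clause of Theorem \ref{MainThm} lives inside the class of \emph{natural} solutions of Definition \ref{def2}, and nothing shows that the reflected, time-reversed function $\widetilde{V}$ is the natural solution with zero initial datum (classical solutions of KdV with no control at infinity need not be unique, so ``$\widetilde{V}$ solves KdV with $\widetilde{V}(\cdot ,0)=0$'' alone does not give $\widetilde{V}\equiv 0$). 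Route (b) contains a concretely false step: from $V(\cdot ,t_{0})\equiv 0$ and the decay of $\mathbb{M}_{x,t_{0}}$ as $x\rightarrow +\infty $ you correctly obtain $\limfunc{Det}\left( I+\mathbb{M}_{x,t_{0}}\right) \equiv 1$, but a self-adjoint trace class operator can have determinant one without being zero (take eigenvalues $\lambda $ and $-\lambda /\left( 1+\lambda \right) $), so the identity $\limfunc{Det}\left( I+\mathbb{M}_{x,t_{0}}\right) \equiv 1$ does not force $\mathbb{M}_{x,t_{0}}\equiv 0$ and hence does not yield $R\equiv 0$, $\rho \equiv 0$.

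A completion consistent with the paper's toolkit: the $t$-dependence of the kernels enters only through $e^{8i\lambda ^{3}t}$ and $e^{8\alpha ^{3}t}$, and the bounds of Lemma \ref{Trace} on the contour $\Gamma $ hold locally uniformly in $t>0$, so the same argument that gives analyticity in $x$ gives analyticity of $t\mapsto V(x,t)$ on $\left( 0,\infty \right) $. If $V(\cdot ,t_{0})\equiv 0$, then the equation and induction give $\partial _{t}^{k}V(\cdot ,t_{0})=0$ for all $k$ (each $\partial _{t}^{k}V$ is a differential polynomial in the $x$-derivatives of lower-order $t$-derivatives), whence $V\equiv 0$ for all $t>0$, and \eqref{IC} then forces $V_{0}=0$ a.e. This, or an argument recovering $R\equiv 0$ and $\rho \equiv 0$ directly (after which Borg--Marchenko does give $V_{0}\equiv 0$ as you say), is what is needed to finish; as written, your proof establishes only $V(\cdot ,t_{0})\equiv 0$.
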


This extends a result due to Zhang \cite{Zhang92} stating that if $V\left(
x,t\right) $ is a solution of the KdV equation then it cannot have compact
support at two different moments unless it vanishes identically. Corollary %
\ref{Corollary'} quickly recovers and improves yet another important result
of \cite{Zhang92} saying that, assuming $V_{0}\in L^{1}\left( \left\langle
x\right\rangle dx\right) $ and $\partial _{x}V_{0}\in L^{1}$, once $\limfunc{%
Supp}V_{0}\left( x,t_{0}\right) ,\limfunc{Supp}V\left( x,t_{1}\right)
\subset \left( -\infty ,a\right) $ for $t_{0}<t_{1}$ then $V\left(
x,t\right) =0$. The techniques of \cite{Zhang92} use the Marchenko theory
and some Hardy space arguments. Zhang's approach has been simplified and
generalized in recent \cite{Teschl3}. The main results of \cite{Zhang92}
were extended in \cite{Escau07}. Under certain regularity and decay
conditions, the main result of \cite{Escau07} says that if at two distinct
times two solutions differ by super exponentially decaying functions then
the solutions must coincide. We believe that this result can be improved by
our techniques and will return to this elsewhere. Note that \cite{Escau07}
treats a more general class of KdV type equations for which the IST need not
apply.

\subsection{Analyticity}

It is proven in \cite{Deift79} that if $V_{0}\left( x\right) $ is analytic
in the strip $\left\vert \func{Im}z\right\vert <a$ and has Schwartz decay
there, then $V\left( x,t\right) $ is meromorphic in a strip with at most $N$
poles (where $N$ is the number of bound states of $-\partial
_{x}^{2}+V_{0}\left( x\right) $) off the real line. Note that in this
situation the reflection coefficient $R$ necessarily exhibits an exponential
decay on the real line. This of course need not occur in our case (even if
we assume that $V_{0}$ is supported on $\mathbb{R}_{-}$) and our real
meromorphic solution has, in general, infinitely many poles for any $t>0$ in
any strip around the real line (but not on it) accumulating only to
infinity. These poles analytically depend on $t$ and hence may appear or
disappear only on the boundary of analyticity of $V\left( x,t\right) $. In
the case of $V_{0}$ supported on $\mathbb{R}_{-}$, $V\left( x,t\right) $ is
meromorphic on the whole complex plane meaning that the poles can only move.
In fact, the importance of pole dynamics was recognized by Kruskal\ \cite%
{Kruskal74} back in the early 70s for pure soliton solutions and has been
actively studied since then (see also \cite{Air77}, \cite{Segur2000}, \cite%
{GWeikard06}, \cite{Bona2009} to mention just four). Now existence of
meromorphic solutions is referred to as the Painlev\'{e} property (see, e.g. 
\cite{McLeodOlver83} and \cite{Its2003}) \ Theorem \ref{MainThm} says the
KdV with the initial profile $V_{0}$ supported on $\mathbb{R}_{-}$ has the
Painlev\'{e} property. However we cannot say much about the structure of our
poles and their dynamics and we don't know how one can spot the future poles
from the (non-analytic) initial profile and how they actually appear. In
particular, we have no norm estimates or any other quantitative way to
express the `size' for this function in terms of initial data. Such
estimates could likely follow from conservation laws which typically
accompany the IST. For initial data which tends to a constant at $-\infty $
deriving such formulas should not be a problem. We are not sure if
conservation laws have been derived for general step like initial data. We
do not know if our real meromorphic solution is meromorphic on the whole
complex plain. The answer depends the same question in the setting of \cite%
{Tarama04}. The techniques used in \cite{Tarama04} yield only local
analyticity around the real line (real analyticity) and cannot be easily
adjusted to obtain global analyticity. We however believe that it is
achievable but by different methods. In \cite{McLeodOlver83} it is claimed
without a proof that $V\left( x,t\right) $ is meromorphic for any $t>0$
under the only assumption that $V_{0}\in L^{1}\left( \left\langle
x\right\rangle ^{2}dx\right) $. This would likely be the case if the Fourier
transform of $V_{0}$ admitted an analytic continuation into the lower half
plane. Unfortunately, an arbitrary function from $L^{1}\left( \left\langle
x\right\rangle ^{2}dx\right) $ doesn't have this property\footnote{%
We owe Eugine Korotyev for this comment.}.

We have not looked into the question if there are nontrivial initial
profiles for which $V\left( x,t\right) $ is an entire function of $x$ for
any $t>0$. The answer is likely affirmative as the linear part of the KdV
equation (Airy's equation) has a very strong smoothing property. It is not
hard to show that this equation instantaneously evolves a multisoliton
initial profile, which is a meromorphic functions on the whole complex with
infinitely many double poles, into an entire function. On the other hand the
non-linear part of the KdV (Hopf's equation) tends to break analyticity.

\subsection{Slower or no decay at $+\infty $}

Due to the directional anisotropy of the KdV equation (related to the
different decay behavior of the Airy functions) we cannot switch in Theorem %
\ref{MainThm} $V_{0}\left( x\right) $ to $V_{0}\left( -x\right) $. Theorem %
\ref{MainThm} says that initial data on $\left( -\infty ,a\right) $
instantaneously evolves into a meromorphic function (the so-called gain of
regularity) for any finite $a$ and causes no problem to WP. Contrary to
this, even small data on $\left( a,\infty \right) $ may reduce smoothness of
a solution (loss of regularly) and even lead to a blow-up solution. This
phenomenon can be drastically seen in the case of a delta initial profile
which evolves into a single soliton plus an Airy type function. Due to the
invariance of the KdV equation with respect to $\left( x,t\right)
\rightarrow \left( -x,-t\right) $, one has an example of a smooth
(meromorphic) initial profile rapidly decaying at $-\infty $ and exhibiting
slow oscillatory decay at $+\infty $ which turns at a certain instant of
time into a delta function (focusing effect). This delta function (which
could formally be considered as a compactly supported initial data) then
instantaneously evolves into a meromorphic function which stays meromorphic
for all times. Note that since the delta function is in $H^{-3/4}$ the
general theory \cite{Tao06} guarantees that this problem is globally WP.
This makes one believe that IST techniques should work in the general
setting of $H^{-3/4}$ data\footnote{%
Better yet, the recent paper \cite{KapPerryTopalov2005} says that the KdV
equation is WP for a certain subset of $H^{-1}$ functions. More
specifically, the image of $L^{2}$ under the Miura tranform.}. However, no
rigorous IST method, to the best to our knowledge, is developed in this case
in full generality. Comparing to the classical IST, there are new
circumstances like infinite negative spectrum and a rich singular embedded
positive spectrum in this setting. The importance of this problem was
recognized back in the 80s and the first essential progress was made in \cite%
{KheNov84} where the IST was extended to certain Wigner-von Neumann initial
profiles producing finitely many simple real zeros of the transmission
coefficient. This topic has been continued in numerous publications (see,
e.g. \cite{Matveev02}, \cite{Kovalyov05} and the literature cited therein).
The main issue here is the appearance of the so-called positon solutions.
These solutions are in $H^{-2}$ but exhibit Wigner-von Neumann decay at
infinity. The Schrodinger operator with a $H^{-2}$ potential can be defined
in a few nonequivalent ways resulting in, at least, two nonequivalent IST
formalisms (see \cite{Kurasov99} and the relevant discussions therein).

The situation is much better in the context of initial data with periodic
type behavior at $+\infty $ (see recent \cite{Teschl2} and a very detailed
account of the literature therein). As discussed above, suitable analogs of
the IST are available in this setting which allow to push WP results to the
periodic $H^{-1}$ space. However the KdV flow preserves the smoothness of a
periodic initial profile confirming that the data on $\left( a,\infty
\right) $ works against the gain of regularity effect\footnote{%
We don't know how the interference between the data $\left( -\infty
,a\right) $ and $\left( a,\infty \right) $ affects regularity either.}.

We however cautiously conjecture that the Hirota $\tau $-representation (\ref%
{Det}) holds in one form or another for any well-posed KdV problem. This
would mean that KdV equation is completely (and explicitly) integrable for a
large scope of problems. The arguments used in the proof of Theorem \ref%
{MainThm} no longer apply to profiles with no decay at $+\infty $\footnote{%
Due to Remark \ref{+ infty}, initial data approaching a constant at $+\infty 
$ is not of interest to us.}. It would be extremely interesting to
understand this situation even in the case of the data supported on $\left(
0,\infty \right) $.

\section*{Acknowledgement}

The author is thankful to Eugene Korotyaev for valuable discussions related
to the periodic KdV and Odile Bastille for careful proof reading of the
manuscript.

\end{document}